\setlist{nosep} 
\newlist{mylist}{enumerate*}{1}
\setlist[mylist]{label=(\roman*)}   
\numberwithin{equation}{section}
\newtheorem{defn}{Definition}[section]
\newtheorem{theorem}{Theorem}[section]
\newtheorem{proposition}[theorem]{Proposition}
\newtheorem{corollary}[theorem]{Corollary}
\newtheorem{remark}{Remark}
\newcommand*{\rom}[1]{\expandafter\@slowromancap\romannumeral #1@}
\newcommand{\cond}{\;\middle\vert\;} 
\newcommand{\EE}{\mathds{E}}  
\newcommand{\eF}{\mathcal{F}}
\newcommand{\Ind}{\mathds{1}} 
\newcommand{\eL}{\mathcal{L}}
\newcommand{\Oh}{\mathcal{O}} 
\newcommand{\PP}{\mathds{P}}  
\newcommand{\QQ}{\mathds{Q}}  
\newcommand{\RR}{\mathds{R}}  
\newcommand{\NN}{\mathds{N}}
\newcommand{\DD}{\mathds{D}}
\newcommand{\Var}{\mathds{V}\kern-2pt\text{ar}} 
\newcommand{\Cov}{\mathds{C}\kern-2pt\text{ov}} 
\newcommand{\XX}{\mathds{X}}  
\let\oldsqrt\sqrt
\def\sqrt{\mathpalette\APsqrt}
\def\APsqrt#1#2{%
\setbox0=\hbox{$#1\oldsqrt{#2}$}\dimen0=\ht0%
\advance\dimen0-0.2\ht0%
\setbox2=\hbox{\vrule height\ht0 depth -\dimen0}%
{\box0\lower0.48pt\box2}} 
\begin{document}
{\singlespacing
\title{\textbf{{\Large Near-Optimal Dynamic Asset Allocation in Financial Markets with Trading Constraints}}\footnote{We are very grateful to the participants at the 23rd International Congress on Insurance: Mathematics and Economics, the Netspar Pension Day, the Vienna Congress on Mathematical Finance, and the Workshop `Fair Valuation' in Insurance for their helpful comments and suggestions.}}
\author{
    Thijs Kamma\footnote{Corresponding author. Mailing address: t.kamma@maastrichtuniversity.nl. PO Box 616, 6200 MD Maastricht, The Netherlands. Phone: +31 (0) 43 388 3695.}\\{\small Dept. of Quantitative Economics}\\
    {\small Maastricht University }\\
    {\small NETSPAR }
    \and
    Antoon Pelsser\\
    {\small Dept. of Quantitative Economics}\\
    {\small Maastricht University}\\
    {\small  NETSPAR }
    \and
}
\date{  \today}
\maketitle
\vspace{-1cm}
\begin{abstract}
{\footnotesize \color{black}We develop a dual-control method for approximating investment strategies in incomplete environments that emerge from the presence of trading constraints. Convex duality enables the approximate technology to generate lower and upper bounds on the optimal value function. The mechanism rests on closed-form expressions pertaining to the portfolio composition, from which we are able to derive the near-optimal asset allocation explicitly. In a real financial market, we illustrate the accuracy of our approximate method on a dual CRRA utility function that characterises the preferences of a finite-horizon investor. Negligible duality gaps and insignificant annual welfare losses substantiate accuracy of the technique.}
\end{abstract}
\noindent
{\footnotesize {\bf Keywords:} Convex duality, incomplete markets, life-cycle investment,  Malliavin calculus, state-dependent utility, stochastic optimal control}\\
{\footnotesize {\bf JEL Classification:} D52, D53, G11}
\maketitle
}
%

\section{Introduction}
\textsc{The impossibility to acquire closed-form solutions} to a myriad of portfolio choice problems advocates the design of approximate methods. This failure to arrive at explicit expressions depends on the agent's preference qualifications and technical assumptions on the subject of undiversifiability in the financial environment, cf. \citet{kim1996dynamic}, \citet{wachter2002portfolio} and \citet{liu2006portfolio}. To avoid this inability, we propose a dual-control technique for approximating the dynamic asset allocation in markets that embed general trading constraints and an expansive set of preference conditions. The technology that we delineate in this article is tractable, since it is based in closed-form expressions essential to the portfolio composition that we collect through Malliavin calculus. We assess the accuracy of our approximating mechanism by means of a utility loss criterion, which emanates as a result of the available duality bounds. Convex duality in relation to artificial markets namely enables us to provide lower and upper bounds on the truly optimal value function. Insignificant duality gaps and welfare losses varying between 1 and 5 basis points allude to near-optimality of the method.

Constrained consumption and portfolio choice problems are typically difficult to solve. Therefore, it is conventional to resort to an examination of their corresponding duals. In the influential papers by \citet{karatzas1991martingale}, \citet{cvitanic1992convex}, and \citet {xu1992duality} the authors show that these primal and dual problems reconcile through a barrier cone which relates the choice of, respectively, a controlled allocation of assets to the market prices of risk. In fact, the dual problem induces a so-called fictitious economic environment, wherein the objective concerns minimisation over the shadow prices of non-traded or partially traded uncertainty. Adequately minimising the dual recovers the true market, couples the set of shadow prices via the cone to feasible and optimal decision rules, and ensures persistence of strong duality.\footnote{See Proposition 11.4 in \citet{karatzas1991martingale} or Proposition 12.1 in \citet{cvitanic1992convex}.} Despite that, the resultant shadow prices ordinarily  outline backward-forward equations that do not permit analytical recovery of their solutions, making the call for approximation rampant.

In view of the notion that the inexistence of closed-form solutions to the optimal trading rules and consumption behaviour stems entirely from the backward-forward equations identifying the shadow prices, we confine the set of attainable dual controls to a tractable parametric family: deterministically constant or affine in the uncertainty, to name a few. Under this stricture regarding the set to which the shadow prices ought to conform, we are able to derive definite expressions for the controls that minimise the dual. Thereon, the analytical dual value function prompts an upper bound on the true one, on account of strong duality. The magnitude of the bound's deviation from the optimal value depends on the quality of the approximation inherent in the restriction of the set of feasible dual controls. In parallel, the approximate sub-optimal shadow prices give rise via the barrier cone to continuous consumption streams and investment behaviour that do not by definition comply with the trading constraints in the baseline financial market. 

To obtain portfolio decisions that are admissible in the true economic environment, we turn to the primal problem specification, and propose a construct for a candidate solution. In particular, the absence of arbitrage empowers us to express the wealth and consumption processes under an equivalent measure as local martingales, see \citet{HarriKreps:1979:Martingalesandarbitrage} and \citet{fHAR81a}. Consequently, by the Martingale Representation Theorem (MRT), we are able to gather the optimal allocation of resources. The Clark-Ocone formula allows us under mild regularity conditions to identify the integrand in the MRT as a previsible projection of a generalised Fr\'{e}chet derivative in the Wiener-direction, cf. \citet{karatzas1991extension}, \citet{ocone1991generalized}, and \citet{di2009malliavin}. Dependent on the delimitation of the set of shadow prices, the consequential allocation concurs up to the market prices with the infeasible one emerging from the dual. We then project under a case-specific metric the portfolio decisions to the admissible region, whereby we retrieve closed-form approximations to the true trading strategies. Numerical maximisation over the therein incorporated shadow prices yields the approximate analytical portfolio and engenders a lower bound on the optimal value function.\footnote{We demonstrate that the numerical routine is redundant: one may insert the approximate closed-form controls implicit in the dual into the fully analytical projected primal controls for identical performance.} 

The discrepancy between the lower and upper bounds arising from the primal and dual problems is the optimality gap and aids in quantifying the precision of the approximation. Concretely, this utilitarian difference translates itself into an annual welfare loss that grows with the level of inaccuracy of the approximation. In order to illustrate the authenticity of the technique, we cast the problem into \cite{brennan2002dynamic}'s incomplete economy (e.g. \citet{battocchio2004optimal}, \cite{sangvinatsos2005does}, \citet{cairns2006stochastic}, \citet{munk2007optimal}, \citet{DeJong:2008:Pensionfundinvestments} and \citet{DeJong:2008:Valuationpensionliabilities} supply resembling economies), which accommodates unspanned inflation risk and inhabits a finite-horizon investor. Herein, we design and introduce the \textit{dual CRRA} preference qualification, which describes the investor's utility. In this way, we enlarge the framework to one that applies to state-dependent preferences. Resulting duality gaps and welfare losses are negligibly small. 

The attributes that differentiate this study from the existing literature on approximate methods, e.g. \citet{cvitanic2003monte}, \citet{Detemple2003} \citet{BraGoySaSt:2005:simulationapproachto}, \citet{keppo2007computational}, and \citet{koijen2009can}, which prominently employ the duality bounds and convex analysis in evaluating these, cf. \citet{haugh2006evaluating}, \citet{brown2010information}, \citet{brown2011dynamic}, \citet{bick2013solving}, and \citet{ma2017dual}, are threefold: \begin{mylist}
  \item the inclusion of state-dependent preferences, \item the applicability in a set of markets that may comprise non-Markovian return dynamics, and \item the incorporation of general trading restrictions.
\end{mylist} Methodologically, \citet{bick2013solving} come closest to our method of approximation. Their study involves a CRRA agent in a Markovian model, who receives labour income. Ignoring this endowment, we extend their mode by making it applicable outside of CRRA stipulations in markets that go beyond Markovian conditions.

The remainder of the paper is structured as follows. Section \ref{sec2} introduces the financial market model. Section \ref{sec3} derives optimality conditions to the constrained investment-consumption problem. Section \ref{sec4} introduces the approximate method. Section \ref{sec5} provides a numerical illustration of the aforementioned technology. Finally, section \ref{sec6} concludes. 

\section{Model Setup}\label{sec2}
This section introduces the economic setting by close analogy with the models in \citet{brennan2002dynamic} and \cite{detemple2009dynamic}. In the spirit of \citet{cvitanic1992convex}, we generalise their economy by including general trading constraints.
\subsection{Financial Market Model}\label{sec2.1}
Define a horizon $T>0$, and consider a probability space $(\Omega,\eF_t,\left\lbrace\eF_t\right\rbrace_{t\in\left[0,T\right]},\PP)$, whereon an $\RR^{d+m}$-valued independent standard Brownian motion, $\left\lbrace W_t\right\rbrace_{t\in\left[0,T\right]}=\left\lbrace W^1_t, W^2_t\right\rbrace_{t\in\left[0,T\right]}$, exists: $W^1_t$ is $\RR^d$-valued. We split $W_t$ to lighten later derivations. The $\PP$-augmentation of $W_t$'s canonical filtration $\left\lbrace\eF_t^{W}\right\rbrace_{t\in\left[0,T\right]}$ reads $\left\lbrace \eF_t\right\rbrace_{t\in\left[0,T\right]}$.\footnote{Identify the Wiener space: $\Omega=\mathcal{C}_0\left(\left[0,T\right];\RR^{d}\right)\times\mathcal{C}_0\left(\left[0,T\right];\RR^{m}\right)$, $\eF=\mathcal{B}_T^d\otimes\mathcal{B}^{m}_T$, $\PP\left(\mathrm{d}W\right)=\PP\left(\mathrm{d}W^1\right)\otimes\PP\left(\mathrm{d}W^2\right)$, cf. \cite{detemple2005closed}; $\eF_t^{W_i}$ is the $\PP$-augmented filtration of $W^i_t$'s, $i=1,2$.} In the sequel, (in)equalities between stochastic processes are understood in a $\PP$-almost sure sense.

The financial environment, say $\mathcal{M}$, distinguishes a real and a nominal market and therefore includes a commodity price index, $\Pi_t$, that harbours both $W^1_t$ and $W^2_t$ as
\begin{equation}
\mathrm{d}\Pi_t=\Pi_t\left[\pi_t\mathrm{d}t+{\xi^{\Pi_1}_t}^{\top}\mathrm{d}W^1_t+{\xi^{\Pi_2}_t}^{\top}\mathrm{d}W^2_t\right],\ \Pi_0=1.
\end{equation}
Here, the $\eF_t$-measurable $\pi_t$ denotes the rate of inflation. Moreover, $\xi^{\Pi_i}_t$ is $\eF^{W_1}_t$-measurable for $i=1,2$. We assume that all three processes are in $\DD^{1,2}\left(\left[0,T\right]\right)$, see \citet{nualart2006malliavin}. Multiplying any asset with the reciprocal of $\Pi_t$ spawns its nominal value. Thereon,
\begin{equation}
\mathrm{d}M_t=M_t\left[-r_t\mathrm{d}t+{\phi^{M_1}_t}^{\top}\mathrm{d}W^1_t+{\phi^{M_2}_t}^{\top}\mathrm{d}W^2_t\right],\ M_0=1
\end{equation}
characterises the real pricing kernel, where $r_t$ defines the $\eF_t$-measurable real interest rate, $\phi^{M_1}_t$ is $\eF^{W_1}_t$-measurable, and $\phi^{M_2}_t$ is $\eF_t$-measurable. Suppose that $L_t$ is a traded process, then $\left\lbrace L_t\Pi_t^{-1}M_t\right\rbrace_{t\in\left[0,T\right]}$ must by construction be a $\PP$-martingale. 

As a consequence, the nominal state price density (SPD) process complies with 
\begin{equation}\label{eq:genkernel}
\mathrm{d}Z_t=Z_t\left[\left(-r_t-\pi_t +\lambda_{1,t}^{\top}{\xi^{\Pi_1}_t}+\lambda_{2,t}^{\top}{\xi^{\Pi_2}_t} \right)\mathrm{d}t-\lambda_{1,t}^{\top}\mathrm{d}W^1_t-\lambda_{2,t}^{\top}\mathrm{d}W^2_t\right],
\end{equation}
in which $\lambda_{i,t}=\xi^{\Pi_i}_t-\phi^{M_i}_t$, $i=1,2$, delineate the nominal market prices of risk, and where we let $Z_t:=\Pi_t^{-1}M_t$. We postulate that $r_t\in\DD^{1,2}\left(\left[0,T\right]\right)$ and $\widehat{\phi}_t\in\DD^{1,2}\left(\left[0,T\right]\right)^{d+m}$ given that $\widehat{\phi}_t:=\left[\phi_{t}^{M_1},\phi_{t}^{M_2}\right]^{\top}$, securing Malliavin-differentiability of $\lambda_{i,t}$, for $i=1,2$. Novikov's condition, cf. Sec 3.5.D in \citet{ks:stc}, therefore holds. From \eqref{eq:genkernel},   
\begin{equation}\label{eq:appcnormalinainster}
R_{f,t}=r_t+\pi_t-\lambda_{1,t}^{\top}{\xi^{\Pi_1}_t}-\lambda_{2,t}^{\top}{\xi^{\Pi_2}_t}
\end{equation}
holds, signifying the identity for the instantaneous nominal interest rate.

The economic model accommodates an instantaneously risk-free asset, i.e. a money market account, and $d+m$ non-dividend paying risky assets that evolve pursuant to
\begin{equation}\label{eq:sec21stockproc}
\frac{\mathrm{d}S_t}{S_t}=\left(R_{f,t}1_{d+m}+\sigma_t^{S_1}\lambda_{1,t}+\sigma_t^{S_2}\lambda_{2,t}\right)\mathrm{d}t+\sigma^{S_1}_t\mathrm{d}W^1_t+\sigma^{S_2}_t\mathrm{d}W^2_t,\ S_0=1_{d+m},
\end{equation}
where $\mathrm{diag}\big(\sigma_t^{S_i}{\sigma_t^{S_i}}^{\top}\big)\in L^1\left(\left[0,T\right]\right)^{d+m}$ for $\eF_t$-measurable volatility matrices $\sigma_t^{S_i}$ that fulfill the strong non-degeneracy assumption $\psi^{\top}\sigma_t^{S_i}{\sigma_t^{S_i}}^{\top}\psi\geq \epsilon_i\|\psi\|_{\RR^{d+m}}$ for all $\psi\in\RR^{d+m}$, some $\epsilon_i\in\RR_+$, $i=1,2$. The foregoing assumption ensures invertibility of $\left[\sigma_t^{S_1},\sigma_t^{S_2}\right]$. Further,
\begin{equation}
\mathrm{d}B_t=R_{f,t}B_t\mathrm{d}t,\ B_0=1,
\end{equation}
details the cash account. This construction relies on those in \citet{brennan2002dynamic} and \citet{detemple2009dynamic}. Modifications of the trading restrictions pertaining to $S_t$ authorise us to control $\left\lbrace W_t\right\rbrace_{t\in\left[0,T\right]}$'s degree of diversifiabability.

The wealth process in agreement with that of a finite-horizon investor obeys 
\begin{equation}\label{eq:genactwealth}
\mathrm{d}X_t=X_t\left[\left(R_{f,t}+x^{\top}_t\widehat{\sigma}_t\widehat{\lambda}_{t}\right)\mathrm{d}t+x_t^{\top}\widehat{\sigma}_t\mathrm{d}W_{t}\right]-c_t\mathrm{d}t, \  X_0\in\RR_+,
\end{equation} 
for an initial endowment $X_0$, where $c_t$ denotes the progressively $\eF_t$-measurable consumption process, and $x_t$ the $\RR^{d+m}$-valued progressively $\eF_t$-measurable fraction of $X_t$ allocated to $S_t$. Moreover, we define $\widehat{\sigma}_t:=\left[\sigma_t^{S_1},\sigma_t^{S_2}\right]$ and $\widehat{\lambda}_t:=\left[\lambda_{1,t},\lambda_{2,t}\right]^{\top}$. Note that the investor determines $\left\lbrace x_t, c_t\right\rbrace_{t\in\left[0,T\right]}$ in order to maximise an expected utility criterion. We call a pair $\left\lbrace x_t,c_t\right\rbrace_{t\in\left[0,T\right]}$ \textit{admissible} if in addition to $X_t\geq 0$ $\forall$ $t\in\left[0,T\right]$, it fulfils 
\begin{equation}\label{eq:sec21assump}
x_t^{\top}\widehat{\sigma}_t{\widehat{\sigma}_t}^{\top}x_t\in L^1\left(\left[0,T\right]\right),\ \left|x^{\top}_t\widehat{\sigma}_t\widehat{\lambda}_{t}\right|\in L^1\left(\left[0,T\right]\right),\ \mathrm{and}\ \ c_t\in L^2\left(\Omega\times\left[0,T\right]\right).
\end{equation}
We denote the class of all such admissible trading-consumption pairs by $\mathcal{A}_{X_0}$.

Ultimately, suppose that $K\subseteq\RR^{d+m}$ describes a non-empty closed and convex set that comprises the investment constraints imposed on $x_t$ in a $\mathrm{d}t\otimes\PP\text{-a.e.}$ sense. Accordingly, let us define $\widehat{\mathcal{A}}_{X_0}$ as the set of all $\left\lbrace x_t, c_t\right\rbrace_{t\in\left[0,T\right]}\in\mathcal{A}_{X_0}$ such that $x_t\in K$ holds $\mathrm{d}t\otimes\PP\text{-a.e.}$ We remark that all the exploited spaces conform to their conventional definitions.\footnote{In fact, $L^p\left(\left[0,T\right]\right)$ is the Lebesgue space of all $\PP\text{-a.s.}$ $p$-integrable stochastic processes, $p\geq 0$.}

\subsection{Fictitious Completion Assets}\label{sec2.2}
Assume that the market in section \ref{sec2.1} is constrained concerning admissibility of $\left\lbrace x_t\right\rbrace_{t\in\left[0,T\right]}$ via $K\subseteq\RR^{d+m}$ such that $\left\lbrace x_t, c_t\right\rbrace_{t\in\left[0,T\right]}\in\widehat{\mathcal{A}}_{X_0}$. Appealing to the results in contributions by \citet{HePears:1991:Consumptionandportfolio}, \cite{cvitanic1992convex}, and \citet{schroder2003optimal} we may complete the baseline market by instituting so-called fictitious assets. 

To that end, we introduce respectively $K$'s support function and the barrier cone:
\begin{equation}\label{eq:sec22suppfuncetc}
\delta\left(\widehat{\nu}_t\right):=\sup_{\left\lbrace x_t\right\rbrace_{t\in\left[0,T\right]}\in\widehat{\mathcal{A}}_{X_0}'}\left(-\widehat{\nu}_t^{\top}x_t\right),\quad\mathrm{and}\quad \widetilde{\mathcal{A}}_{X_0}:=\left\lbrace\widehat{\nu}_t\mid  \delta\left(\widehat{\nu}_t\right)<\infty\right\rbrace
\end{equation}
for all $\eF_t$-measurable endogenous shadow prices $\widehat{\nu}_t\in\DD^{1,2}\left(\left[0,T\right]\right)^{d+m}$, where $\delta:\RR^{d+m}\rightarrow\RR\cup\left\lbrace\infty\right\rbrace$, and $\widehat{\mathcal{A}}_{X_0}'$ represents the set containing all $\left\lbrace x_t\right\rbrace_{t\in\left[0,T\right]}$ that cohere with $\left\lbrace x_t, c_t\right\rbrace_{t\in\left[0,T\right]}\in\widehat{\mathcal{A}}_{X_0}$, i.e. all $\left\lbrace x_t\right\rbrace_{t\in\left[0,T\right]}$ meeting \eqref{eq:sec21assump} for which $x_t\in K$ is true $\mathrm{d}t\otimes\PP\text{-a.e.}$ We postulate that $K$ contains the origin, and additionally define $\mathcal{H}_{\widehat{\mathcal{A}}}:=\big\{ \widehat{\nu}_t\mid \left\Vert\delta\left(\widehat{\nu}_t\right)\right\Vert_{L^2\left(\Omega\times\left[0,T\right]\right)}<\infty\big\}$ given $\widehat{\nu}_t\in\DD^{1,2}\left(\left[0,T\right]\right)^{d+m}$. Notice that $\widehat{\nu}_t\in\mathcal{H}_{\widehat{\mathcal{A}}}$ implies that $\widehat{\nu}_t\in\widetilde{\mathcal{A}}_{X_0}$.

For any $\widehat{\nu}_t\in\mathcal{H}_{\widehat{\mathcal{A}}}$, we complete the asset menu, which constitutes the unconstrained artificial market. For that purpose, introduce the fictitious analogues of $B_t$ and $S_t$:
\begin{equation}\label{eq:sec22s}
\begin{gathered}
\mathrm{d}\widehat{B}_t=\left[R_{f,t}+\delta\left(\widehat{\nu}_t\right)\right]\widehat{B}_t\mathrm{d}t,\ \widehat{B}_0=1,\\
{\mathrm{d}\widehat{S}_t}/{\widehat{S}_t}=\left(\big(R_{f,t}+\delta\left(\widehat{\nu}_t\right)\big)1_{d+m}+\widehat{\sigma}_t\left(\widehat{\lambda}_t+\widehat{\sigma}_t^{-1}\widehat{\nu}_t\right)\right)\mathrm{d}t+\widehat{\sigma}_t\mathrm{d}W_t,\ \widehat{S}_0=1_{d+m}.
\end{gathered}
\end{equation}
In this artificial model, say $\mathcal{M}_{\widehat{\nu}}$, the nominal interest rate reads $R_{f,t}+\delta\left(\widehat{\nu}_t\right)$, and the risk-premium on the $d+m$-dimensional synthetic risky assets $\widehat{S}_t$ changes into $\widehat{\sigma}_t\big(\widehat{\lambda}_t+\widehat{\sigma}_t^{-1}\widehat{\nu}_t\big)$. All assumptions as to the baseline stock $S_t$ in \eqref{eq:sec21stockproc} remain in force here.

Let $\mathcal{E}\left(\cdot\right)$ be the so-called Dol\'{e}ans-Dade exponential. The system of equations for $\widehat{B}_t$ and $\widehat{S}_t$ alludes to an $\mathcal{M}_{\widehat{\nu}}$-specific stochastic deflator process that obeys to
\begin{equation}\label{eq:sec22prickernpert}
Z_T^{\widehat{\nu}}=Z_T\mathcal{E}\left(\left[\widehat{\sigma}_t^{-1}\widehat{\nu}_t\right]^{\top}\right)\exp\left(-\int_0^T\left(\delta\left(\widehat{\nu}_t\right)+\widehat{\lambda}_t^{\top}\widehat{\sigma}_t^{-1}\widehat{\nu}_t\right)\mathrm{d}t\right),
\end{equation}
where we separate the exogenous $Z_T$ from the endogeneity driven by the unspecified $\widehat{\nu}_t$. Clearly, the market prices of risk in $\mathcal{M}_{\widehat{\nu}}$ abide by $\widehat{\lambda}_t+\widehat{\sigma}_t^{-1}\widehat{\nu}_t$. Correspondingly, $Z_T^{\widehat{\nu}}$ separates into $\widehat{B}_T^{-1}{\mathrm{d}\QQ^{\widehat{\nu}}}/{\mathrm{d}\PP}$, where $\QQ^{\widehat{\nu}}\sim\PP$ defines the non-unique pricing measure. For this reason, $\mathrm{d}W^{\QQ^{\widehat{\nu}}}_t=\mathrm{d}W_t+\big(\widehat{\lambda}_t+\widehat{\sigma}_t^{-1}\widehat{\nu}_t\big)\mathrm{d}t$ characterises a $\QQ^{\widehat{\nu}}$-standard Brownian motion. The non-unicity of $\widehat{\nu}_t$ thus signifies its ability to perturb the pricing measure $\QQ^{\widehat{\nu}}$ and the affiliated economic scenarios alongside the stocks' risk premia.\footnote{For $\delta\left(\widehat{\nu}_t\right)\neq 0$ $\mathrm{d}t\otimes\PP\text{-a.e.}$ $\widehat{\nu}_t$ might also affect the $\mathcal{M}_{\widehat{\nu}}$'s interest rate. In the presence of non-traded risk, the latter turns into an equality. Consider for instance \citet{cvitanic1992convex} or \citet{tepla2000optimal}.} 

The dynamic wealth process for finite-horizon investors living in $\mathcal{M}_{\widehat{\nu}}$ ensues as  
\begin{equation}\label{eq:sec21ficdynconstr}
\mathrm{d}X_t^{\widehat{\nu}}=X_t^{\widehat{\nu}}\left[\left(R_{f,t}+\delta\big(\widehat{\nu}_t\big)+\widehat{x}_t^{\top}{\widehat{\sigma}_t}\left(\widehat{\lambda}_t+\widehat{\sigma}_t^{-1}\widehat{\nu}_t\right)\right)\mathrm{d}t+\widehat{x}_t^{\top}\widehat{\sigma}_t\mathrm{d}W_t\right]-\widehat{c}_t\mathrm{d}t
\end{equation}
in which $\widehat{x}_t$ denotes the $\RR^{d+m}$-valued progressively $\eF_t$-measurable proportion of $X_t^{\widehat{\nu}}$ invested in $\widehat{S}_t$, $\widehat{c}_t$ the progressively $\eF_t$-measurable consumption control, and $X_0^{\widehat{\nu}}=X_0\in\RR_+$ the initial endowment. Analogously, the individual selects $\left\lbrace \widehat{x}_t,\widehat{c}_t\right\rbrace_{t\in\left[0,T\right]}$ in an attempt to maximise expected life-time utility. Hence, we invoke the following assumptions 
\begin{equation}\label{eq:sec22assump}
\widehat{x}_t^{\top}\widehat{\sigma}_t{\widehat{\sigma}_t}^{\top}\widehat{x}_t, \left|\widehat{x}^{\top}_t\widehat{\sigma}_t\widehat{\lambda}_{t}\right|\in L^1\left(\left[0,T\right]\right),\ \mathrm{and}\ \ \widehat{c}_t\in L^2\left(\Omega\times\left[0,T\right]\right).
\end{equation}
In the market model $\mathcal{M}_{\widehat{\nu}}$, a pair $\left\lbrace \widehat{x}_t,\widehat{c}_t\right\rbrace_{t\in\left[0,T\right]}$ is admissible if apart from \eqref{eq:sec22assump}, it secures $X_t^{\widehat{\nu}}\geq 0$ $\forall$ $t\in\left[0,T\right]$. Let us denote the class of admissible trading-consumption pairs by $\widehat{\mathcal{A}}^{\widehat{\nu}}_{X_0}$. Admissibility in $\mathcal{M}_{\widehat{\nu}}$ is in general not equivalent to that in the constrained baseline financial market: that is, $\widehat{B}_t$, $\widehat{S}_t$ differ from $B_t$, $S_t$ and $K$ does not act upon $\widehat{\mathcal{A}}^{\widehat{\nu}}_{X_0}$. 

In consideration of the aforementioned disparity, we derive that the discrepancy between artificial and true wealth equals $\mathrm{d}X_t^{\widehat{\nu}}-\mathrm{d}X_t=\left(\delta\left(\widehat{\nu}_t\right)+\widehat{x}_t^{\top}\widehat{\nu}_t\right)\mathrm{d}t\geq 0$ for all $\left\lbrace \widehat{x},\widehat{c}_t\right\rbrace_{t\in\left[0,T\right]}\in\widehat{\mathcal{A}}_{X_0}\subset\widehat{\mathcal{A}}_{X_0}^{\widehat{\nu}}$ such that $X_t^{\widehat{\nu}}\geq X_t$ $\forall$ $t\in\left[0,T\right]$ given that $\widehat{x}_t=x_t$ and $\widehat{c}_t=c_t$.\footnote{We endow the fictitious controls with the hat notation so as to concretise the difference between the baseline and fictitious environments. In \cite{haugh2006evaluating}, $\widehat{x}_t=x_t$ and $\widehat{c}_t=c_t$ is set \textit{a priori}. Let us remark here that $X_t^{\widehat{\nu}}Z_t+\int_0^tc_sZ_s\mathrm{d}s$ is a $\PP$-martingale with respect to $\eF_t$ if, and only if, $\delta\left(\widehat{\nu}_t\right)+\widehat{x}_t^{\top}\widehat{\nu}_t=0$.} Intuitively, under these premises, $\mathcal{M}_{\widehat{\nu}}$ reduces to $\mathcal{M}$ given the next conditions
\begin{equation}\label{eq:sec22assumd}
\left\lbrace\widehat{x}_t\right\rbrace_{t\in\left[0,T\right]}\in\widehat{\mathcal{A}}_{X_0}',\quad\mathrm{and}\quad \delta\left(\widehat{\nu}_t\right)+\widehat{x}_t^{\top}\widehat{\nu}_t=0,
\end{equation}
and contemporaneously leads to optimality in the baseline environment, see Prop. 8.3 in \citet{cvitanic1992convex}. Procuring $\mathcal{M}$ from $\mathcal{M}_{\widehat{\nu}}$ accords with choosing $\widehat{\nu}_t\in\mathcal{H}_{\widehat{\mathcal{A}}}$ in an effort to realise \eqref{eq:sec22assumd}. In the sequel, we mathematically clarify this construction.

\subsection{Finite-horizon Investor Preferences}\label{sec2.3}
The baseline environment $\mathcal{M}$ entertains a finite-horizon investor who disposes of $X_0\in\RR_+$ at $t=0$ and who retires at $t=T$. All through $\left[0,T\right]$, the agent seeks to maximise expected lifetime utility from consumption and expected utility from terminal wealth by holding a continuously rebalanced portfolio. As a result, the individual  solves
\begin{equation}\label{eq:gendynprob}
\begin{aligned}
\sup_{\left\lbrace{x}_t,c_t\right\rbrace_{t\in\left[0,T\right]}\in\widehat{\mathcal{A}}_{X_0}} \ &\EE\left[\int_0^Te^{-\int_0^t\beta_{s}\mathrm{d}s}u\left(c_t,\Pi_t\right)\mathrm{d}t+U\left(X_T,\Pi_T\right)\right]\\ \mathrm{s.t.}\quad &\mathrm{d}X_t=-c_t\mathrm{d}t+X_t\left[\left(R_{f,t}+{x}^{\top}_t\widehat{\sigma}_t\widehat{\lambda}_{t}\right)\mathrm{d}t+{x}_t^{\top}\widehat{\sigma}_t\mathrm{d}{W}_t\right],
\end{aligned}
\end{equation}
for a deterministic process $\left\lbrace\beta_t\right\rbrace_{t\in\left[0,T\right]}$ denoting the investor's time preference, as well as two utility functions $u:\RR_+\times\RR_+\rightarrow\RR_+$ and $U:\RR_+\times\RR_+\rightarrow\RR_+$ both of which incorporate the exogenous commodity price process $\left\lbrace\Pi_t\right\rbrace_{t\in\left[0,T\right]}$ as a benchmark.

Consider $U$. We assume that $U$ is once continuously differentiable, $U\in \mathcal{C}^{1}\left(\RR_+;\RR\right)$, in the $X_T$-direction. It also satisfies the subsequent limiting Inada's conditions
\begin{equation}
\lim_{x\rightarrow \infty}U'_X\left(x,\cdot\right)= 0,\ \ \mathrm{and} \ \ \lim_{x\rightarrow 0}U'_X\left(x,\cdot\right)=\infty,
\end{equation}
wherein we let $U'_X$ and $U''_{XX}$ denote respectively the first and second derivatives of $U$ in the $X_T$-direction. In particular, we postulate $U'_X>0$ and $U''_{XX}<0$. Additionally,
\begin{equation}
\mathrm{AE}\left(U\right)=\limsup_{x\rightarrow\infty}x\frac{U'_X\left(x,y\right)}{U\left(x,y\right)}<1,
\end{equation} 
outlines for any $y\in\RR_+$ the reasonable asymptotic elasticity condition in \citet{kramkov1999asymptotic}. We impose this condition to hold in order to ensure existence of optimal solutions, feasibility of duality modes and finiteness of expectations. 

The inverse of $U'_{X}$ in the $X_T$-direction tallies with $I:\RR_+\times\RR_+\rightarrow\RR$ such that $U'_X\left(I\left(x,y\right),y\right)=x$ for all $y\in\RR_+$. We mandate that $I$ and $U'$ are piecewise continuously differentiable, $I,U'\in\mathcal{PC}\left(\RR_+;\RR\right)$, in both directions of $X_T$ and $\Pi_T$.\footnote{This description is in the mould of \cite{detemple1991asset}, whose generality we enlarge by relying on \citet{lakner2006portfolio} to retain that $I\left(F,G\right)\in\DD^{1,2}$ for $F,G\in\DD^{1,2}$, $I\in\mathcal{PC}\left(\RR_+;\RR\right)$.} Then, define 
\begin{equation}
V\left(x,y\right)=\sup_{z>0}\left(U\left(z,y\right)-xz\right)=U\left(I\left(x,y\right),y\right)-xI\left(x,y\right)
\end{equation}
for $x,y\in\RR_+$ as the convex conjugate of $U$, which is likewise in $\mathcal{PC}\left(\RR_+;\RR\right)$. Ultimately, we suppose that $U'_Y<0$, where $U'_Y$ is $U$'s first derivative in the $\Pi_T$-direction.

Regarding $u$, we enforce the same assumptions as on $U$. Moreover, $u'_X$ and $u''_{XX}$ denote the first and second derivatives of $u$ in the first argument; $u'_Y$ is $u$'s first derivative in the $\Pi_t$-direction. Let $\widetilde{I}:\RR_+\times\RR_+\rightarrow\RR$ be the inverse of $u'_X$ in the $c_t$-direction. Hence,
\begin{equation}
v\left(x,y\right)=e^{-\int_0^t\beta_{s}}u\big(\widetilde{I}\big(e^{\int_0^t\beta_{s}}x,y\big),y\big)-x\widetilde{I}\big(e^{\int_0^t\beta_{s}}x,y\big)
\end{equation}
demarcates the convex conjugate of $e^{-\int_0^t\beta_{s}}u$, resulting from $\sup_{z\in\RR_+}\big(e^{-\int_0^t\beta_{s}}u\left(z,y\right)-xz\big)$, for $x,y\in\RR_+$. We explicitly embed the price index into both utility specifications in the light of motives that relate to real wealth, cf. \citet{brennan2002dynamic}. Note that $\Pi_t$ is sufficiently general; applications including $\RR_+$-valued semi-martingales are trivial. 

\section{Portfolio Choice}\label{sec3}
We advance by probing \eqref{eq:gendynprob}. Here, we classify the problem into two cooperating ones according to the unconstrained market and the entrenched true economy therein. In order to envision the implications of convex duality, we first solve the portfolio choice problem in this unconstrained model. Afterwards, we activate the trading restrictions and apply convex duality to recover the portfolio composition in its constrained counterpart. We finalise by clarifying the ramifications of the duality concepts for non-traded risk. 

\subsection{Allocation in Unconstrained Market}\label{sec3.1}
We first explore \eqref{eq:gendynprob} in the unconstrained market $\mathcal{M}$. Drawing on the martingale method, cf. \citet{pliska1986stochastic}, \citet{karatzas1987optimal}, and \citet{CoxHuang:1989:Optimalconsumptionand,cox1991variational}, we transform the dynamic problem in \eqref{eq:gendynprob} into a static variational formulation as specified by
\begin{equation}\label{eq:sec5finalstat}
\begin{aligned}
\sup_{\left(X_T,c_t\right)\in \widehat{L^2}} &\EE\left[\int_0^T\frac{u\left(c_t,\Pi_t\right)}{e^{\int_0^t\beta_{s}}\mathrm{d}s}\mathrm{d}t+U\left(X_T,\Pi_T\right)\right]\ \  \mathrm{s.t.} \  &\EE\left[\int_0^Tc_tZ_t\mathrm{d}_t+X_TZ_T\right]\leq X_0,
\end{aligned}
\end{equation}
which is identical to the aforementioned dynamic stochastic optimal control problem, where $\widehat{L^2}:=L^2\left(\Omega\right)\times L^2\left(\Omega\times\left[0,T\right]\right)$ and $\left\lbrace Z_t\right\rbrace_{t\in\left[0,T\right]}$ lives by the nominal SPD in \eqref{eq:genkernel}.

In this reformulated static problem, the agent maximises expected utility over all {attainable} or equitably {admissible} contingent claims that involve a continuous stream of coupon payments. Hereinafter, arbitrage arguments inclusive of the self-financing condition implicit in equation \eqref{eq:genactwealth} facilitate us to recover $\left\lbrace x_t\right\rbrace_{t\in\left[0,T\right]}$, which hedges the investor against adverse shifts in the `underlyings' of $X_T$ and $\left\lbrace c_t\right\rbrace_{t\in\left[0,T\right]}$. In particular, denote by $\mathcal{D}^{W}_t$ the Malliavin derivative in the $W_t$-direction, then Theorem \ref{clarkocone} states that
\begin{equation}\label{eq:sec31coform}
X_TZ_T+\int_0^Tc_tZ_t\mathrm{d}t=X_0+\int_0^T\EE\left[\mathcal{D}^W_t\left(X_TZ_T+\int_0^Tc_sZ_s\mathrm{d}s\right)\cond\eF_t\right]^{\top}\mathrm{d}W_t.
\end{equation}

Further, the budget constraint in \eqref{eq:sec5finalstat} ensures that $\left\lbrace X_t\right\rbrace_{t\in\left[0,T\right]}$ remains self-financing. The martingale modes that we exploit in solving \eqref{eq:sec5finalstat} allow us to circumvent the typical abridgement of the generality due to the assumption of a Markovian return structure in the financial model.\footnote{Replicating arguments namely consolidate through the MRT and Clark-Ocone formula that do not enjoin Markovian dynamics. Moreover, $U$ and $u$ satisfy minimal conditions and are state-dependent due to the incorporated $\{\Pi_t\}_{t\in\left[0,T\right]}$. Both aspects considerably enlarge the generality in \citet{bick2013solving}.} By virtue of the static nature of the problem in \eqref{eq:sec5finalstat}, we are then able to resort to ordinary Lagrangian technology on infinite-dimensional Banach spaces to rescue optimal horizon wealth, consumption patterns, and the harmonious portfolio decomposition. Theorem \ref{thm1} embraces the ensuing optimal solutions to \eqref{eq:sec5finalstat}.
\begin{theorem}\label{thm1}
Consider the portfolio choice problem \eqref{eq:sec5finalstat} in the unconstrained market, for an investor with wealth dynamics \eqref{eq:genactwealth}. Then, optimal $c_t$ and $X_T$ materialise into 
\begin{equation}\label{eq:genfictoptwealth}
c_t^{\mathrm{opt}}=\widetilde{I}\left(e^{\int_0^t\beta_{s}}\mathcal{H}^{-1}\left(X_0\right)Z_t,\Pi_t\right),\quad\mathrm{and}\quad X_T^{\mathrm{opt}}=I\left(\mathcal{H}^{-1}\left(X_0\right)Z_T,\Pi_T\right),
\end{equation}
where $X_t^{\mathrm{opt}}=\EE\big[X_TZ_{t,T}+\int_t^Tc_sZ_{t,s}\mathrm{d}s \ \big| \ \eF_t\big]$ and $Z_{t,T}:=Z_t^{-1}Z_T$, for $t\in\left[0,T\right]$. Additionally, $\mathcal{H}^{-1}\left(X_0\right)$ characterises the multiplier, given the decreasing function $\mathcal{H}:\RR_+\rightarrow\RR_+$
\begin{equation}\label{eq:thm1lagrmult1}
\mathcal{H}\left(\eta\right)=\EE\left[\int_0^T\widetilde{I}\left(\eta Z_t,\Pi_t\right)Z_t\mathrm{d}t+I\left(\eta Z_T,\Pi_T\right)Z_T\right]=X_0, \ \eta\in\RR_+
\end{equation}
in which $X_0\in\RR_+$ represents the fixed initial endowment: the budget constraint binds. The congruent optimal portfolio rules originate from the following identity
\begin{equation}\label{eq:thm1optrules}
\begin{aligned}
x_t^{\mathrm{opt}}&=\frac{{\widehat{\sigma}_t^{\top^{-1}}}}{X_t^{\mathrm{opt}}}\Bigg(\widetilde{x}_t^{m,\mathrm{opt}}\widehat{\lambda}_t+\EE\left[Z_t^{-1}\left(\int_t^T{\mathcal{R}^1}_{c,u}^{-1}{\mathcal{G}^1}_{t,u}\mathrm{d}u+{\mathcal{R}^1}_X^{-1}{\mathcal{G}^1}_{t,T}\right)\cond\eF_t\right]\\&+\EE\left[\int_t^T\left(1-{\mathcal{R}^2}_{c,u}^{-1}\right){\mathcal{G}^2}_{t,u}{c}_{u}^{\mathrm{opt}}Z_{t,u}\mathrm{d}u+\left(1-{\mathcal{R}^2}_X^{-1}\right){\mathcal{G}^2}_{t,T}X_T^{\mathrm{opt}}Z_{t,T}\cond\eF_t\right]\Bigg),
\end{aligned}
\end{equation}
which is extricable by virtue of the MRT and uniqueness of the integrand in \eqref{eq:sec31coform}.
\end{theorem}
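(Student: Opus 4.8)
The plan is to dualise the static budget-constrained problem \eqref{eq:sec5finalstat}, solve it state by state, and then convert the resulting contingent claims into a dynamic trading rule through the Clark--Ocone representation. First I would attach a Lagrange multiplier $\eta\in\RR_+$ to the budget constraint and form the Lagrangian $\EE[\int_0^T e^{-\int_0^t\beta_s\mathrm{d}s}u(c_t,\Pi_t)\,\mathrm{d}t+U(X_T,\Pi_T)]-\eta(\EE[\int_0^T c_tZ_t\,\mathrm{d}t+X_TZ_T]-X_0)$. Because the integrand is additively separable in $(t,\omega)$ and strictly concave in each of $c_t$ and $X_T$ (recall $u''_{XX}<0$ and $U''_{XX}<0$), the supremum may be taken pointwise, $\mathrm{d}t\otimes\PP\text{-a.e.}$ The resulting first-order conditions, $e^{-\int_0^t\beta_s\mathrm{d}s}u'_X(c_t,\Pi_t)=\eta Z_t$ and $U'_X(X_T,\Pi_T)=\eta Z_T$, invert through $\widetilde I$ and $I$ to the claimed expressions in \eqref{eq:genfictoptwealth}. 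The Inada conditions deliver interior maximisers and force the budget constraint to bind, while the reasonable asymptotic elasticity hypothesis secures finiteness of the value and existence of the optimiser in $\widehat{L^2}$.

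Next I would pin down the multiplier. Inserting the optimal $c_t^{\mathrm{opt}}$ and $X_T^{\mathrm{opt}}$ into the now-binding constraint reproduces exactly $\mathcal{H}(\eta)=X_0$ as in \eqref{eq:thm1lagrmult1}; since $I$ and $\widetilde I$ are strictly decreasing in their first argument, $\mathcal{H}$ is continuous and strictly decreasing, hence a bijection on $\RR_+$, so $\eta=\mathcal{H}^{-1}(X_0)$ is well defined. For the wealth level, I would use that under the optimal policy $G_t:=X_t^{\mathrm{opt}}Z_t+\int_0^t c_s^{\mathrm{opt}}Z_s\,\mathrm{d}s$ is a $\PP$-martingale with respect to $\{\eF_t\}$; conditioning on $\eF_t$ and dividing by $Z_t$ returns $X_t^{\mathrm{opt}}=\EE[X_TZ_{t,T}+\int_t^Tc_sZ_{t,s}\,\mathrm{d}s\mid\eF_t]$.

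The heart of the proof is the recovery of $x_t^{\mathrm{opt}}$, which I would obtain by computing the diffusion coefficient of the martingale $G_t$ in two ways and equating them. Applying It\^o's product rule to $X_tZ_t$ with the wealth dynamics \eqref{eq:genactwealth} and the kernel dynamics \eqref{eq:genkernel}---whose drift is $-R_{f,t}$ by \eqref{eq:appcnormalinainster}---shows that the $\mathrm{d}W_t$-coefficient of $G_t$ equals $X_t^{\mathrm{opt}}Z_t(\widehat{\sigma}_t^{\top}x_t^{\mathrm{opt}}-\widehat{\lambda}_t)$, the finite-variation consumption stream contributing nothing to the martingale part. The Clark--Ocone formula \eqref{eq:sec31coform} identifies that same coefficient as the previsible projection $\EE[\mathcal{D}^W_tG_T\mid\eF_t]$. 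Equating the two and inverting $\widehat{\sigma}_t^{\top}$ (admissible by strong non-degeneracy) isolates $x_t^{\mathrm{opt}}=\widehat{\sigma}_t^{\top^{-1}}(\widehat{\lambda}_t+(X_t^{\mathrm{opt}}Z_t)^{-1}\EE[\mathcal{D}^W_tG_T\mid\eF_t])$, from which the stated decomposition emerges once $\mathcal{D}^W_tG_T$ is made explicit.

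The main obstacle, and where the real labour lies, is evaluating $\mathcal{D}^W_tG_T=\mathcal{D}^W_t(I(\eta Z_T,\Pi_T)Z_T)+\int_t^T\mathcal{D}^W_t(\widetilde I(e^{\int_0^s\beta_u\mathrm{d}u}\eta Z_s,\Pi_s)Z_s)\,\mathrm{d}s$. The Malliavin chain rule propagates the derivative through both arguments of $I$ and $\widetilde I$, producing the sensitivity of the deflator, $\mathcal{D}^W_tZ_\cdot$---itself a path integral of $\mathcal{D}^W_tr_s$, $\mathcal{D}^W_t\pi_s$ and $\mathcal{D}^W_t\widehat{\lambda}_s$, which is precisely where the $\DD^{1,2}$ hypotheses become indispensable---together with the sensitivity of the benchmark, $\mathcal{D}^W_t\Pi_\cdot$. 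The diagonal pieces of $\mathcal{D}^W_tZ_\cdot$ contribute $-\widehat{\lambda}_t$ terms weighted by the elasticities ${\mathcal{R}^1}_{c,u}^{-1}$ and ${\mathcal{R}^1}_X^{-1}$ of $\widetilde I$ and $I$; combined with the explicit $\widehat{\lambda}_t$ and simplified through the martingale identity $X_t^{\mathrm{opt}}Z_t=\EE[X_T^{\mathrm{opt}}Z_T+\int_t^Tc_u^{\mathrm{opt}}Z_u\,\mathrm{d}u\mid\eF_t]$, they collapse to the risk-tolerance-weighted myopic term $\widetilde x_t^{m,\mathrm{opt}}\widehat{\lambda}_t$. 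The remaining off-diagonal $\mathcal{D}^W_tZ_\cdot$ pieces furnish the intertemporal hedging demand ${\mathcal{G}^1}_{t,\cdot}$, and the $\mathcal{D}^W_t\Pi_\cdot$ pieces the inflation-hedging demand ${\mathcal{G}^2}_{t,\cdot}$, weighted respectively by ${\mathcal{R}^1}^{-1}$ and $1-{\mathcal{R}^2}^{-1}$, yielding the two bracketed sums of \eqref{eq:thm1optrules}. The care required is essentially bookkeeping of the cross-terms and justifying the interchange of $\mathcal{D}^W_t$ with the time integral and the conditional expectation, both underwritten by the assumed Malliavin differentiability and integrability.
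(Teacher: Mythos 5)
Your proposal follows essentially the same route as the paper: Lagrangian treatment of the static problem \eqref{eq:sec5finalstat} yielding the first-order conditions $U'_X(X_T,\Pi_T)=\eta Z_T$ and $e^{-\int_0^t\beta_s\mathrm{d}s}u'_X(c_t,\Pi_t)=\eta Z_t$, the binding budget constraint pinning down $\eta=\mathcal{H}^{-1}(X_0)$, and then recovery of $x_t^{\mathrm{opt}}$ by equating the It\^{o} diffusion coefficient of the martingale $X_t^{\mathrm{opt}}Z_t+\int_0^tc_s^{\mathrm{opt}}Z_s\mathrm{d}s$ with the Clark--Ocone integrand. Your pointwise (state-by-state) maximisation argument is a slightly more elementary variant of the paper's Fr\'{e}chet-derivative computation on $L^2(\Omega)\times L^2(\Omega\times[0,T])$ combined with its conjugate-function verification \eqref{eq:sec31thm1ineqsyst}; both deliver the same conditions, so this difference is stylistic rather than substantive.

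There is, however, one concrete error in your final bookkeeping: you have swapped the two hedging blocks. In the paper's notation, $\mathcal{G}^1_{t,u}$ is the \emph{price-index} sensitivity, $\mathcal{D}^W_t\log\Pi_u$, built from $\mathcal{D}^W_t\pi_s$ and $\mathcal{D}^W_t\widehat{\xi}_s$, and it carries the weight ${\mathcal{R}^1}^{-1}$ (the $I'_Y$-elasticity); whereas $\mathcal{G}^2_{t,u}$ is the \emph{deflator} sensitivity, $\mathcal{D}^W_t\log Z_u+\widehat{\lambda}_t$, built from $\mathcal{D}^W_tR_{f,s}$ and $\mathcal{D}^W_t\widehat{\lambda}_s$, carrying the weight $1-{\mathcal{R}^2}^{-1}$. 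You attribute $\mathcal{G}^1$ to the $\mathcal{D}^W_tZ$ pieces and $\mathcal{G}^2$ to the $\mathcal{D}^W_t\Pi$ pieces, and likewise weight the diagonal $-\widehat{\lambda}_t$ terms by ${\mathcal{R}^1}^{-1}$ rather than by $1-{\mathcal{R}^2}^{-1}$. Carried out literally, this attaches the risk-aversion weights to the wrong demands and would not reproduce \eqref{eq:thm1optrules}. The chain rule settles the matter: since $I'_X(\eta Z_T,\Pi_T)\,\eta Z_T=U'_X/U''_{XX}=-{\mathcal{R}^2}_X^{-1}X_T^{\mathrm{opt}}$ and $I'_Y(\eta Z_T,\Pi_T)\Pi_TZ_T={\mathcal{R}^1}_X^{-1}$ (recall $M_T=\Pi_TZ_T$), one obtains
\begin{equation*}
\mathcal{D}^W_t\bigl(I\left(\eta Z_T,\Pi_T\right)Z_T\bigr)=X_T^{\mathrm{opt}}Z_T\left(1-{\mathcal{R}^2}_X^{-1}\right)\left(\mathcal{G}^2_{t,T}-\widehat{\lambda}_t\right)+{\mathcal{R}^1}_X^{-1}\mathcal{G}^1_{t,T},
\end{equation*}
with the analogous identity for the consumption terms; the $-\widehat{\lambda}_t(1-{\mathcal{R}^2}^{-1})$ pieces then combine with the explicit $\widehat{\lambda}_tX_t^{\mathrm{opt}}Z_t$ from the It\^{o} side, via $X_t^{\mathrm{opt}}Z_t=\EE\bigl[X_T^{\mathrm{opt}}Z_T+\int_t^Tc_u^{\mathrm{opt}}Z_u\mathrm{d}u\cond\eF_t\bigr]$, to produce $Z_t\widetilde{x}_t^{m,\mathrm{opt}}\widehat{\lambda}_t$, exactly as in the theorem. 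With that correction your derivation coincides with the paper's.
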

\begin{proof}
Define $\eL\big(X_T,\left\lbrace c_t\right\rbrace, \eta\big)=\EE\big[\int_0^Te^{-\int_0^t\beta_{s}}u\big(c_t,\Pi_t\big)+U\big({X_T},{\Pi_T}\big)-\eta \big(\int_0^Tc_tZ_t\mathrm{d}t+X_TZ_T^{\widehat{\lambda_2}}-X_0\big)\big]$ as the Lagrangian, for some $\eta\in\RR_+$. Then, $\eL:L^2\left(\Omega\right)\times L^2\left(\Omega\times\left[0,T\right]\right)\times\RR_+\rightarrow\RR$. From here, let $\xi_1\in L^2\left(\Omega\right)$, $\xi_2\in L^2\left(\Omega\times\left[0,T\right]\right)$ along with $\xi_3\in\RR_+$. We derive
\begin{equation}
\begin{aligned}
D_{X_T}\eL\left(X_T,\left\lbrace c_t\right\rbrace,\eta\right)\xi_1&=\Big\langle U'_{X}\left(X_T,\Pi_T\right)-\eta Z_T,\xi_1\Big\rangle_{L^2\left(\Omega\right)}=0,\\
D_{\left\lbrace c_t\right\rbrace}\eL\left(X_T,\left\lbrace c_t\right\rbrace,\eta\right)\left\lbrace \xi_2\right\rbrace&=\left\langle e^{-\int_0^t\beta_{s}}u'_X\left(c_t,\Pi_t\right)-\eta Z_t,\xi_2\right\rangle_{L^2\left(\Omega\times\left[0,T\right]\right)}=0,
\end{aligned}
\end{equation}
and $D_{\eta}\eL\left(X_T,\left\lbrace c_t\right\rbrace,\eta\right)\xi_3=\big\langle X_TZ_T+\int_0^Tc_tZ_t\mathrm{d}t-X_0,\xi_3\big\rangle_{L^2\left(\Omega\right)}=0$, where $D_{X_T}\eL:L^2\left(\Omega\right)\rightarrow\RR$, $D_{\left\lbrace c_t\right\rbrace}\eL:L^2\left(\Omega\times\left[0,T\right]\right)\rightarrow\RR$, and $D_{\eta}\eL:\RR_+\rightarrow\RR$, cf. Definition \ref{frechet}.\footnote{Here, $D_{X_T}\eL$ is bounded linear given that $\| U'_{X}\left(X_T,\Pi_T\right)\|_{L^2\left(\Omega\right)}<\infty$ holds. Any $X_T=I\left(W_T,\Pi_T\right)$ for all $W_T\in L^2\left(\Omega\right)$ satisfies this condition; $X_T^{\mathrm{opt}}\in L^2\left(\Omega\right)$ hence obeys. This carries over to $u'_X\left(c_t,\Pi_t\right)$.} Hence,
\begin{equation}\label{eq:genwealthoptstati}
U'_{X}\left(X_T,\Pi_T\right)-\eta Z_T=0, \quad\mathrm{and}\quad e^{-\int_0^t\beta_{s}}u'_X\left(c_t,\Pi_t\right)-\eta Z_t=0.
\end{equation}

Note that $D_{X_T,X_T}\eL\left\lbrace\xi_1,\xi_1\right\rbrace=\left\langle U''_{XX}\left(X_T,\Pi_T\right),\xi_1^2\right\rangle_{L^2\left(\Omega\right)}<0$ and $D_{\left\lbrace c_t\right\rbrace,\left\lbrace c_t\right\rbrace}\eL\left\lbrace\xi_2,\xi_2\right\rbrace=\left\langle e^{-\int_0^t\beta_{s}}u''_{\left\lbrace c_t\right\rbrace,\left\lbrace c_t\right\rbrace}\left(c_t,\Pi_t\right),\xi_2^2\right\rangle_{L^2\left(\Omega\times\left[0,T\right]\right)}<0$ $\forall$ $\xi\in L^2\left(\Omega\right)$, $\xi_2\in L^2\left(\Omega\times\left[0,T\right]\right)$ or
\begin{equation}\label{eq:sec31thm1ineqsyst}
\begin{aligned}
\EE\left[U\left(X_T,\Pi_T\right)\right]&\leq \EE\left[V\left(\eta Z_T,\Pi_T\right)+\eta Z_TX_T\right]\\&=\EE\left[U\left(I\left(\eta Z_T,\Pi_T\right),\Pi_T\right)-\eta \left(I\left(\eta Z_T,\Pi_T\right)Z_T+X_TZ_T\right)\right]\\&\leq \EE\left[U\left(I\left(\mathcal{H}^{-1}\left(X_0\right) Z_T,\Pi_T\right),\Pi_T\right)\right]=\EE\left[U\left(X_T^{\mathrm{opt}},\Pi_T\right)\right],
\end{aligned}
\end{equation}
and analogously $\EE\left[\int_0^Te^{-\int_0^t\beta_s\mathrm{d}s}u\left(c_t,\Pi_t\right)\right]\leq \EE\left[\int_0^Te^{-\int_0^t\beta_s\mathrm{d}s}u\left(c_t^{\mathrm{opt}},\Pi_t\right)\right]$ $\forall$ $\left(X_T,c_t\right)\in\widehat{L^2}$, which verifies optimality of the pair $\left(X_T^{\mathrm{opt}},c_t^{\mathrm{opt}}\right)$ inherent in \eqref{eq:genwealthoptstati}.

Let us then advance by deriving the optimal portfolio rules. From the MRT, 
\begin{equation}
\left(\left(x_t^{\top}\widehat{\sigma}_t-\widehat{\lambda}_t^{\top}\right)X_t^{\mathrm{opt}}Z_t\right)^{\top}=\EE\left[\mathcal{D}^W_t\left(X_T^{\mathrm{opt}}Z_T+\int_t^Tc_s^{\mathrm{opt}}Z_s\mathrm{d}s\right)\cond\eF_t\right]
\end{equation}
must hold, because $X_T^{\mathrm{opt}}Z_T^{\lambda_2}+\int_0^Tc_t^{\mathrm{opt}}Z_t\mathrm{d}t\in\DD^{1,2}\subset L^2\left(\Omega\right)$ is a $\PP$-martingale. Define
\begin{equation}\label{eq:thm1hedgcoeff}
\begin{aligned}
{\mathcal{G}^{1}}_{t,u}&:=\int_t^u\left(\mathcal{D}^W_t\pi_s-\mathcal{D}^W_t\widehat{\xi}_s\cdot \widehat{\xi}_s\right)\mathrm{d}s+\int_t^u\mathcal{D}^W_t\widehat{\xi}_s\mathrm{d}W_s+\widehat{\xi}_t,\ u\geq t\\ {\mathcal{G}^2}_{t,u}&:=-\int_t^u\mathcal{D}^{W}_t\left(R_{f,s}+\mathcal{D}^W_t\widehat{\lambda}_s\cdot\widehat{\lambda}_s\right)\mathrm{d}s-\int_t^u\mathcal{D}^W_t\widehat{\lambda}_s\mathrm{d}W_s, \ u\geq t,
\end{aligned}
\end{equation}
for $\mathcal{D}^W_tR_{f,s}=\mathcal{D}^W_tr_s+\mathcal{D}^W_t\pi_s-\mathcal{D}^W_t\widehat{\xi}_t^{\top}\widehat{\lambda}_t$, with ${\mathcal{R}^2}_{c,u}^{-1}=-{u'_X\left(c_u^{\mathrm{opt}},\Pi_u\right)}/{u''_{XX}\left(c_u^{\mathrm{opt}},\Pi_u\right)}\frac{1}{c_u^{\mathrm{opt}}}$, ${\mathcal{R}^2}_{X}^{-1}=-{U'_X\left(X_T^{\mathrm{opt}},\Pi_T\right)}/{U''_{XX}\left(X_T^{\mathrm{opt}},\Pi_T\right)}{X_T^{\mathrm{opt}}}^{-1}$. The optimal $x_t^{\mathrm{opt}}$ emerge given  
\begin{equation}
\widetilde{x}^{m,\mathrm{opt}}_t=\EE\left[\left(\int_t^T{\mathcal{R}^2}_{c,u}^{-1}{c}_{u}^{\mathrm{opt}}Z_{t,u}\mathrm{d}u+{\mathcal{R}^2}_X^{-1}X_T^{\mathrm{opt}}Z_{t,T}\right) \cond\eF_t\right]
\end{equation} with ${\mathcal{R}^1}_{c,u}^{-1}=\widetilde{I}'_Y\big(e^{\int_0^u\beta_s\mathrm{d}s}\mathcal{H}^{-1}\left(X_0\right)Z_u,\Pi_u\big){M_u}$ and ${\mathcal{R}^1}_{X}^{-1}=I'_Y\big(\mathcal{H}^{-1}\left(X_0\right)Z_T,\Pi_T\big){M_T}$.
\end{proof}

Considering the unconstrained $\mathcal{M}$ as the true economy, we break down Theorem \ref{thm1}. To erase the semblance of superfluity concerning the formulae in \eqref{eq:genfictoptwealth}, we note that this theorem sheds light on the solution technique embroiling infinite-dimensional optimisation. In applications of convex duality, this technique plays a prominent role. The formulae themselves constitute celebrated identities to which a copious strand of the literature has supplied economically intuitive implications; consider e.g. \citet{karatzas1998methods}. The expression for ${x}^{\mathrm{opt}}_t$ germinates after deriving $\left\lbrace X_t^{\mathrm{opt}},c_t^{\mathrm{opt}}\right\rbrace_{t\in\left[0,T\right]}$ from uniqueness of the integrand in \eqref{eq:sec31coform}, which dictates given $\mathcal{X}_{t,T}^{\mathrm{opt}}:=B_T^{-1}X_T^{\mathrm{opt}}+\int_t^TB_u^{-1}c_u^{\mathrm{opt}}\mathrm{d}u$ that\footnote{We employ the fact that $Z_T=B_T^{-1}\frac{\mathrm{d}\QQ}{\mathrm{d}\PP}$ for a unique equivalent martingale measure $\QQ\sim\PP$, such that $\mathrm{d}W^{\QQ}_t=\mathrm{d}W_t+\widehat{\lambda}_t\mathrm{d}t$ is the stochastic differential equation for a standard Brownian motion under $\QQ$. The displayed equality thus disentangles and clarifies the effect of the pricing measure $\QQ$ on $x_t^{\mathrm{opt}}$.}
\begin{equation}\label{eq:sec32explicdepq}
x_t^{\mathrm{opt}}=\left(B_t^{-1}X_t^{\mathrm{opt}}\widehat{\sigma}_t^{\top}\right)\EE^{\QQ}\left[\mathcal{D}^{W}_t\mathcal{X}_{t,T}^{\mathrm{opt}}-\mathcal{X}_{0,T}^{\mathrm{opt}}\int_t^T\mathcal{D}^W_t\widehat{\lambda}_s\mathrm{d}W^{\QQ}_s\cond\eF_t\right].
\end{equation}

Specifically, ${x}^{\mathrm{opt}}_t$ encases three distinct decisions that unfold themselves by computations along the lines of the decomposition manoeuvres in \citet{detemple2009dynamic}. The separation of these hedging demands submits to ${x}_t^{\mathrm{opt}}={x}_t^{\mathrm{m}}+{x}_t^Z+{x}_t^{\Pi}$, where the first rule concerns the tangency mean-variance efficient portfolio ${x}_t^{\mathrm{m}}={X_t^{\mathrm{opt}}}^{-1}\widetilde{x}_t^{m,\mathrm{opt}}\widehat{\sigma}_t^{\top^{-1}}\widehat{\lambda}_t$. See \citet{Merton:69} for like weights in line with CRRA preferences. The remaining two demands are respectively price index and nominal deflator hedges that answer to
\begin{equation}\label{eq:sec31decomp}
\begin{aligned}
{x}^{\Pi,\mathrm{opt}}&={\widehat{\sigma}_t^{\top^{-1}}}{X_t^{\mathrm{opt}}}^{-1}\EE\left[\int_t^T{\mathcal{R}^1}_{c,u}^{-1}{\mathcal{G}^1}_{t,u}Z_t^{-1}\mathrm{d}u+{\mathcal{R}^1}_X^{-1}{\mathcal{G}^1}_{t,T}Z_t^{-1}\cond\eF_t\right] \\ {x}^{Z,\mathrm{opt}}&=\frac{\widehat{\sigma}_t^{\top^{-1}}}{X_t^{\mathrm{opt}}}\EE\left[\int_t^T\left(1-{\mathcal{R}^2}_{c,u}^{-1}\right){\mathcal{G}^2}_{t,u}\frac{{c}_{u}^{\mathrm{opt}}}{Z_{u,t}}\mathrm{d}u+\left(1-{\mathcal{R}^2}_X^{-1}\right){\mathcal{G}^2}_{t,T}\frac{X_T^{\mathrm{opt}}}{Z_{T,t}}\cond\eF_t\right].
\end{aligned}
\end{equation}
The expressions in \eqref{eq:thm1hedgcoeff} provoke the hedging coefficients within these portfolio weights, whereas the `$\mathcal{R}$' terms are interdependent with quantities that categorise relative risk aversion (RRA). By partitioning the former identities into subgroups that share the same intrinsic character, we could disentangle more specific hedging motivations such as shadow price hedges or interest rate hedges, cf. \cite{detemple2014portfolio} for akin rules. 

As concerns the RRA factors, let us exemplify the matter for a ratio CRRA agent whose preferences obey $U\left(x,y\right)=\left(1-\gamma\right)^{-1}\left(\left(x/y\right)^{1-\gamma}-1\right)$ for $x,y\in\RR_+$, $\gamma>1$. Consequently,
\begin{equation}
-\frac{U'_X\left(X_T^{\mathrm{opt}},\Pi_T\right)}{U''_{XX}\left(X_T^{\mathrm{opt}},\Pi_T\right)}{X_T^{\mathrm{opt}}}^{-1}=\frac{1}{\gamma},\quad\mathrm{and}\quad I_{Y}'\big(\eta^{\mathrm{opt}} Z_T,\Pi_T\big)\frac{M_T}{X_T^{\mathrm{opt}}}=\left(1-\frac{1}{\gamma}\right)Z_T,
\end{equation}
where $\eta^{\mathrm{opt}}:=\mathcal{H}^{-1}\left(X_0\right)$, which infers that ${\mathcal{R}^1}_X^{-1}=\left(1-1/\gamma\right)X_T^{\mathrm{opt}}Z_T$ and ${\mathcal{R}^2}_X^{-1}={1}/{\gamma}$. Presuming that utility from consumption respects an identical qualification, we straightforwardly find that ${\mathcal{R}^1}_{c,u}^{-1}=\left(1-1/\gamma\right)c_u^{\mathrm{opt}}Z_u$ and ${\mathcal{R}^2}_{c,u}^{-1}={1}/{\gamma}$. In both instances, we observe the similarities between the RRA structures, ${\mathcal{R}^1}_X^{-1}=\big(1-{\mathcal{R}^2}_X^{-1}\big)X_T^{\mathrm{opt}}Z_T$ and ${\mathcal{R}^1}_{c,u}^{-1}=\big(1-{\mathcal{R}^2}_{c,u}^{-1}\big)c_u^{\mathrm{opt}}Z_u$. There is no guarantee that this proximity surpasses this example; we generalise it as part of the demand accompanying inflation risk.

Assuming that $U$ and $u$ are such that ${\mathcal{R}^1}_X^{-1}=\big(1-{\mathcal{R}^2}_X^{-1}\big)X_T^{\mathrm{opt}}Z_T$ and ${\mathcal{R}^1}_{c,u}^{-1}=\big(1-{\mathcal{R}^2}_{c,u}^{-1}\big)c_u^{\mathrm{opt}}Z_u$ hold, the optimal portfolio composition $x_t^{\mathrm{opt}}$ in \eqref{eq:thm1optrules} transforms into  
\begin{equation}\label{eq:sec31twofsep}
\begin{aligned}
x_t^{\mathrm{opt}}&=\frac{{\widehat{\sigma}_t^{\top^{-1}}}}{X_t^{\mathrm{opt}}}\Bigg(\EE\left[\left(\int_t^T{\mathcal{R}^2}_{c,u}^{-1}{c}_{u}^{\mathrm{opt}}Z_{t,u}\widehat{\lambda}_t\mathrm{d}u+{\mathcal{R}^2}_X^{-1}X_T^{\mathrm{opt}}Z_{t,T}\widehat{\lambda}_t\right) \cond\eF_t\right]\\&+\EE\left[\int_t^T\left(1-{\mathcal{R}^2}_{c,u}^{-1}\right)\sum_{i=1}^2{\mathcal{G}^i}_{t,u}\frac{{c}_{u}^{\mathrm{opt}}}{Z_{u,t}}\mathrm{d}u+\left(1-{\mathcal{R}^2}_X^{-1}\right)\sum_{i=1}^2{\mathcal{G}^i}_{t,T}\frac{X_T^{\mathrm{opt}}}{Z_{T,t}}\cond\eF_t\right]\Bigg),
\end{aligned}
\end{equation}
wherein the two-fund separation principle for $c_t$'s RRA and $X_T$'s RRA in isolation is preponderantly visible. Namely, we distinguish the two mutual funds on the basis of the the detached RRA fractions: ${\mathcal{R}^1}_{c,u}^{-1}$ and $1-{\mathcal{R}^2}_{c,u}^{-1}$ with respect to $c_t^{\mathrm{opt}}$, ${\mathcal{R}^2}_{X}^{-1}$ and $1-{\mathcal{R}^2}_{X}^{-1}$ regarding $X_T^{\mathrm{opt}}$. The requirements imposed on $u$ and $U$ under which the separation precept holds apply for instance to dual CRRA stipulations. In \citet{brennan2002dynamic}, the authors unravel this phenomenon in detail for a utility-maximizing CRRA investor.\footnote{The concepts underscoring the solutions and techniques in the fictitious $\mathcal{M}_{\widehat{\nu}}$ are of the same kind as those in the unconstrained baseline $\mathcal{M}$. Comprehending the latter is of necessity for the former.}

\subsection{Convex Duality Application}\label{sec3.2}
This section applies convex martingale duality to the dynamic asset allocation problem in $\mathcal{M}$, wherein we constrain $\left\lbrace x_t\right\rbrace_{t\in\left[0,T\right]}$ to $\widehat{\mathcal{A}}_{X_0}'$. We assemble the duality techniques in a style that is to a great extent inspired by \citet{rogers2001duality}, for the sake of merging the primal with the dual. We relegate explicit optimality statements to the next section. 

Now, consider an agent in the baseline economic environment $\mathcal{M}$ from section \ref{sec2.1}, whose wealth dynamics accommodate general investment restrictions through $K$. Therefore,
\begin{equation}\label{eq:gendynprobdfdafdaf}
\begin{aligned}
\sup_{\left\lbrace{x}_t,c_t\right\rbrace_{t\in\left[0,T\right]}\in\widehat{\mathcal{A}}_{X_0}} \ &\EE\left[\int_0^Te^{-\int_0^t\beta_{s}\mathrm{d}s}u\left(c_t,\Pi_t\right)\mathrm{d}t+U\left(X_T,\Pi_T\right)\right]\\ \mathrm{s.t.}\quad &\mathrm{d}X_t=-c_t\mathrm{d}t+X_t\left[\left(R_{f,t}+{x}^{\top}_t\widehat{\sigma}_t\widehat{\lambda}_{t}\right)\mathrm{d}t+{x}_t^{\top}\widehat{\sigma}_t\mathrm{d}{W}_t\right],
\end{aligned}
\end{equation}
for given $X_0\in\RR_+$ details the constrained portfolio choice problem. The martingale method fails to render a static problem tantamount to \eqref{eq:sec5finalstat}, due to the presence of these constraints. Notwithstanding, after applying  duality modes, the resulting dual implies the fictitious market $\mathcal{M}_{\widehat{\nu}}$, in which a static unconstrained problem surfaces.  

To facilitate applications of convex duality, we rework the dynamic budget constraint:
\begin{equation}\label{eq:sec32rewrconstr}
X_TB_T^{-1}=X_0-\int_0^Tc_tB_t^{-1}\mathrm{d}t+\int_0^TX_tB_t^{-1}{x}^{\top}_t\widehat{\sigma}_t\left(\mathrm{d}{W}_t+\widehat{\lambda}_{t}\mathrm{d}t\right)
\end{equation}
The exogenous money market account $B_t$ in this num\'{e}raire-based approach does not influence the optimality conditions, cf. the monograph by \citet{rogers2013optimal}. In the wake of this study, we then introduce a strictly positive semi-martingale process $\left\lbrace Y_t\right\rbrace_{t\in\left[0,T\right]}$, for some $Y_0\in\RR_+$, that serves as a state-wise Lagrange multiplier on the dynamic relative wealth equation \eqref{eq:sec32rewrconstr} in the constrained portfolio problem. For the sake of clarity regarding this procedure, let $\mathrm{d}Y_t=Y_t\big[\alpha_{Y,t}\mathrm{d}t+\widehat{\beta}_t\mathrm{d}W_t\big]$. Then, $\EE[X_TB_T^{-1}Y_T]$ must be equal to
\begin{equation}\label{eq:sec32newconstr}
X_0Y_0+\left\langle B_t^{-1}Y_t,\left({x}^{\top}_t\widehat{\sigma}_t\left(\widehat{\lambda}_{t}+\widehat{\beta}_t\right)+\alpha_{Y,t}\right)X_t-c_t \right\rangle_{L^2\left(\Omega\times\left[0,T\right]\right)},
\end{equation}
in consideration of integration by parts results, where the square-integrable stochastic integral $\int_0^T\big(\widehat{\beta}_t^{\top}+x_t^{\top}\widehat{\sigma}_t\big)\mathrm{d}W_t$ vanishes in expectation. In formulating the Lagrangian consonant with \eqref{eq:gendynprobdfdafdaf}, we may embed the surrogate constraint in \eqref{eq:sec32newconstr}. Complementary slackness (CS) then poses restrictions on the drift and diffusion coefficients of $Y_t$. The dual problem in turn transpires and involves minimisation over $\left\lbrace Y_t\right\rbrace_{t\in\left[0,T\right]}$.\footnote{In the next Theorem \ref{thm2}, the semi-martingale $Y_t$ finds its technically precise definition. Observe that if we pursue along these lines in the unconstrained environment $\mathcal{M}$, we obtain $Y_t=\frac{\mathrm{d}\QQ}{\mathrm{d}\PP}\big|_{\eF_t}\mathcal{H}^{-1}\left(X_0\right)$.} 

Alternatively, we could enforce $Y_T$ on the rewritten dynamic constraint \eqref{eq:sec32rewrconstr} as a Lagrange multiplier in the ordinary fashion. As a result, we arrive in expectation at
\begin{equation}
\left\langle Y_T,\delta\left({x}_t^{\top}\widehat{\sigma}_t\widehat{\lambda}_tX_tB_t^{-1}\right)+X_0\right\rangle_{L^2\left(\Omega\right)}+\left\langle Y_TB_t^{-1},-c_t+{x}_t^{\top}\widehat{\sigma}_t\widehat{\lambda}_tX_t\right\rangle_{L^2\left(\Omega\times\left[0,T\right]\right)},
\end{equation}
which we may rework by Skorokhod's duality result, cf. Theorem \ref{skorokhod}. According to this theorem, we obtain $\EE\big[Y_T\int_0^TX_tB_t^{-1}x_t^{\top}\widehat{\sigma}_t\mathrm{d}W_t\big]=\EE\big[\int_0^TX_tB_t^{-1}x_t^{\top}\widehat{\sigma}_t\EE[Y_T\mathcal{D}^W_t\log(Y_T)\ \big| \ \eF_t \big]\mathrm{d}t\big]$, reliant on the Hermitian adjoint operator of the Malliavin kernel. This approach is effectively identical to the techniques applied in \citet{rogers2001duality} as well as in \eqref{eq:sec32newconstr}, and emphasises the integration by parts recipe. Let us then turn to the duality results based on this Lagrange multiplier procedure, as subsequently set out in Theorem \ref{thm2}. 
\begin{theorem}\label{thm2}
Consider the constrained dynamic allocation problem in \eqref{eq:gendynprobdfdafdaf}. Introduce
\begin{equation}
\mathrm{d}Y_t=Y_t\left[\alpha_{Y,t}\mathrm{d}t+\beta_{1,t}^{\top}\mathrm{d}W^1_t+\beta_{2,t}^{\top}\mathrm{d}W^2_t\right],
\end{equation}
for an unspecified $Y_0\in\RR_+$, where $\left\Vert\beta_{1,t}\right\Vert_{L^1\left(\left[0,T\right]\right)^d},\left\Vert\beta_{2,t}\right\Vert_{L^1\left(\left[0,T\right]\right)^m}\in\DD^{1,2}$. In addition, let $\widehat{\beta}_t=\left[\beta_{1,t},\beta_{2,t}\right]^{\top}$ and define the drift term $\alpha_{Y,t}\in\DD^{1,2}\left(\left[0,T\right]\right)$ comprised within $Y_t$ as  
\begin{equation}\label{eq:driftdiffmultprocsthm7}
-{\alpha_{Y,t}}=\delta\Big(\widehat{\beta}_t\Big):=\sup_{\left\lbrace{x}_t\right\rbrace_{t\in\left[0,T\right]}\in\widehat{\mathcal{A}}_{X_0}'}\Big(\big(\widehat{\lambda}_t+\widehat{\beta}_t\big)^{\top}\widehat{\sigma}_t\widehat{x}_t\Big),\ \ \mathrm{\&} \ \ \widetilde{\mathcal{A}}^P_{X_0}:=\Big\{ \widehat{\beta}_t\mid \delta\left(\widehat{\beta}_t\right)<\infty\Big\}.
\end{equation}
Let $\mathcal{H}^P_{\widehat{\mathcal{A}}}:=\big\{ \widehat{\beta}_t \mid \|\delta(\widehat{\beta}_t)^{1/2}\|_{L^2\left(\Omega\times\left[0,T\right]\right)}^2<\infty\big\}$ and $n=m+d$, the dual then adheres to 
\begin{equation}\label{eq:sec5thm7generaldual}
\inf_{\widehat{\beta}_t\in\DD^{1,2}\left(\left[0,T\right]\right)^{n},Y_0\in\RR_+}\EE\left[\int_0^Tv\big(Y_tB_t^{-1},\Pi_t\big)\mathrm{d}t+V\big(Y_TB_T^{-1},\Pi_T\big)\right]+X_0Y_0\ \ \mathrm{s.t.} \ \ \widehat{\beta}_t\in \mathcal{H}^P_{\widehat{\mathcal{A}}},
\end{equation}
where $v,V:\RR_+\times\RR_+\rightarrow\RR$ denote the convex conjugates of $e^{-\int_0^t\beta_{s}\mathrm{d}s}u$ and $U$. Suppose that $\widehat{\beta}^{\mathrm{opt}}_t$ solves \eqref{eq:sec5thm7generaldual} and $\delta\big(\widehat{\beta}_t^{\mathrm{opt}}\big)=\big(\widehat{\lambda}_t+\widehat{\beta}_t^{\mathrm{opt}}\big)^{\top}\widehat{\sigma}_t{x}_t^{\mathrm{opt}}$. Optimal $X_T$ and $c_t$ then conform to
\begin{equation}\label{eq:thm2foc1}
 c_t^{\mathrm{opt}}=\widetilde{I}\left(e^{\int_0^t\beta_{s}\mathrm{d}s}Y_t^{\mathrm{opt}}B_t^{-1},\Pi_t\right)\quad\mathrm{and}\quad X_T^{\mathrm{opt}}=I\left(Y_T^{\mathrm{opt}}B_T^{-1},\Pi_T\right),
\end{equation}
in which the optimal Lagrange multiplier is obtainable from the next equality
\begin{equation}\label{eq:thm2foc2}
\widehat{\mathcal{H}}\left(Y_0^{\mathrm{opt}}\right)={Y_0^{\mathrm{opt}}}^{-1}\EE\left[\int_0^Tc_t^{\mathrm{opt}}\frac{Y_t^{\mathrm{opt}}}{B_t}\mathrm{d}t+X_T^{\mathrm{opt}}\frac{Y_T^{\mathrm{opt}}}{B_T}\right]=X_0
\end{equation}  
such that $Y_0^{\mathrm{opt}}=\widehat{\mathcal{H}}^{-1}\left(X_0\right)$. The corresponding optimal portfolio decisions ${x}_t^{\mathrm{opt}}$ consequently decompose into $\widehat{x}_t^{\mathrm{m},\mathrm{opt}}+\widehat{x}_t^{Z,\mathrm{opt}}+\widehat{x}_t^{\Pi,\mathrm{opt}}$, for $\widehat{x}_t^{\mathrm{m},\mathrm{opt}}=-\frac{1}{\widehat{\mathcal{R}}_{f,1,t}}{X_t^{\mathrm{opt}}}^{-1}\widehat{\sigma}_t^{\top^{-1}}\widehat{\beta}_{t}^{\mathrm{opt}}$ including
\begin{equation}
\begin{aligned}
\widehat{x}^{\Pi,\mathrm{opt}}&=\frac{\widehat{\sigma}_t^{\top^{{-1}}}}{X_t^{\mathrm{opt}}}\EE\left[\int_t^T\widehat{\mathcal{R}^1}_{c,u}^{-1}\widehat{\mathcal{G}^1}_{t,u}B_t{Y_t^{\mathrm{opt}}}^{-1}\mathrm{d}u+\widehat{\mathcal{R}^1}_X^{-1}\widehat{\mathcal{G}^1}_{t,T}B_t{Y_t^{-1}}^{\mathrm{opt}}\cond\eF_t\right] \\ \widehat{x}^{Z,\mathrm{opt}}&=\widehat{\sigma}_t^{\top^{-1}}\EE\left[\bigintssss_t^T\left(\widehat{\mathcal{G}^2}_{t,u}-\frac{\widehat{\mathcal{G}^2}_{t,u}}{\widehat{\mathcal{R}^2}_{c,u}}\right)\frac{\ddot{c}_{t,u}^{\mathrm{opt}}}{X_t^{\mathrm{opt}}}\mathrm{d}u+\left(\widehat{\mathcal{G}^2}_{t,T}-\frac{\widehat{\mathcal{G}^2}_{t,T}}{\widehat{\mathcal{R}^2}_X}\right)\ddot{X}_{t,T}^{\mathrm{opt}}\cond\eF_t\right]
\end{aligned}
\end{equation}
where $\widehat{R}_{f,t,T}^{-1}=-\EE\Big[\Big(\int_t^T\widehat{\mathcal{R}^2}_{c,u}^{-1}c_u^{\mathrm{opt}}\frac{Y_{t,u}^{\mathrm{opt}}}{B_{t,u}}\mathrm{d}u+\widehat{\mathcal{R}^2}_X^{-1}X_T^{\mathrm{opt}}\frac{Y_{t,T}^{\mathrm{opt}}}{B_{t,T}}\Big) \ \Big| \ \eF_t\Big]$, $\ddot{c}_{t,u}^{\mathrm{opt}}=c_u^{\mathrm{opt}}\frac{Y_{t,u}^{\mathrm{opt}}}{B_{t,u}}$, $\ddot{X}_{t,T}^{\mathrm{opt}}=\frac{1}{X_t^{\mathrm{opt}}}X_T^{\mathrm{opt}}\frac{Y_{t,T}^{\mathrm{opt}}}{B_{t,T}}$. Lastly, $X_t^{\mathrm{opt}}=\EE\Big[\int_t^Tc_u^{\mathrm{opt}}\frac{Y_{t,u}^{\mathrm{opt}}}{B_{t,u}}\mathrm{d}u+X_T^{\mathrm{opt}}\frac{Y_{t,T}^{\mathrm{opt}}}{B_{t,T}} \ \Big| \ \eF_t\Big]$, and the subsequent
\begin{equation}\label{eq:thmalternativcsconditions}
\sup_{\left\lbrace{x}_t\right\rbrace_{t\in\left[0,T\right]}\in\widehat{\mathcal{A}}'_{X_0}}\int_0^T\left(\EE\left[\widehat{S}_T{Y_T}{B_T}^{-1}\cond\eF_t\right]-\EE\left[Y_T\cond\eF_t\right]{{S}_t}{B_t}^{-1}\right)^{\top}\mathrm{d}X_t{x}_t\odot{S}_t^{-1}<\infty.
\end{equation}
holds, implying that $Y_0^{-1}B_T^{-1}Y_T$ induces the SPD in the constrained baseline market $\mathcal{M}$. 
\end{theorem}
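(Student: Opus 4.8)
The plan is to prove the theorem by Lagrangian duality applied to the num\'{e}raire-deflated budget identity \eqref{eq:sec32rewrconstr}, mirroring the infinite-dimensional argument behind Theorem \ref{thm1} but with the exogenous deflator $Z_t$ replaced by the endogenous multiplier $Y_tB_t^{-1}$. First I would attach the strictly positive semi-martingale $Y_t$ as a state-wise multiplier to \eqref{eq:sec32rewrconstr} and expand $\EE[X_TB_T^{-1}Y_T]$ by the It\^{o} product rule; the square-integrable stochastic integral $\int_0^T(\widehat{\beta}_t^{\top}+x_t^{\top}\widehat{\sigma}_t)\mathrm{d}W_t$ vanishes in expectation by the martingale property (the alternative pairing with $Y_T$ being handled identically through the Skorokhod/Malliavin duality of Theorem \ref{skorokhod}), leaving the paired form \eqref{eq:sec32newconstr}. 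Adjoining this to the primal criterion of \eqref{eq:gendynprobdfdafdaf} yields a Lagrangian that is separately concave in $(X_T,c_t)$ and linear in $x_t$. The linearity in $x_t$ is the crux: to keep $\sup_{x_t\in\widehat{\mathcal{A}}'_{X_0}}$ of the Lagrangian finite I am forced onto the barrier-cone condition $\widehat{\beta}_t\in\mathcal{H}^P_{\widehat{\mathcal{A}}}$, and the supremum over $x_t$ then contributes precisely the drift penalty $-\alpha_{Y,t}=\delta(\widehat{\beta}_t)$ of \eqref{eq:driftdiffmultprocsthm7}. This is the prescription that turns $Y_tB_t^{-1}$ into the deflator $Z_T^{\widehat{\nu}}$ of the fictitious completion $\mathcal{M}_{\widehat{\nu}}$ in \eqref{eq:sec22prickernpert}, upon identifying $\widehat{\beta}_t$ with the perturbation $\widehat{\sigma}_t^{-1}\widehat{\nu}_t$ of the market prices of risk.

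Having fixed the drift, I would carry out the inner unconstrained maximization over $(X_T,c_t)\in\widehat{L^2}$ pointwise. As in \eqref{eq:genwealthoptstati}, the first-order conditions read $U'_X(X_T,\Pi_T)=Y_TB_T^{-1}$ and $e^{-\int_0^t\beta_s}u'_X(c_t,\Pi_t)=Y_tB_t^{-1}$; inverting $U'_X$ and $u'_X$ delivers \eqref{eq:thm2foc1}, and reinserting these extrema converts the running and terminal utilities into their convex conjugates $v$ and $V$, producing the dual objective \eqref{eq:sec5thm7generaldual}. Minimizing the residual $X_0Y_0$ over $Y_0$ forces the budget constraint to bind, which is the content of \eqref{eq:thm2foc2} and pins down $Y_0^{\mathrm{opt}}=\widehat{\mathcal{H}}^{-1}(X_0)$. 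Weak duality---the dual value dominates the primal value for every feasible $\widehat{\beta}_t$---then follows immediately from the Fenchel--Young inequality $U(X_T,\Pi_T)\le V(Y_TB_T^{-1},\Pi_T)+X_TY_TB_T^{-1}$ combined with the constraint pairing, exactly as the conjugate inequalities \eqref{eq:sec31thm1ineqsyst} operate in the unconstrained case.

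The main obstacle is strong duality: the absence of a gap and the attainment of the infimum at some $\widehat{\beta}_t^{\mathrm{opt}}$. Here I would verify the complementary-slackness identity $\delta(\widehat{\beta}_t^{\mathrm{opt}})=(\widehat{\lambda}_t+\widehat{\beta}_t^{\mathrm{opt}})^{\top}\widehat{\sigma}_tx_t^{\mathrm{opt}}$, asserting that the supremum defining the support function is attained at the candidate optimal portfolio $x_t^{\mathrm{opt}}$, so the constraint penalty collapses to zero along the optimum. This is precisely condition \eqref{eq:sec22assumd} that reduces $\mathcal{M}_{\widehat{\nu}}$ to the true market $\mathcal{M}$, the situation covered by Proposition 8.3 in \citet{cvitanic1992convex}; invoking it certifies that $Y_0^{-1}B_T^{-1}Y_T$ is a genuine state-price density for the constrained economy, which is the assertion carried by \eqref{eq:thmalternativcsconditions}. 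Establishing existence of $\widehat{\beta}_t^{\mathrm{opt}}$ within $\DD^{1,2}([0,T])^{n}$ together with the integrability needed to differentiate under the expectation is the delicate part; I would lean on the reasonable asymptotic-elasticity condition of \citet{kramkov1999asymptotic} to guarantee finiteness of the conjugate expectations and closedness of the admissible dual set.

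Finally, since after deflation $X_t^{\mathrm{opt}}=\EE[\int_t^Tc_u^{\mathrm{opt}}Y_{t,u}^{\mathrm{opt}}B_{t,u}^{-1}\mathrm{d}u+X_T^{\mathrm{opt}}Y_{t,T}^{\mathrm{opt}}B_{t,T}^{-1}\mid\eF_t]$ is a $\PP$-martingale, I would extract the weights verbatim as in the proof of Theorem \ref{thm1}: apply the Clark--Ocone formula of Theorem \ref{clarkocone} and uniqueness of the MRT integrand, now with $Y_tB_t^{-1}$ in the role of the deflator, and read off the three Malliavin hedging demands $\widehat{x}_t^{\mathrm{m},\mathrm{opt}}+\widehat{x}_t^{Z,\mathrm{opt}}+\widehat{x}_t^{\Pi,\mathrm{opt}}$. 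The mean-variance term acquires a factor proportional to $(\widehat{\sigma}_t^{\top})^{-1}\widehat{\beta}_t^{\mathrm{opt}}$ because the fictitious market price of risk is shifted by $\widehat{\beta}_t^{\mathrm{opt}}$, whereas the deflator and inflation hedges arise from $\mathcal{D}^W_t$ acting on $Y_T^{\mathrm{opt}}$ and $\Pi_T$ respectively, reproducing the coefficients $\widehat{\mathcal{G}^1}$, $\widehat{\mathcal{G}^2}$ and the RRA factors $\widehat{\mathcal{R}}$. This step is computationally heavy but structurally identical to Section \ref{sec3.1}, so the genuinely new work is confined to the duality and complementary-slackness arguments above.
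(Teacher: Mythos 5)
Your proposal is correct and follows essentially the same route as the paper's own proof: a Lagrangian with the semi-martingale multiplier $Y_t$ in the spirit of \citet{KleinRoger:2007:Dualityinoptimal}, Fr\'{e}chet first-order conditions in the $X_T$- and $\left\lbrace c_t\right\rbrace$-directions yielding \eqref{eq:thm2foc1} and the conjugates $v,V$ in the dual objective, the supremum over $\left\lbrace x_t\right\rbrace$ handled by complementary slackness to produce the support-function drift condition \eqref{eq:driftdiffmultprocsthm7} and \eqref{eq:thmalternativcsconditions}, feasibility/optimality of $x_t^{\mathrm{opt}}$ delegated to \citet{cvitanic1992convex}, and the portfolio decomposition extracted via Clark--Ocone and uniqueness of the MRT integrand with $Y_tB_t^{-1}$ as deflator. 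The differences (order of the inner maximization versus the $x_t$-step, the explicit Fenchel--Young weak-duality remark, and the attention to existence of $\widehat{\beta}_t^{\mathrm{opt}}$ under the asymptotic-elasticity condition) are elaborations of points the paper treats tersely or by citation, not a different argument.
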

\begin{proof}
We follow \citet{KleinRoger:2007:Dualityinoptimal} in applying convexity techniques. In this regard, let us introduce the Lagrangian for problem \eqref{eq:gendynprobdfdafdaf} and rewrite it by It\^{o}'s Lemma:
\begin{equation}
\begin{aligned}
\eL&=\EE\bigg[\int_0^T\frac{u\left(c_t,\Pi_t\right)}{e^{\int_0^t\beta_{s}\mathrm{d}s}}\mathrm{d}t+U\left(X_T,\Pi_T\right)+X_0Y_0-X_T{Y_T}{B_T}^{-1}-\int_0^Tc_t{Y_t}{B_t}^{-1}\mathrm{d}t\\&+\int_0^T\left\lbrace\left({x}_t^{\top}\widehat{\sigma}_t\widehat{\lambda}_t+\alpha_{Y,t}+{x}_t^{\top}\widehat{\sigma}_t\beta_t\right)X_t{Y_t}{B_t}^{-1}\mathrm{d}t+\left({x}_t^{\top}\widehat{\sigma}_t+\widehat{\beta}_t\right)X_t{Y_t}{B_t}^{-1}\mathrm{d}W_t\right\rbrace\bigg].
\end{aligned}
\end{equation}

Minimisation in the $X_T$-direction and in the $\left\lbrace c_t\right\rbrace_{t\in\left[0,T\right]}$-direction of $\eL$ yields\footnote{We refer to the proof of Theorem \ref{thm1} for analogous identities (FOCs) involving Fr\'{e}chet derivatives}
\begin{equation}
\begin{aligned}
D_{X_T}\eL\zeta&=\EE\left[\left(U'_X\left(X_T,\Pi_T\right)-Y_TB_T^{-1}\right)\right]\zeta=0 \\ D_{\left\lbrace c_t\right\rbrace}\eL\big\{\widetilde{\zeta}_t\big\}&=\EE\left[\int_0^T\left(e^{-\int_0^t\beta_{s}\mathrm{d}s}
u'_X\left(c_t,\Pi_t\right)-Y_tB_t^{-1}\right)\widetilde{\zeta}_t\mathrm{d}t\right]=0,
\end{aligned} 
\end{equation} 
for all $\zeta\in \RR_+$ and $\widetilde{\zeta}_t\in L^2\left(\Omega\times\left[0,T\right]\right)$, which procure the optimality conditions in \eqref{eq:thm2foc1}.

Inserting $c_t^{\mathrm{opt}}$ and $X_T^{\mathrm{opt}}$ back into $\eL$, we acquire the dual objective function up to
\begin{equation}
\begin{aligned}
\sup_{\left\lbrace{x}_t\right\rbrace_{t\in\left[0,T\right]}\in\widehat{\mathcal{A}}_{X_0}'}\EE\Big[\int_0^TX_t{Y_t}{B_t}^{-1}{x}_t^{\top}\widehat{\sigma}_t\left(\widehat{\lambda}_t+\alpha_{Y,t}+\widehat{\beta}_t\right)\mathrm{d}t\Big].
\end{aligned}
\end{equation}
 after optimising over $\left\lbrace x_t\right\rbrace_{t\in\left[0,T\right]}\in\widehat{\mathcal{A}}_{X_0}'$. CS then engenders \eqref{eq:driftdiffmultprocsthm7} and \eqref{eq:thmalternativcsconditions}.

\citet{cvitanic1992convex} show that $\widehat{\beta}^{\mathrm{opt}}_t$ ensures ${x}_t^{\mathrm{opt}}\in\widehat{\mathcal{A}}_{X_0}$. By Theorem \ref{clarkocone}: $\big({x}_t^{\top}\widehat{\sigma}_t+\widehat{\beta}_t^{\top}\big)X_t^{\mathrm{opt}}{Y_t}^{\mathrm{opt}}{B_t}^{-1}=\EE\big[\mathcal{D}^W_t\big(X_T^{\mathrm{opt}}{Y_T}^{\mathrm{opt}}{B_T}^{-1}+\int_t^Tc_u^{\mathrm{opt}}{Y_u}^{\mathrm{opt}}{B_u}^{-1}\mathrm{d}u\big)\ \big| \ \eF_t\big]^{\top}$. Let
\begin{equation}
\begin{aligned}
\widehat{\mathcal{G}^{1}}_{t,u}&:=\int_t^u\left(\mathcal{D}^W_t\pi_s-\mathcal{D}^W_t\widehat{\xi}_s\cdot \widehat{\xi}_s\right)\mathrm{d}s+\int_t^u\mathcal{D}^W_t\widehat{\xi}_s\mathrm{d}W_s+\widehat{\xi}_t,\ u\geq t\\ \widehat{\mathcal{G}^2}_{t,u}&:=-\int_t^u\mathcal{D}^{W}_t\left(R_{f,s}+\delta\left(\widehat{\beta}_t\right)+\mathcal{D}^W_t\widehat{\beta}_s\cdot\widehat{\beta}_s\right)\mathrm{d}s+\int_t^u\mathcal{D}^W_t\widehat{\beta}_s\mathrm{d}W_s, \ u\geq t,
\end{aligned}
\end{equation}
in which $\mathcal{D}^W_tR_{f,s}$ for $s\geq t$ is given in Theorem \ref{thm1}. Further, all $\widehat{\mathcal{R}^2}$ terms are equivalent to the $\mathcal{R}^2$ terms with inclusion of \eqref{eq:thm2foc1} rather than \eqref{eq:genfictoptwealth}. Accordingly,
\begin{equation}\label{eq:thm2optrulportfol}
\begin{aligned}
{x}_t^{\mathrm{opt}}&=\frac{\widehat{\sigma}_t^{\top^{-1}}}{X_t^{\mathrm{opt}}}\Bigg(\ddot{x}_t^{m,\mathrm{opt}}\widehat{\beta}_t+B_tY_t^{-1}\EE\left[\int_t^T\widehat{\mathcal{R}^1}_{c,u}^{-1}\widehat{\mathcal{G}^1}_{t,u}\mathrm{d}u+\widehat{\mathcal{R}^1}_X^{-1}\widehat{\mathcal{G}^1}_{t,T}\cond\eF_t\right]\\&+\EE\left[\int_t^T\left(1-\widehat{\mathcal{R}^2}_{c,u}^{-1}\right)\widehat{\mathcal{G}^2}_{t,u}\ddot{c}_{t,u}^{\mathrm{opt}}\mathrm{d}u+\left(1-\widehat{\mathcal{R}^2}_X^{-1}\right)\widehat{\mathcal{G}^2}_{t,T}X_T^{\mathrm{opt}}{Y_{t,T}}{B_{t,T}}^{-1}\cond\eF_t\right]\Bigg)
\end{aligned}
\end{equation}
where $\ddot{x}^{m,\mathrm{opt}}_t=-\EE\big[\big(\int_t^T\widehat{\mathcal{R}^2}_{c,u}^{-1}\ddot{c}_{u,t}^{\mathrm{opt}}\mathrm{d}u+\widehat{\mathcal{R}^2}_X^{-1}X_T^{\mathrm{opt}}\frac{Y_{t,T}^{\mathrm{opt}}}{B_{t,T}}\big)\ \big| \ \eF_t\big]$, $\widehat{\mathcal{R}^1}_{c,u}=\widetilde{I}'_Y\big(e^{\int_0^u\beta_s\mathrm{d}s}\frac{Y_u^{\mathrm{opt}}}{B_u},\Pi_u\big)\\\widehat{M_u}$, $\widehat{\mathcal{R}^1}_{X}=I'_Y\big(Y_T^{\mathrm{opt}}B_T^{-1},\Pi_T\big)\widehat{M_T}$, in which we define $\widehat{M_t}=B_t^{-1}Y_t^{\mathrm{opt}}\Pi_t$ $\forall$ $t\in\left[0,T\right]$.\footnote{Here, $X_t^{\mathrm{opt}}Y_tB_t^{-1}=\EE\big[\int_t^T{c}_{u}^{\mathrm{opt}}\mathrm{d}u+X_T^{\mathrm{opt}}{Y_{T}^{\mathrm{opt}}}{B_{T}}^{-1} \ \big| \ \eF_t\big]$, for $Y_{t,T}:=Y_t^{-1}Y_T$, $B_{t,T}:=B_t^{-1}B_T$. More, $Y_0^{\mathrm{opt}}$ is acquirable, similarly to $\mathcal{H}^{-1}\left(X_0\right)$ in Theorem \ref{thm1}, and `$\odot$' denotes the Hadamard product.}
\end{proof}

The above theorem reveals two paramount facets that lay the groundwork for the approximate method: (i) the existence of an artificial market, in which by construction the encapsulated objective function supplies an upper bound on the true value function, and (ii) the generic FOCs essential to this artificial market which succeed in ensuring optimality in the constrained baseline $\mathcal{M}$. In the interest of the two aspects, let us turn to a full specification of $\mathcal{M}_{\widehat{\nu}}$ by conciliating the results in Theorem \ref{thm2} with the conceptualisation of the fictitious market in section \ref{sec2.2} according to \citet{cvitanic1992convex}. To achieve this, parametrise $\widehat{\beta}_t=-\widehat{\lambda}_t-\widehat{\sigma}_t^{-1}\widehat{\nu}_t$ for some $\widehat{\nu}_t\in \DD^{1,2}\left(\left[0,T\right]\right)^{d+m}$, and rewrite
\begin{equation}
-{\alpha_{Y,t}}=\delta\big(\widehat{\nu}_t\big):=\sup_{\left\lbrace\widehat{x}_t\right\rbrace_{t\in\left[0,T\right]}\in\widehat{\mathcal{A}}^{P}_{X_0}}\big(-\widehat{\nu}_t^{\top}\widehat{x}_t\big),\ \ \mathrm{with}\ \ \widetilde{\mathcal{A}}^P_{X_0}:= \widetilde{\mathcal{A}}_{X_0}=\big\{ \widehat{\nu}_t\mid \delta\left(\widehat{\nu}_t\right)<\infty\big\}.
\end{equation}

The drift term $\alpha_{Y,t}$ encircled within $Y_t$ thus alters into the negative counterpart of the support function in $\mathcal{M}_{\widehat{\nu}}$, such that the corresponding barrier cone changes as well into the analogue of $\mathcal{M}_{\widehat{\nu}}$,  see \eqref{eq:sec22suppfuncetc}. Similarly, the set of feasible dual controls $\mathcal{H}_{\widehat{\mathcal{A}}}^P$ becomes $\mathcal{H}_{\widehat{\mathcal{A}}}$, and we require that $\widehat{\nu}_t\in\mathcal{H}_{\widehat{\mathcal{A}}}$, in accordance with the dual constraint \eqref{eq:sec5thm7generaldual}. Let us then observe that the $B_t^{-1}Y_t$ term is for all $t\in\left[0,T\right]$ separable as $Y_0\widehat{B}_t^{-1}{\mathrm{d}\QQ^{\widehat{\nu}}}/{\mathrm{d}\PP}\big|_{\eF_t}$ or likewise as $Y_0Z_t^{\widehat{\nu}}$, where $\widehat{B}_t$ is the fictitious money market account, $Z_t^{\widehat{\nu}}$ the artificial SPD, and $\QQ^{\widehat{\nu}}$ the non-unique pricing measure in $\mathcal{M}_{\widehat{\nu}}$. Therefore, $B_t^{-1}Y_t$ alludes to a new market, wherein $\widehat{B}_t$ serves as money market account and $\widehat{S}_t$ in \eqref{eq:sec22s} as risky asset.\footnote{Under the stricture $\delta(\widehat{\nu}_t)=-\widehat{\nu}_t^{\top}{x}_t$ for $\widehat{\nu}_t\in\mathcal{H}_{\widehat{\mathcal{A}}}$, $Y_0^{-1}Y_tB_t^{-1}$ functions as SPD in $\mathcal{M}$. If this condition is not met, then a recovery of $\{ x_t\}_{t\in\left[0,T\right]}$ in $\mathcal{M}$ fails, but succeeds for $\{\widehat{x}_t\}_{t\in\left[0,T\right]}$ in $\mathcal{M}_{\widehat{\nu}}$. The optimal decisions $\{\widehat{x}_t^{\mathrm{opt}}\}_{t\in\left[0,T\right]}$ in the fictitious $\mathcal{M}_{\widehat{\nu}}$ may therefore not be admissible in $\mathcal{M}$ for every $\widehat{\nu}_t\in\mathcal{H}_{\widehat{\mathcal{A}}}$.}

In the artificial market, the investor is unconstrained with regard to the choice of portfolio rules $\left\lbrace\widehat{x}_t\right\rbrace_{t\in\left[0,T\right]}$, effecting that the martingale method is applicable. Hence,
\begin{equation}\label{eq:fictitiousprimalproblem}
\begin{aligned}
\sup_{(X_T^{\widehat{\nu}},\widehat{c}_t)\in \widehat{L^2}} &\EE\bigg[\int_0^Te^{\int_0^t\beta_{s}\mathrm{d}s}u\left(\widehat{c}_t,\Pi_t\right)\mathrm{d}t+U\left(X_T^{\widehat{\nu}},\Pi_T\right)\bigg]\\  \mathrm{s.t.} \quad  &\EE\bigg[\int_0^T\widehat{c}_tZ_t^{\widehat{\nu}}\mathrm{d}t+X_T^{\widehat{\nu}}Z_T^{\widehat{\nu}}\bigg]\leq X_0, \ \ \mathrm{for} \ \ \widehat{\nu}_t\in\mathcal{H}_{\widehat{\mathcal{A}}}
\end{aligned}
\end{equation}
characterises the fictitious static duplicate of the primal problem in \eqref{eq:gendynprobdfdafdaf}, where $X_T^{\widehat{\nu}}$ is the terminal value of synthetic wealth in \eqref{eq:sec21ficdynconstr}, $\left\lbrace\widehat{c}_t\right\rbrace_{t\in\left[0,T\right]}$ the fictitious consumption behaviour, $Z_t^{\widehat{\nu}}$ equates to $Y_0^{-1}Y_tB_t^{-1}$, and $\widehat{\nu}_t\in\mathcal{H}_{\widehat{\mathcal{A}}}$ is left unspecified. As a consequence, optimal fictitious horizon wealth and consumption respectively live by $X_T^{\widehat{\nu},\mathrm{opt}}=I\big(\widehat{\eta}^{\mathrm{opt}} Z_T^{\widehat{\nu}},\Pi_T\big)$ and $\widehat{c}_t^{\mathrm{opt}}=\widetilde{I}\big(\widehat{\eta}^{\mathrm{opt}}e^{\int_0^t\beta_s\mathrm{d}s}Z_t^{\widehat{\nu}},\Pi_t\big)$, where the matching rules $\widehat{x}_t^{\mathrm{opt}}$ are equal to $x_t^{\mathrm{opt}}$ in \eqref{eq:thm2optrulportfol}, and the Lagrange multiplier $\widehat{\eta}^{\mathrm{opt}}\in\RR_+$ is such that the static constraint binds. 

\begin{remark}
Let $\mathcal{L}^{\widehat{\nu}}_{t,T}=\int_0^Te^{-\int_0^t\beta_{s}\mathrm{d}s}u\left(\widehat{c}_t,\Pi_t\right)\mathrm{d}t+U\left(X_T^{\widehat{\nu}},\Pi_T\right)$, and $\eta^{\widehat{\nu}}\in\RR_+$ be such that the budget constraint \eqref{eq:fictitiousprimalproblem} binds. Then, the specification in \eqref{eq:sec5thm7generaldual} is identical to 
\begin{equation}\label{eq:sec32dualversusficcompl}
J^{\mathrm{opt}}\left(X_0\right)=\inf_{\widehat{\nu}_t\in\mathcal{H}_{\widehat{\mathcal{A}}}}\left(\sup_{(X_T^{\widehat{\nu}},\widehat{c}_t)\in\widehat{L^2}}\EE\left[\mathcal{L}^{\widehat{\nu}}_{t,T}-\eta^{\widehat{\nu}}\left(\int_0^Tc_tZ_t^{\widehat{\nu}}\mathrm{d}t+X_T^{\widehat{\nu}}Z_T^{\widehat{\nu}}- X_0\right)\right]\right).
\end{equation}
\end{remark}

Inserting $X_T^{\widehat{\nu},\mathrm{opt}}$ and $\widehat{c}_t^{\mathrm{opt}}$ into the fictitious value function thus recovers the dual objective in Theorem \ref{thm2}. As a result, the dual objective harmonises with opting for the worst-case fictitious market. Similarly, for any $\widehat{\nu}_t\in\mathcal{H}_{\widehat{\mathcal{A}}}$, the objective in \eqref{eq:fictitiousprimalproblem} supplies an upper bound on \eqref{eq:sec5thm7generaldual}, implying that the dual coincides with selecting the lowest possible upper bound. By strong duality, this least-favourable choice rallies the constrained baseline financial market $\mathcal{M}$, wherein $\widehat{c}_t^{\mathrm{opt}}$, $\widehat{x}_t^{\mathrm{opt}}$ meet the admissibility requirements enclosed by $\widehat{\mathcal{A}}_{X_0}$. The link between $\widehat{\nu}_t$ and the economic scenarios is evident from the SPD \eqref{eq:sec22prickernpert}.

\subsection{Optimality in Incomplete Market}\label{sec3.3}
This section concludes by illustrating the main findings in Theorem \ref{thm2} for the case of non-traded risk. We use a terminal wealth framework that mildly adjusts the economic setting in section \ref{sec2}. Here, we aim to emphasise the benefit of an approximate method. To accomplish this, reconsider $\mathcal{M}$, set $\sigma_t^{S_1}:=\big[\sigma_t^{S_1},0_{m\times d}\big]^{\top}$, $\sigma_t^{S_2}:=\big[0_{d\times m},\sigma_t^{S_2}\big]^{\top}$, and restrict $S_t$ to its first $d$ entries such that solely  $W_t^1$ is driving $S_t$. Moreover, we impose a trading restriction through $K\subseteq\RR^{d+m}$, by fixing $K=\RR^d\times\left\lbrace 0\right\rbrace^m$: $W^2_t$ is not traded.\footnote{Practically, $K=\RR^d\times\left\lbrace 0\right\rbrace^m$ ascertains partial hedgeability of $\left\lbrace\Pi_t\right\rbrace_{t\in\left[0,T\right]}$.}

In the fictitious $\mathcal{M}_{\widehat{\nu}}$, the support function and barrier cone change into $\delta\left(\widehat{\nu}_t\right)=0<\infty,\quad\mathrm{for}\quad \widehat{\nu}_t\in\widetilde{\mathcal{A}}_{X_0}:=\left\lbrace 0\right\rbrace^d\times\DD^{1,2}\left(\left[0,T\right]\right)^m$, which identically relates to $\widehat{\nu}_t\in\mathcal{H}_{\widehat{\mathcal{A}}}=\left\lbrace 0\right\rbrace^d\times\DD^{1,2}\left(\left[0,T\right]\right)^m$. Therefore, suppose that $\widehat{\nu}_t=\big[\widehat{\nu}_{1,t},\widehat{\nu}_{2,t}\big]^{\top}$, then in $\mathcal{M}_{\widehat{\nu}}$ we must have that $\widehat{\nu}_{1,t}=0_d$ and $\widehat{\nu}_{2,t}\in\DD^{1,2}\left(\left[0,T\right]\right)^m$ hold. The artificial non-traded asset abides by 
\begin{equation}
\frac{\mathrm{d}D_t}{D_t}=\left(R_{f,t}1_m+{\sigma}_t^{S_2}\lambda_{2,t}+\widehat{\nu}_{2,t}\right)\mathrm{d}t+\sigma_t^{S_2}\mathrm{d}W^2_t, \ D_0=1_m,
\end{equation}
wherein $\widehat{\nu}_{2,t}$ is `free' to realise values in the entirety of $\DD^{1,2}\left(\left[0,T\right]\right)^m$, which coalesces with the inability to evaluate the price of unhedgeable uncertainty. These requirements spell out an artificial market wherein $S_t$ remains as a traded asset, because $S_t$ is \textit{ab initio} unaffected by the undiversifiable $\left\lbrace W^2_t\right\rbrace_{t\in\left[0,T\right]}$ on account of $\sigma_t^{S_2}=0_{d+m\times m}$, and $\widehat{\nu}_{1,t}=0_d$. Furthermore, $B_t=\widehat{B}_t$ continues to live in $\mathcal{M}_{\widehat{\nu}}$ due to $\delta\left(\widehat{\nu}_t\right)=0$, and we group $\widehat{S}_t=\left[S_t,D_t\right]^{\top}$.

To alleviate notation, let us separate $\widehat{x}_t$ as $\widehat{x}_t=\left[x_t,x_{f,t}\right]^{\top}$, in which $x_t$ is the allocation to $S_t$ in the true market $\mathcal{M}$ and $x_{f,t}$ the proportion of wealth invested in $D_t$. Then,
\begin{equation}\label{eq:sec33fictbcwealtheq}
\mathrm{d}X^{\widehat{\nu}}_t=X^{\widehat{\nu}}_t\left[\left(R_{f,t}+x_t^{\top}\sigma_t^{S_1}\lambda_{1,t}+x_{f,t}^{\top}\sigma_t^{S_2}\widehat{\lambda}_{2,t}\right)\mathrm{d}t+x_t^{\top}\sigma_t^{S_1}\mathrm{d}W^1_t+x_{f,t}^{\top}\sigma_t^{S_2}\mathrm{d}W^2_t\right],
\end{equation}
describes the artificial wealth process, where we let $\widehat{\lambda}_{2,t}=\lambda_{2,t}+{\sigma_t^{S_2}}^{-1}\widehat{\nu}_{2,t}$ for brevity. Additionally, we suppress inclusion of a consumption process, because we confine the analysis to an investor who derives utility from solely horizon wealth, for the sake of clarity. Admissibility of $\widehat{x}_t$ in the artificial environment $\mathcal{M}_{\widehat{\nu}}$ inherits its definition from \eqref{eq:sec22assump}. Note that admissibility in the constrained environment $\mathcal{M}$ holds only if $x_{f,t}=0_m$. Ultimately, we observe that the fictitious dynamic allocation problem reads as\footnote{This problem in $\mathcal{M}$ is $\sup_{\left\lbrace x_t\right\rbrace\in\widehat{\mathcal{A}}_{X_0}'}\EE\left[U\left(X_T,\Pi_T\right)\right]\ \mathrm{s.t.} \ \mathrm{d}X_t=X_t\big[\big(R_{f,t}+x_t^{\top}\sigma_t^{S_1}\lambda_{1,t}\big)\mathrm{d}t+x_t^{\top}\sigma_t^{S_1}\mathrm{d}W^1_t\big]$.}
\begin{equation}\label{eq:sec33portfolprob}
\sup_{X_T^{\widehat{\nu}}\in L^2\left(\Omega\right)}\EE\left[U\left(X_T^{\widehat{\nu}},\Pi_T\right)\right] \quad \mathrm{s.t.} \quad \EE\left[X_T^{\widehat{\nu}}Z_T^{\widehat{\nu}}\right]\leq X_0.
\end{equation}

In application to the previous setting, we proceed by breaking down Theorems \ref{thm1} and \ref{thm2} into three propositions. The first proposition concerns optimality in the artificial environment $\mathcal{M}_{\widehat{\nu}}$. The second proposition poses the dual problem along with restrictions invoked on the the multiplier process. The last proposition retains optimality in $\mathcal{M}$.
\begin{proposition}\label{prop1}
Consider the problem in \eqref{eq:sec33portfolprob}. Then, optimal horizon wealth follows
\begin{equation}\label{eq:genfictoptwealth1}
X_t^{\mathrm{opt}}=\EE\left[I\left(\mathcal{H}^{-1}\left(X_0\right)Z_T^{\widehat{\nu}},\Pi_T\right)Z_{t,T}^{\widehat{\nu}}\cond\eF_t\right], \ \ \forall \ 0\leq t\leq T
\end{equation}
where $\eta^{\mathrm{opt}}:=\widehat{\mathcal{H}}^{-1}\left(X_0\right)$ characterises the multiplier.\footnote{Here, $\widehat{\mathcal{H}}:\RR_+\rightarrow\RR_+$ follows from $\widehat{\mathcal{H}}\left(\eta\right)=\EE\left[I\left(\eta Z_T^{\widehat{\nu}},\Pi_T\right)Z_{T}^{\widehat{\nu}}\right]$, and $Z^{\widehat{\nu}}_{t,T}:={Z^{\widehat{\nu}}_{t}}^{-1}Z^{\widehat{\nu}}_{T}$.} The optimal portfolio rises from
\begin{equation}\label{eq:genoptrulesfict}
\widehat{x}_t^{\mathrm{opt}}=\left(Z_t^{\widehat{\nu}}X_t^{\mathrm{opt}}{\widehat{\sigma}}_t^{\top}\right)^{-1}\left\lbrace\EE\left[\mathcal{D}^{W}_tZ_{T}^{\widehat{\nu}}X_T^{\mathrm{opt}}\cond\eF_t\right] + \widehat{\lambda}_t\EE\left[Z_T^{\widehat{\nu}}X_T^{\mathrm{opt}}\cond\eF_t\right] \right\rbrace.
\end{equation}
Then, define the previsible projections of transformed RRA (proxies) as
\begin{equation}
\begin{aligned}
{\mathcal{R}}^2_t\left(I\left(\eta^{\mathrm{opt}} Z_T^{\widehat{\nu}},\Pi_T\right),\Pi_T\right)^{-1}&=-\EE\left[\frac{Z_T^{\widehat{\nu}}}{\EE[X_T^{\mathrm{opt}}Z_T^{\widehat{\nu}}\ | \ \eF_t]}\frac{U'_X\left(X_T^{\mathrm{opt}},\Pi_T\right)}{U''_{XX}\left(X_T^{\mathrm{opt}},\Pi_T\right)}\cond\eF_t\right]\\
{\mathcal{R}}^1_t\left(I'_Y\left(\eta^{\mathrm{opt}} Z_T^{\widehat{\nu}},\Pi_T\right),\Pi_T\right)^{-1}&=-\EE\left[{\EE\left[X_T^{\mathrm{opt}}Z_T^{\widehat{\nu}}\cond\eF_t\right]}^{-1}{I'_Y\left(\eta^{\mathrm{opt}} Z_T^{\widehat{\nu}},\Pi_T\right)\widehat{M_T}}\cond\eF_t\right],
\end{aligned}
\end{equation}
where we abbreviate ${\mathcal{R}}^1_t:={\mathcal{R}}^1_t\left(I'_Y\left(\eta^{\mathrm{opt}} Z_T^{\widehat{\nu}},\Pi_T\right),\Pi_T\right)^{-1}$ and ${\mathcal{R}}^2_t:={\mathcal{R}}^2_t\left(I\left(\eta^{\mathrm{opt}} Z_T^{\widehat{\nu}},\Pi_T\right),\Pi_T\right)$. Define $\widehat{\mathcal{R}^i}_{t,T}$ such that ${\mathcal{R}}^i_t=\EE\big[\widehat{\mathcal{R}^i}_{t,T}\ \big| \ \eF_t\big]$ for $i=1,2$. Then, $x_{f,t}^{\mathrm{opt}}$ keeps itself to 
\begin{equation}\label{eq:genficweightactu}
\begin{aligned}
{x}_{f,t}^{\mathrm{opt}}&={{\sigma}^{S_2}_t}^{\top^{-1}}\Big(\mathcal{R}^1_t\left(I'_Y\left(\eta^{\mathrm{opt}} Z_T^{\widehat{\nu}},\Pi_T\right),\Pi_T\right)^{-1}\mathcal{G}'^1_{t,T}\\&+\left(1-\mathcal{R}^2_t\left(I\left(\eta^{\mathrm{opt}} Z_T^{\widehat{\nu}},\Pi_T\right),\Pi_T\right)^{-1}\right)\mathcal{G}'^{2}_{t,T}+\mathcal{R}^2_t\left(I\left(\eta^{\mathrm{opt}} Z_T^{\widehat{\nu}},\Pi_T\right),\Pi_T\right)^{-1}{\widehat{\lambda}_{2,t}}\Big).
\end{aligned}
\end{equation}
The hedging coefficients in optimal $\widehat{x}_t^{\mathrm{opt}}$ are given $\widehat{\lambda}_{f,t}=\big[\lambda_{1,t},\widehat{\lambda}_{2,t}\big]^{\top}$ equal to
\begin{equation}\label{eq:genhedgecoff}
\begin{aligned}
\mathcal{G}^{2}_{t,T}&=\mathcal{R}^2_t\EE\left[-\widehat{\mathcal{R}^2}_{t,T}\left(-\int_t^T\left[\mathcal{D}^{W}_tR_{f,s}\mathrm{d}s+\mathcal{D}^W_t\widehat{\lambda}_{f,s}\left(\mathrm{d}W_s+\widehat{\lambda}_{f,s}\mathrm{d}s\right)\right]\right)\cond\eF_t\right]\\\mathcal{G}^{1}_{t,T}&=\mathcal{R}^1_t\EE\left[-\widehat{\mathcal{R}^1}_{t,T}\left(\int_t^T\left(\left[\mathcal{D}^W_t\pi_s-\mathcal{D}^W_t\widehat{\xi}_s \cdot\widehat{\xi}_s\right]\mathrm{d}s+\mathcal{D}^W_t\widehat{\xi}_s\mathrm{d}W_s\right)+\widehat{\xi}_t\right)\cond\eF_t\right],
\end{aligned}
\end{equation}
where $\mathcal{G}^{1}_{t,T}=\big[\mathcal{G}''^{1}_{t,T},\mathcal{G}'^{1}_{t,T}\big]^{\top}$ and $\mathcal{G}^{2}_{t,T}=\big[\mathcal{G}''^{2}_{t,T},\mathcal{G}'^{2}_{t,T}\big]^{\top}$. Hence, $\widehat{x}_{t}^{\mathrm{opt}}$ springs from the analytic for ${x}_{f,t}^{\mathrm{opt}}$: replace ${\sigma}_t^{S_2}$ by $\widehat{\sigma}_t$, and the terms $\mathcal{G}'^1_{t,T}$ and $\mathcal{G}'^2_{t,T}$ by respectively $\mathcal{G}^1_{t,T}$ and $\mathcal{G}^2_{t,T}$.\footnote{Replacing the terms in parentheses for $\mathcal{G}^1_{t,T}$ and $\mathcal{G}^2_{t,T}$ in \eqref{eq:genhedgecoff} by respectively $(\int_t^T\mathcal{D}^{W^{2}}_t\pi_s\mathrm{d}s+{\xi}^{\Pi_2}_t)$ and $(-\int_t^T(\mathcal{D}^{W^2}_tR_{f,s}\mathrm{d}s+\mathcal{D}^{W^2}_t\widehat{\lambda}_{2,s}\mathrm{d}W^{2,\QQ^{\widehat{\lambda_2}}}_s))$ generates the optimal fictitious allocation $x_{f,t}^{\mathrm{opt}}$ to $D_t$.}
\end{proposition}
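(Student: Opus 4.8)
The plan is to specialise the arguments of Theorem~\ref{thm1} to the terminal-wealth problem \eqref{eq:sec33portfolprob} posed in the fictitious market $\mathcal{M}_{\widehat{\nu}}$, in which the investor is unconstrained so that the martingale method applies verbatim. First I would form the Lagrangian $\eL(X_T^{\widehat{\nu}},\eta)=\EE[U(X_T^{\widehat{\nu}},\Pi_T)]-\eta(\EE[X_T^{\widehat{\nu}}Z_T^{\widehat{\nu}}]-X_0)$ and read off the single Fr\'echet first-order condition $U'_X(X_T^{\widehat{\nu}},\Pi_T)=\eta Z_T^{\widehat{\nu}}$, whence $X_T^{\mathrm{opt}}=I(\eta^{\mathrm{opt}}Z_T^{\widehat{\nu}},\Pi_T)$ with $\eta^{\mathrm{opt}}=\widehat{\mathcal{H}}^{-1}(X_0)$ pinned by the binding budget constraint $\widehat{\mathcal{H}}(\eta)=\EE[I(\eta Z_T^{\widehat{\nu}},\Pi_T)Z_T^{\widehat{\nu}}]=X_0$. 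Sufficiency follows exactly as in \eqref{eq:sec31thm1ineqsyst}: since $U''_{XX}<0$, the conjugate inequality $U(X_T)\le V(\eta Z_T^{\widehat{\nu}},\Pi_T)+\eta Z_T^{\widehat{\nu}}X_T$ holds with equality at $X_T^{\mathrm{opt}}$, so the candidate is optimal over all feasible $X_T^{\widehat{\nu}}\in L^2(\Omega)$. Because consumption is suppressed, $\{Z_t^{\widehat{\nu}}X_t^{\widehat{\nu}}\}_{t\in[0,T]}$ is a $\PP$-martingale, and taking $\EE[\,\cdot\mid\eF_t]$ of $Z_T^{\widehat{\nu}}X_T^{\mathrm{opt}}$ delivers \eqref{eq:genfictoptwealth1} with $Z_{t,T}^{\widehat{\nu}}={Z_t^{\widehat{\nu}}}^{-1}Z_T^{\widehat{\nu}}$.

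Next I would recover the portfolio by the replication step of Theorem~\ref{thm1}. Because consumption is absent, $M_t:=Z_t^{\widehat{\nu}}X_t^{\mathrm{opt}}$ is a square-integrable $\PP$-martingale, so the Clark--Ocone formula of Theorem~\ref{clarkocone} represents it with diffusion integrand $\EE[\mathcal{D}^W_tM_T\mid\eF_t]^{\top}$. Applying It\^o's product rule to $M_t$ via the artificial wealth dynamics \eqref{eq:sec33fictbcwealtheq} and the $\mathcal{M}_{\widehat{\nu}}$-deflator \eqref{eq:sec22prickernpert}, the finite-variation part cancels and the diffusion equals $M_t(\widehat{x}_t^{\top}\widehat{\sigma}_t-\widehat{\lambda}_{f,t}^{\top})$, where $\widehat{\lambda}_{f,t}=\widehat{\lambda}_t+\widehat{\sigma}_t^{-1}\widehat{\nu}_t$ is the market price of risk in $\mathcal{M}_{\widehat{\nu}}$ (here $\widehat{\lambda}_{f,t}=[\lambda_{1,t},\widehat{\lambda}_{2,t}]^{\top}$ since $\widehat{\nu}_{1,t}=0_d$). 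Matching the two integrands and solving the resulting linear system for $\widehat{x}_t$, using invertibility of $\widehat{\sigma}_t$, yields \eqref{eq:genoptrulesfict}; this is the verbatim analogue of the MRT identity preceding \eqref{eq:thm1optrules}.

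The core of the work is to make \eqref{eq:genoptrulesfict} explicit by computing the Malliavin derivative. Since $I,U'\in\mathcal{PC}(\RR_+;\RR)$, the cited Lakner--Slud differentiability gives $X_T^{\mathrm{opt}}=I(\eta^{\mathrm{opt}}Z_T^{\widehat{\nu}},\Pi_T)\in\DD^{1,2}$, and the chain rule produces $\mathcal{D}^W_tX_T^{\mathrm{opt}}=\eta^{\mathrm{opt}}I'_X\,\mathcal{D}^W_tZ_T^{\widehat{\nu}}+I'_Y\,\mathcal{D}^W_t\Pi_T$. I would then differentiate the two exogenous drivers directly from their log-dynamics: $\mathcal{D}^W_t\Pi_T=\Pi_T\big(\int_t^T(\mathcal{D}^W_t\pi_s-\mathcal{D}^W_t\widehat{\xi}_s\cdot\widehat{\xi}_s)\mathrm{d}s+\int_t^T\mathcal{D}^W_t\widehat{\xi}_s\mathrm{d}W_s+\widehat{\xi}_t\big)$, which is precisely the bracket defining $\mathcal{G}^1_{t,T}$ in \eqref{eq:genhedgecoff}, and $\mathcal{D}^W_tZ_T^{\widehat{\nu}}$ from \eqref{eq:sec22prickernpert}, whose bracket (carrying $\mathcal{D}^W_tR_{f,s}$ as in Theorem~\ref{thm1} together with the $\mathcal{D}^W_t\widehat{\lambda}_{f,s}$ terms) is the bracket defining $\mathcal{G}^2_{t,T}$. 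Collecting the $I'_Y\widehat{M_T}$ and $-U'_X/U''_{XX}$ factors into the risk-aversion proxies and passing the conditional expectation through identifies $\mathcal{R}^1_t$ and $\mathcal{R}^2_t$ as the stated previsible projections $\EE[\widehat{\mathcal{R}^i}_{t,T}\mid\eF_t]$.

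Finally I would assemble and block-separate. Substituting the two Malliavin derivatives into \eqref{eq:genoptrulesfict} and regrouping splits $\widehat{x}_t^{\mathrm{opt}}$ into a mean--variance term proportional to $\widehat{\lambda}_{f,t}$ (weighted by $\mathcal{R}^2_t$), a price-index hedge carrying $\mathcal{G}^1_{t,T}$ weighted by $\mathcal{R}^1_t$, and a deflator hedge carrying $\mathcal{G}^2_{t,T}$ weighted by $1-\mathcal{R}^2_t$. Writing $\widehat{x}_t=[x_t,x_{f,t}]^{\top}$ and exploiting $\widehat{\nu}_{1,t}=0_d$ together with the block structure $\widehat{\sigma}_t=\mathrm{diag}(\sigma_t^{S_1},\sigma_t^{S_2})$, the $m$ non-traded rows isolate the $W^2$-direction (primed) blocks of $\mathcal{G}^1_{t,T}$ and $\mathcal{G}^2_{t,T}$ and deliver \eqref{eq:genficweightactu}; the full vector follows by reinstating $\widehat{\sigma}_t$ and the unprimed $\mathcal{G}^1_{t,T},\mathcal{G}^2_{t,T}$, as claimed. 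I expect the main obstacle to be bookkeeping rather than conceptual: correctly tracking the diagonal (``$s=t$'') contributions $\widehat{\xi}_t$ and $\widehat{\sigma}_t^{-1}\widehat{\nu}_t$ that surface when the stochastic integrals are differentiated, and cleanly separating the traded and non-traded blocks so that the $W^2$-direction pieces survive in $x_{f,t}^{\mathrm{opt}}$ while the traded allocation $x_t$ remains unaffected by the free shadow price $\widehat{\nu}_{2,t}$.
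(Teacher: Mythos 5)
Your proposal is correct and follows exactly the route the paper intends: its entire proof of Proposition \ref{prop1} is the single sentence that the results ``emanate trivially from Theorem \ref{thm1},'' and your argument is precisely that specialisation --- Lagrangian/Fr\'echet first-order conditions with the conjugate-inequality sufficiency step, the martingale property of $Z_t^{\widehat{\nu}}X_t^{\mathrm{opt}}$, Clark--Ocone matching of integrands, the Malliavin chain rule yielding the $\mathcal{G}^i_{t,T}$ and $\mathcal{R}^i_t$ terms, and the block separation under $\widehat{\nu}_{1,t}=0_d$ --- carried out for the terminal-wealth problem in $\mathcal{M}_{\widehat{\nu}}$. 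The only cosmetic caveats are that your symbol $M_t$ for the deflated-wealth martingale collides with the paper's notation for the real pricing kernel, and that the $\widehat{\lambda}_t$ appearing in \eqref{eq:genoptrulesfict} should be read (as you do) as the fictitious price of risk $\widehat{\lambda}_{f,t}$.
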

\begin{proof}
The results and corresponding proof emanate trivially from Theorem \ref{thm1}.
\end{proof}

Proposition \ref{prop1} summarises the optimal solutions to the fictitious terminal wealth problem in \eqref{eq:sec33portfolprob}, which amounts to showcasing an adjusted version of combined elements that establish Theorems \ref{thm1} and \ref{thm2}. By implication, the interpretation of results roughly shadows those of the latter two. There are, however, three aspects that are different in view of the artificial market $\mathcal{M}_{\widehat{\nu}}$ incorporating the unspecified shadow prices $\widehat{\nu}_t\in\mathcal{H}_{\widehat{\mathcal{A}}}$: (i) the optimal controls that explicitly depend on $\widehat{\nu}_{2,t}$, (ii) an altered decomposition as in \eqref{eq:sec31decomp}, and (iii) the dormant existence of a $\widehat{\nu}_{2,t}$ such that $x_{f,t}^{\mathrm{opt}}=0_m$. As to item (i):
\begin{equation}
\widehat{x}_t^{\mathrm{opt}}=\left(B_t^{-1}X_t^{\mathrm{opt}}\widehat{\sigma}_t^{\top}\right)^{-1}\EE^{\QQ^{\widehat{\nu}}}\left[\mathcal{D}^W_tB_T^{-1}X_T^{\mathrm{opt}}-B_T^{-1}X_T^{\mathrm{opt}}\int_t^T\mathcal{D}^W_t\widehat{\lambda}_{f,s}\mathrm{d}W_s^{\QQ^{}\widehat{\nu}}\right]
\end{equation}
shows reminiscent of \eqref{eq:sec32explicdepq} that the optimal portfolio decisions depend on the measure $\QQ^{\widehat{\nu}}$ that induces in conjunction with $\widehat{B}_t$ in $\mathcal{M}_{\widehat{\nu}}$ the SPD. Endogeneity of $\widehat{\nu}_{2,t}$ underpinning the non-unicity of this measure $\QQ^{\widehat{\nu}}$ demonstrates that $\widehat{x}_t^{\mathrm{opt}}$ is subject to a potentially preference-specific input. The indefinite shadow prices $\widehat{\nu}_{2,t}$ thus also enjoy the power to alter $\widehat{x}_{f,t}^{\mathrm{opt}}$. The dual problem employs this quality to effectuate $x_{f,t}^{\mathrm{opt}}=0_m$.\footnote{In the third proposition, we show that minimising the dual --- choosing the least-favourable fictitious completion $\mathcal{M}_{\widehat{\nu}}$ among all feasible $\widehat{\nu}_t\in\mathcal{H}_{\widehat{\mathcal{A}}}$ --- indeed ensures $x_{f,t}^{\mathrm{opt}}=0_m$ such that $\widehat{x}_t^{\mathrm{opt}}\in\widehat{\mathcal{A}}'_{X_0}$.}

In reference to item (ii), we imitate the decomposition procedure applied to acquire \eqref{eq:sec31decomp}, which commands that $\widehat{x}_t^{\mathrm{opt}}$ is detachable as  $\widehat{x}_t^{\mathrm{opt}}=\widehat{x}_t^{\mathrm{m}}+\widehat{x}_t^Z+\widehat{x}_t^{\Pi}$, wherein
\begin{equation}\label{eq:sec31decomport}
\begin{aligned}
\widehat{x}^{Z}_t&=-\widehat{\sigma}_t^{\top^{-1}}\frac{\mathcal{R}^2_t}{1-\mathcal{R}_t^2}\EE\left[\widehat{\mathcal{R}^2}_{t,T}\left(-\int_t^T\left[\mathcal{D}^{W}_tR_{f,s}\mathrm{d}s+\mathcal{D}^W_t\widehat{\lambda}_{f,s}\mathrm{d}W_s\right]\right)\cond\eF_t\right]\\ \widehat{x}^{\Pi}_t&=-\widehat{\sigma}_t^{\top^{-1}}\EE\left[\widehat{\mathcal{R}^1}_{t,T}\left(\int_t^T\left(\left[\mathcal{D}^W_t\pi_s-\mathcal{D}^W_t\widehat{\xi}_s\cdot \widehat{\xi}_s\right]\mathrm{d}s+\mathcal{D}^W_t\widehat{\xi}_s\mathrm{d}W_s\right)+\widehat{\xi}_t\right)\cond\eF_t\right],
\end{aligned}
\end{equation}
are respectively fictitious nominal stochastic deflator and commodity price hedges. The first rule concerns the ordinary tangency mean-variance efficient portfolio $\widehat{x}_t^{\mathrm{m}}=\frac{1}{\mathcal{R}_t^2}(\widehat{\sigma_t}^{\top})^{-1}\widehat{\lambda}_t$.\footnote{Cognate to $\widehat{x}_t=\left[x_t,x_{f,t}\right]^{\top}$, the three demands separate into a fictitious `$x_{f,t}$' part and a true `$x_{t}$' part. Changing the Malliavin kernel to $\mathcal{D}^{W^1}_t$ or $\mathcal{D}^{W^2}_t$ in \eqref{eq:sec31decomport} isolates the true from fictitious demands.} The shadow prices $\widehat{\nu}_{2,t}$ affect all three disentangled portfolio demands, due to the fact that $\{\widehat{\lambda}_{f,t}\}_{t\in\left[0,T\right]}$ embeds $\{\widehat{\nu}_{2,t}\}_{t\in\left[0,T\right]}$, and so do the previsible RRA coefficients.\footnote{The two-fund separation principle is by analogy with \eqref{eq:sec31twofsep} as well visible through \eqref{eq:sec31decomport}.} 

Lastly, taking note of $\widehat{\lambda}_{2,t}$'s endogeneity, or coequally $\widehat{\nu}_{2,t}$'s endogeneity, within the expression \eqref{eq:genficweightactu} of Proposition \ref{prop1} for $\widehat{x}_t^{\mathrm{opt}}$, we find that the next identity for $\widehat{\lambda}_{2,t}$
\begin{equation}\label{eq:sec3222dualpar}
\widehat{\lambda}_{2,t}=\left(1-\mathcal{R}^2_t\left(X_T^{\mathrm{opt}},\Pi_T\right)\right)\mathcal{G}'^{2}_{t,T}-\mathcal{R}^2_t\left(X_T^{\mathrm{opt}},\Pi_T\right){\mathcal{R}^1_t\left(X_T^{\mathrm{opt}},\Pi_T\right)}^{-1}\mathcal{G}'^{1}_{t,T}
\end{equation}
gratifies according to \eqref{eq:genficweightactu} to repeal any allocation to $D_t$: contingent on $\widehat{\nu}_{2,t}$ obeying \eqref{eq:sec3222dualpar}, $x_{f,t}^{\mathrm{opt}}=0_m$ holds. Note that the RHS of the latter equality may ingrain $\{ \widehat{\lambda}_{2,t}\}_{t\in\left[0,T\right]}$, instigating non-linearity into \eqref{eq:sec3222dualpar}. In general, the expression for $\widehat{\lambda}_{2,t}$ therefore designates a full backward-forward equation, see \citet{detemple2014portfolio} for similar arguments. As this approach towards collecting $\widehat{\lambda}_{2,t}$ is provisional, we next turn to the duality concepts. 

\begin{proposition}\label{prop2}
Consider the constrained analogue of the portfolio choice problem in \eqref{eq:sec33portfolprob} in the baseline market $\mathcal{M}$. Introduce a strictly positive martingale process
\begin{equation}
\mathrm{d}Y_t = Y_t\left[\beta_{1,t}^{\top}\mathrm{d}{W}^{1}_t+\widehat{\beta}_{2,t}^{\top}\mathrm{d}W^2_t\right]
\end{equation}
for a Lagrange multiplier $Y_0\in\RR_+$, where $\beta_{1,t} = \phi^{M_1}_t-\xi^{\Pi_1}_t$ and $\|\widehat{\beta}_{2,t}\|_{L^1\left(\left[0,T\right]\right)^m}^2\in\DD^{1,2}$. Then, the following inequality holds for any $\widehat{x}_t$ s.t. $x_{f,t}=0_m$, $Y_0$ and $\widehat{\beta}_{2,t}$
\begin{equation}\label{eq:sec33prop2ineq}
\EE\left[U\left(X_T,\Pi_T\right)\right]\leq \EE\left[V\left(Y_TB_T^{-1},\Pi_T\right)\right]+Y_0X_0,
\end{equation}
where $V$ denotes $U$'s convex conjugate. Consequently, the dual variant of the unconstrained dynamic asset allocation problem unfolds itself in the following fashion
\begin{equation}\label{eq:genthm5dual}
\inf_{\widehat{\beta}_{2,t}\in\DD^{1,2}\left(\left[0,T\right]\right)^m, Y_0\in\RR_+}\EE\left[V\left(Y_TB_T^{-1},\Pi_T\right)\right]+Y_0X_0.
\end{equation}
In addition to that, because $x_{t}=0_d$ is feasible, strong duality persists:
\begin{equation}\label{eq:sec33prop2strongdual}
\sup_{\{\widehat{x}_t\}_{t\in\left[0,T\right]}\in\widehat{\mathcal{A}}'_{X_0}}\ \EE\left[U\left({X_T},{\Pi_T}\right)\right]=\inf_{\widehat{\beta}_{2,t}\in\DD^{1,2}\left(\left[0,T\right]\right)^m, Y_0\in\RR_+}\EE\left[V\left(Y_TB_T^{-1},\Pi_T\right)\right]+Y_0X_0.
\end{equation}
Suppose that $\widehat{\beta}_{2,t}^{\mathrm{opt}}$ and $Y_0^{\mathrm{opt}}$ solve \eqref{eq:genthm5dual}. Then, optimal terminal wealth complies with $X_T^{\mathrm{opt}}=I\left(Y_T^{\mathrm{opt}}B_T^{-1},\Pi_T\right)$. Complementary slackness alternatively commands that
\begin{equation}\label{eq:thm2sec3lastequalk}
\sup_{\left\lbrace{x}_t\right\rbrace_{t\in\left[0,T\right]}\in\widehat{\mathcal{A}}'_{X_0}}\int_0^T\left(\EE\left[\widehat{S}_T{Y_T}{B_T}^{-1}\cond\eF_t\right]-\EE\left[Y_T\cond\eF_t\right]{\widehat{S}_t}{B_t}^{-1}\right)^{\top}\mathrm{d}X_t\widehat{S}_t^{-1}\odot\widehat{x}_t<\infty.
\end{equation} 
indicating that $\widehat{\beta}_t=-\widehat{\lambda}_{f,t}$ such that $Y_0^{-1}Y_TB_T^{-1}=B_T^{-1}{\mathrm{d}\QQ^{\widehat{\nu}}}/{\mathrm{d}\PP}$ agrees with $Z_T^{\widehat{\nu}}$ in $\mathcal{M}_{\widehat{\nu}}$.\footnote{The CS condition $\widehat{\beta}_t=-\widehat{\lambda}_{f,t}$ descends from the fact that $x_t\in\RR^d$ $\mathrm{d}t\otimes\PP\text{-a.e.}$, which mandates $\{ Y_t\}_{t\in\left[0,T\right]}$ to operate as $\frac{\mathrm{d}\QQ}{\mathrm{d}\PP}\big|_{\eF_t}$, $0\leq t\leq T$ with regard to $\{ W^1_t\}_{t\in\left[0,T\right]}$. Concurrently, as to $\{ W^2_t\}_{t\in\left[0,T\right]}$, the stricture $x_{f,t}=0_m$ decrees that $\{ Y_t\}_{t\in\left[0,T\right]}$ is unrestrained when it comes to pricing $D_t$.}
\end{proposition}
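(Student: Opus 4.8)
The plan is to obtain weak duality from the Fenchel inequality attached to the convex conjugate, reduce it to a supermartingale estimate, then upgrade to strong duality by an attainment argument, and finally read off optimal wealth and the complementary-slackness identification of the deflator.

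First I would invoke the pointwise Fenchel--Young inequality implicit in $V\left(x,y\right)=\sup_{z>0}\left(U\left(z,y\right)-xz\right)$: for all $x,y,z\in\RR_+$ one has $U\left(z,y\right)\le V\left(x,y\right)+xz$, with equality exactly at $z=I\left(x,y\right)$. Evaluating at $z=X_T$, $y=\Pi_T$, $x=Y_TB_T^{-1}$ and taking expectations reduces the asserted bound \eqref{eq:sec33prop2ineq} to showing $\EE\big[X_TY_TB_T^{-1}\big]\le X_0Y_0$ for every admissible $\widehat{x}_t$ with $x_{f,t}=0_m$. To that end, set $\zeta_t:=Y_tB_t^{-1}$; since $B_t$ is of finite variation and $Y_t$ is a martingale, It\^o's rule gives
\[
\mathrm{d}\zeta_t=\zeta_t\left[-R_{f,t}\mathrm{d}t+\beta_{1,t}^{\top}\mathrm{d}W^1_t+\widehat{\beta}_{2,t}^{\top}\mathrm{d}W^2_t\right].
\]
Because $x_{f,t}=0_m$, the wealth dynamics in $\mathcal{M}$ carry no $W^2$-diffusion, so the cross-variation of $X_t$ and $\zeta_t$ runs only along $W^1$, and the product rule yields for $X_t\zeta_t$ a drift proportional to $x_t^{\top}\sigma_t^{S_1}\big(\lambda_{1,t}+\beta_{1,t}\big)$. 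Since the proposition fixes $\beta_{1,t}=\phi^{M_1}_t-\xi^{\Pi_1}_t=-\lambda_{1,t}$, this drift vanishes identically, whence $X_t\zeta_t$ is a nonnegative local martingale, hence a supermartingale under the admissibility conditions, and $\EE\big[X_T\zeta_T\big]\le X_0\zeta_0=X_0Y_0$. This is precisely the mechanism: $\beta_{1,t}$ is pinned down to annihilate the $W^1$-drift for every feasible $x_t$, while $\widehat{\beta}_{2,t}$ stays free because $X_t$ bears no $W^2$-exposure. Inequality \eqref{eq:sec33prop2ineq} follows, and minimizing the right-hand side over $\big(\widehat{\beta}_{2,t},Y_0\big)$ delivers the dual \eqref{eq:genthm5dual}.

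For the strong-duality assertion \eqref{eq:sec33prop2strongdual} I would argue that the gap closes. As the origin $x_t=0_d$ is admissible, the primal is feasible and the constraint set is non-empty; under the reasonable asymptotic elasticity hypothesis on $U$, the convex-duality theory of \citet{cvitanic1992convex} and \citet{kramkov1999asymptotic} guarantees that the dual infimum is attained at some $\big(\widehat{\beta}_{2,t}^{\mathrm{opt}},Y_0^{\mathrm{opt}}\big)$ and equals the primal supremum. Concretely, I would verify that both inequalities above become equalities at this minimizer: the Fenchel step is tight when $X_T=I\big(Y_T^{\mathrm{opt}}B_T^{-1},\Pi_T\big)$, which simultaneously identifies optimal terminal wealth $X_T^{\mathrm{opt}}$, and the supermartingale step is tight when $X_t\zeta_t$ is a genuine martingale and the budget binds, fixing $Y_0^{\mathrm{opt}}=\widehat{\mathcal{H}}^{-1}\left(X_0\right)$ exactly as in Theorem \ref{thm2}.

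It then remains to extract the complementary-slackness identification. For $\zeta_t=Y_tB_t^{-1}$ to serve as state-price density in $\mathcal{M}_{\widehat{\nu}}$ it must render every fictitious asset a $\PP$-martingale: enforcing a vanishing drift on $S_t\zeta_t$ reproduces $\beta_{1,t}=-\lambda_{1,t}$, whereas enforcing it on the fictitious non-traded asset $D_t\zeta_t$ forces $\widehat{\beta}_{2,t}=-\lambda_{2,t}-{\sigma_t^{S_2}}^{-1}\widehat{\nu}_{2,t}=-\widehat{\lambda}_{2,t}$; together these give $\widehat{\beta}_t=-\widehat{\lambda}_{f,t}$, so that $Y_0^{-1}Y_TB_T^{-1}=B_T^{-1}\mathrm{d}\QQ^{\widehat{\nu}}/\mathrm{d}\PP$ coincides with $Z_T^{\widehat{\nu}}$. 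The finiteness of the complementary-slackness functional \eqref{eq:thm2sec3lastequalk} is exactly the statement that this mispricing vanishes in the traded ($W^1$) directions while $\widehat{\beta}_{2,t}$ remains unconstrained in the non-traded ($W^2$) directions, in line with the accompanying footnote. I expect the genuine obstacle to be strong duality, i.e. the equality in \eqref{eq:sec33prop2strongdual} together with attainment of the dual infimum: weak duality is the one-line Fenchel-plus-supermartingale computation above, but closing the gap and securing existence of the dual optimizer rests squarely on the asymptotic-elasticity condition and the abstract convex-duality machinery.
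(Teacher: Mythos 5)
Your proof is correct, but it reaches the result by a genuinely different mechanism than the paper. The paper proves Proposition \ref{prop2} simply as a specialization of Theorem \ref{thm2}: there, a Lagrangian with a stochastic multiplier process $Y_t$ is formed (following \citet{rogers2001duality} and \citet{KleinRoger:2007:Dualityinoptimal}), the first-order conditions in the $X_T$- and $c_t$-directions are taken as Fr\'{e}chet derivatives, and complementary slackness produces the support-function drift restriction $-\alpha_{Y,t}=\delta\big(\widehat{\beta}_t\big)$; for $K=\RR^d\times\{0\}^m$ that support function is finite only if $\beta_{1,t}=-\lambda_{1,t}$, in which case it vanishes, so the driftless martingale $Y_t$ of the proposition drops out of the general constrained machinery. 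You instead run the classical martingale-duality argument: the Fenchel--Young inequality $U\left(X_T,\Pi_T\right)\le V\left(Y_TB_T^{-1},\Pi_T\right)+X_TY_TB_T^{-1}$, followed by the observation that with $x_{f,t}=0_m$ and $\beta_{1,t}=-\lambda_{1,t}$ the deflated wealth $X_tY_tB_t^{-1}$ is a nonnegative local martingale, hence a supermartingale, so $\EE\big[X_TY_TB_T^{-1}\big]\le X_0Y_0$. Your route is more elementary and self-contained for weak duality, and it makes transparent exactly why the $W^1$-loading of $Y_t$ is pinned down while $\widehat{\beta}_{2,t}$ stays free (the constrained wealth process carries no $W^2$-exposure); the paper's route buys generality and consistency, since the same Lagrangian computation covers arbitrary convex $K$ and intermediate consumption, and delivers the complementary-slackness condition \eqref{eq:thm2sec3lastequalk} as a by-product rather than as a separate pricing argument. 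On the points that carry the real weight --- strong duality, attainment of the dual infimum, and the identification $\widehat{\beta}_t=-\widehat{\lambda}_{f,t}$ so that $Y_0^{-1}Y_tB_t^{-1}$ agrees with $Z_t^{\widehat{\nu}}$ --- you and the paper lean on the same external results (\citet{cvitanic1992convex}, \citet{kramkov1999asymptotic} via the asymptotic-elasticity condition), so your proposal matches the paper's own level of rigor there, and your closing admission that this is where the genuine difficulty sits is accurate.
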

\begin{proof}
The proof for this proposition arises out of the one for Theorem \ref{thm2}.
\end{proof}

Proposition \ref{prop2} encompasses in conformity with the results in Theorem \ref{thm2} the dual problem corresponding to its primal congener, which is essential to the fictitious specification in \eqref{eq:sec33portfolprob}. In effect, the artificial market is a by-product of the dual concretisation and assists in reconciling the baseline constrained market $\mathcal{M}$ with this fictitious economy $\mathcal{M}_{\widehat{\nu}}$. Consistent with Theorem \ref{thm2}, in the present case, $\mathcal{M}_{\widehat{\nu}}$ finds its definition from the CS condition \eqref{eq:thm2sec3lastequalk} via which $Y_0^{-1}Y_t=\mathrm{d}\QQ^{\widehat{\nu}}/\mathrm{d}\PP|_{\eF_t}$ ought to be true. Hence, with due regard for the num\'{e}raire-based avenue in \eqref{eq:sec32rewrconstr}, the dual implicates minimisation over
\begin{equation}\label{eq:sec3separablith}
Y_0^{-1}Y_t=Z_t^{\widehat{\nu}}B_t=\mathcal{E}\left(\lambda_{1,t}^{\top}\right)\exp\left(-\frac{1}{2}\int_0^T\widehat{\lambda}_{2,t}^{\top}\widehat{\lambda}_{2,t}\mathrm{d}t-\int_0^T\widehat{\lambda}_{2,t}\mathrm{d}W^2_t\right)
\end{equation}
for unspecified $\widehat{\nu}_t\in\mathcal{H}_{\widehat{\mathcal{A}}}$, which confirms that the dual objective unifies itself with opting for a probability measure in an effort to malignly affect utility levels and therefore to make trades in $D_t$ utmost unappealing. The choice for this pricing measure namely acts via $\big\{\widehat{\lambda}_{2,t}\big\}_{t\in\left[0,T\right]}$ purely upon the pricing of non-traded uncertainty $\left\lbrace W^2_t\right\rbrace_{t\in\left[0,T\right]}$.

To unfold this phenomenon, let us define $\mathcal{M}^{\beta_2}:=\{\QQ^{\beta_{2}}\sim\PP \mid \EE^{\QQ^{\beta_{2}}}\left[B_t^{-1}S_t\cond\eF_s\right]=B_s^{-1}S_s,\ \EE^{\QQ^{\beta_{2}}}\left[B_t^{-1}D_t\cond\eF_s\right]-B_s^{-1}D_s\in\RR \ \mathrm{d}t\otimes\PP\text{-a.e.}, \ \forall \ 0\leq s\leq t\leq T\}$. Consequently,
\begin{equation}\label{eq:gensec5substdualprob}
\inf_{\QQ^{{\beta_2}}\in\mathcal{M}^{\beta_2},\eta\in\RR_+}\EE\left[V\left(\eta B_T^{-1}\frac{\mathrm{d}\QQ^{\beta_{2}}}{\mathrm{d}\PP},\Pi_T\right)\right]+\eta X_0,
\end{equation}
for Lagrange multiplier $\eta:=Y_0$, unwinds a substitute for the dual in \eqref{eq:genthm5dual}, underlining the separability of the controllable Radon-Nikodym derivative from the effectively exogenous nominal interest rate.\footnote{Observe that \eqref{eq:gensec5substdualprob} lines as in \eqref{eq:sec32dualversusficcompl} up with choosing the least-favourable completion.} Assume that $\QQ^{\beta_2^{\mathrm{opt}}}\in\mathcal{M}^{\beta_2}$ is optimal to \eqref{eq:gensec5substdualprob}, then $\widehat{M_T}=\exp\big(-\int_0^T\left(R_{f,s}-\pi_s-(\widehat{\beta}_s^{\mathrm{opt}})^{\top}\widehat{\xi}_s\right)\mathrm{d}s\big)\mathcal{E}\big(\big[\phi^{M_1}_t,\widehat{\beta}^{\mathrm{opt}}_{2,t}+\xi^{\Pi_2}_t\big]^{\top}\big)$ defines the analogue of the pricing kernel $M_T$ under $x_{f,t}=0_m$. Hence, whereas $R_{f,t}$ remains exogenous, $r_t$ in the constrained market $\mathcal{M}$ contours \textit{ex-post} an effectively endogenous process.

Lastly, conducive to the approximating dual control mechanism, let us derive that
\begin{equation}\label{eq:sec33ineqnew}
\sup_{\left\lbrace\widehat{x}_t\right\rbrace_{t\in\left[0,T\right]}\in\widehat{\mathcal{A}}_{X_0}'}\EE\left[U\left(X_T,\Pi_T\right)\right]\leq \inf_{\widehat{\beta}_{2,t}\in \mathcal{P},Y_0\in\RR_+}\EE\left[V\left(Y_TB_T^{-1},\Pi_T\right)\right]+X_0Y_0
\end{equation}
where $\mathcal{P}\subseteq\DD^{1,2}\left(\left[0,T\right]\right)^m$ is closed and convex. This inequality ascertains that the repression of $\DD^{1,2}\left(\left[0,T\right]\right)^m$ to $\mathcal{P}$ begets an upper bound on the optimal value function in virtue of \eqref{eq:sec33prop2ineq} and strong duality \eqref{eq:sec33prop2strongdual}. In the same way, the restriction to $\mathcal{P}$ may procreate a lower bound in the primal problem.\footnote{The set $\widehat{\mathcal{A}}_{X_0}'$ reduces to one that complies with all $\widehat{\lambda}_{2,t}\in \mathcal{P}$. The investor is accordingly incapable of optimally protecting him or herself against injurious shifts in $\left\lbrace W^2_t\right\rbrace_{t\in\left[0,T\right]}$, implying such a lower bound.} Combining these two outcomes, contraction of $\DD^{1,2}\left(\left[0,T\right]\right)^m$ raises lower and upper bounds on the optimal value function.

We complete the exposition of this theoretical example by solving the dual \eqref{eq:gensec5substdualprob}. In light of this, let  $\mathcal{W}\big(\big\{\widehat{\beta}_{2,t}\big\},Y_0\big)=\EE\left[V\left(Y_TB_T^{-1},\Pi_T\right)\right]+Y_0X_0$, and discern that 
\begin{equation}
\begin{aligned}
\Delta\mathcal{W}=\EE\bigg[V_X'\bigg(\frac{Y_T}{B_T},\Pi_T\bigg)\frac{Y_T}{B_T}\int_0^T\left(\phi_t^{\top}\mathrm{d}W^2_t-\widehat{\beta}_{2,t}^{\top}\phi_t\mathrm{d}t\right)\bigg]+\Oh\Big(\left\Vert\phi_t\right\Vert_{\DD^{1,2}\left(\left[0,T\right]\right)^m}\Big),
\end{aligned}
\end{equation}
where $\mathcal{W}:\DD^{1,2}\left(\left[0,T\right]\right)^m\times \RR_+$ is the unconstrained objective in consonance with \eqref{eq:gensec5substdualprob}, $\phi_t$ and $\widehat{\beta}_{2,t}$  are such that $\phi_t,\widehat{\beta}_{2,t}+\phi_t\in \DD^{1,2}\left(\left[0,T\right]\right)^m$, and $V_{X}'$ denotes the derivative of $V$ in its fist argument. Further, we let $\mathcal{O}$ denote the Landau function. Then, $\Delta \mathcal{W}$ describes the effect on $\mathcal{W}$ due to a small perturbation $\phi_t$ on $\widehat{\beta}_{2,t}$. Applying the argument that small perturbations around the optimal controls must have insignificant effects on $\mathcal{W}$, setting $\Delta \mathcal{W}$ equal to zero recovers the optimal $\widehat{\beta}_{2,t}$.\footnote{We interchangeably use $\widehat{\beta}_{2,t}$ and $-\widehat{\lambda}_{2,t}$ as well as $Z_t^{\widehat{\nu}}$ and $Y_0^{-1}Y_tB_t^{-1}$; these unequivocally are equivalent.} We formalise this in Proposition \ref{prop3}.
\begin{proposition}\label{prop3}
Consider the dual optimisation problem \eqref{eq:genthm5dual} in compliance with the fictitious specification \eqref{eq:sec33portfolprob} for the optimal allocation to assets. Then,
\begin{equation}\label{eq:lambdaucharopt1}
\begin{aligned}
\widehat{\lambda}_{2,t}^{\mathrm{opt}}&=-\EE\left[B_T^{-1}Y_T^{\mathrm{opt}}X_T^{\mathrm{opt}}\cond\eF_t\right]^{-1}\EE\left[\mathcal{D}^{W^2}_tB_T^{-1}Y_T^{\mathrm{opt}}X_T^{\mathrm{opt}}\cond\eF_t\right]\\&=\left(1-\mathcal{R}^2_t\left(X_T^{\mathrm{opt}},\Pi_T\right)\right)\mathcal{G}'^{2}_{t,T}-\mathcal{R}^2_t\left(X_T^{\mathrm{opt}},\Pi_T\right){\mathcal{R}^1_t\left(X_T^{\mathrm{opt}},\Pi_T\right)}^{-1}\mathcal{G}'^{1}_{t,T}
\end{aligned}
\end{equation}
itemises an equality from which we can extract $\widehat{\lambda}_{2,t}^{\mathrm{opt}}$ that optimises the dual. Further, 
\begin{equation}\label{eq:sec33thm3lagrmult}
\widehat{\mathcal{H}}\left(\eta^{\mathrm{opt}}\right)=\EE\big[I\big(\eta^{\mathrm{opt}}Z_T^{\widehat{\nu}},\Pi_T\big)Z_T^{\widehat{\nu}}\big]=X_0
\end{equation}
qualifies the identity from which we are able to distil the unique multiplier $\eta^{\mathrm{opt}}=Y_0=\widehat{\mathcal{H}}^{-1}\left(X_0\right)$.  The optimal portfolio rules concerning the allocation to $S_t$ obey to
\begin{equation}\label{eq:genappthm6lastrules}
\begin{aligned}
{x}_t^{\mathrm{opt}}&={{{\sigma}_t^{S_1}}^{\top}}^{-1}\Big(\mathcal{R}^1_t\left(I'_Y\left(\eta Z_T^{\widehat{\nu}},\Pi_T\right),\Pi_T\right)^{-1}\mathcal{G}^1_{x,t,T}\\&+\left(1-\mathcal{R}^2_t\left(I\left(\eta Z_T^{\widehat{\nu}},\Pi_T\right),\Pi_T\right)^{-1}\right)\mathcal{G}^{2}_{x,t,T}+\mathcal{R}^2_t\left(I\left(\eta Z_T^{\widehat{\nu}},\Pi_T\right),\Pi_T\right)^{-1}{\lambda}_{1,t}\Big),
\end{aligned}
\end{equation}
in addition to $x_{f,t}=0_m$. The other conditions arising out of Proposition \ref{prop1} remain in force. In harmony with \eqref{eq:sec31decomport}, $x_t^{\mathrm{opt}}=x_t^{m,\mathrm{opt}}+x_t^{Z,\mathrm{opt}}+x_t^{\Pi,\mathrm{opt}}$ holds, in which\footnote{We employ the definitions for ${\mathcal{R}^1}_{t}$ and ${\mathcal{R}^2}_{t}$ along with $\widehat{\mathcal{R}^1}_{t,T}$ and $\widehat{\mathcal{R}^2}_{t,T}$ from Proposition \ref{prop1}.}
\begin{equation}
{{\sigma}_t^{S_1}}^{\top}{x}^{Z,\mathrm{opt}}_t=-\frac{\mathcal{R}^2_t}{1-\mathcal{R}_t^2}\EE\left[\widehat{R^2}_{t,T}\left(-\int_t^T\left[\mathcal{D}^{W^1}_tR_{f,s}\mathrm{d}s+\mathcal{D}^{W^1}_t\widehat{\lambda}_{f,s}\mathrm{d}W_s^{\QQ^{\widehat{\nu}}}\right]\right)\cond\eF_t\right]
\end{equation}
characterises the identity for the stochastic deflator hedge, and where
\begin{equation}
{{\sigma}_t^{S_1}}^{\top}{x}^{\Pi,\mathrm{opt}}_t=-\EE\left[\widehat{\mathcal{R}^1}_{t,T}\left(\int_t^T\left(\mathcal{D}^{W^1}_t\pi_s\mathrm{d}s+\mathcal{D}^{W^1}_t\widehat{\xi}_s\left(\mathrm{d}W_s-\widehat{\xi}_s\mathrm{d}s\right)\right)+\widehat{\xi}_t\right)\cond\eF_t\right]
\end{equation}
defines the equality from where we deduce the commodity price hedge; $x_t^{m,\mathrm{opt}}={\mathcal{R}_t^2}^{-1}{{\sigma_t^{S_1}}^{\top}}^{-1}\lambda_{1,t}$ spells out the mean-variance portfolio. More, the hedging coefficients read
\begin{equation}\label{eq:genappg1g2}
\begin{aligned}
\mathcal{G}^{2}_{x,t,T}&=\mathcal{R}^2_t\EE\left[-\widehat{\mathcal{R}^2}_{t,T}\left(-\int_t^T\left[\mathcal{D}^{W^1}_t\left(R_{f,s}\mathrm{d}s+\widehat{\lambda}_{f,s}\mathrm{d}W_s^{\QQ^{\widehat{\nu}}}\right)\right]\right)\cond\eF_t\right]\\\mathcal{G}^{1}_{x,t,T}&=\mathcal{R}^1_t\EE\left[-\widehat{\mathcal{R}^1}_{t,T}\left(\int_t^T\left(\mathcal{D}^{W^1}_t\pi_s\mathrm{d}s+\mathcal{D}^{W^1}_t\widehat{\xi}_s\left(\mathrm{d}W_s-\widehat{\xi}_s\mathrm{d}s\right)\right)+\xi_t^{\Pi_1}\right)\cond\eF_t\right]
\end{aligned}
\end{equation}
provoking the portfolio rules in \eqref{eq:genappthm6lastrules}, where $\mathcal{D}^{W^1}_tR_{f,s}=\mathcal{D}^{W^1}_tr_s+\mathcal{D}^{W^1}_t\pi_s-\mathcal{D}^{W^1}_t\lambda_{s}^{\top}\widehat{\xi}_s$.\footnote{Optimality of $\widehat{\lambda}_{f,t}$ depends on $K$'s definition and is thus case specific. Hence, we provide the proof.}
\end{proposition}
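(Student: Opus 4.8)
The plan is to characterise the dual minimiser through its first-order condition, translate that condition into explicit Malliavin form via integration by parts, and then identify it with the primal requirement $x_{f,t}^{\mathrm{opt}}=0_m$ already embedded in Proposition \ref{prop1}. I start from the first-variation expression for $\Delta\mathcal{W}$ displayed immediately above the statement. Because $V$ is the convex conjugate of $U$, the envelope relation gives $V_X'\big(Y_TB_T^{-1},\Pi_T\big)=-I\big(Y_TB_T^{-1},\Pi_T\big)=-X_T^{\mathrm{opt}}$, so that $V_X'\big(Y_TB_T^{-1},\Pi_T\big)Y_TB_T^{-1}=-B_T^{-1}Y_TX_T^{\mathrm{opt}}$. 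Writing $F:=B_T^{-1}Y_T^{\mathrm{opt}}X_T^{\mathrm{opt}}$, the stationarity requirement $\Delta\mathcal{W}=0$ for every admissible perturbation $\phi_t$ with $\phi_t,\widehat{\beta}_{2,t}+\phi_t\in\DD^{1,2}\left(\left[0,T\right]\right)^m$ collapses to $\EE\big[F\int_0^T\big(\phi_t^{\top}\mathrm{d}W^2_t-\widehat{\beta}_{2,t}^{\top}\phi_t\mathrm{d}t\big)\big]=0$.

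Next I would invoke the duality between the Skorokhod integral and the Malliavin derivative (Theorem \ref{skorokhod}), which for the adapted integrand $\phi_t$ yields $\EE\big[F\int_0^T\phi_t^{\top}\mathrm{d}W^2_t\big]=\EE\big[\int_0^T(\mathcal{D}^{W^2}_tF)^{\top}\phi_t\mathrm{d}t\big]$. Taking previsible projections and using that $\phi_t,\widehat{\beta}_{2,t}$ are $\eF_t$-measurable, the stationarity identity becomes $\EE\big[\int_0^T\big(\EE\left[\mathcal{D}^{W^2}_tF\cond\eF_t\right]-\EE\left[F\cond\eF_t\right]\widehat{\beta}_{2,t}\big)^{\top}\phi_t\mathrm{d}t\big]=0$. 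Since $\phi_t$ ranges over a dense class, the bracketed term must vanish $\mathrm{d}t\otimes\PP\text{-a.e.}$; recalling $\widehat{\beta}_{2,t}=-\widehat{\lambda}_{2,t}$ delivers precisely the first equality in \eqref{eq:lambdaucharopt1}. Convexity of the dual objective together with the strong duality established in Proposition \ref{prop2} guarantees that this stationary point is the global minimiser, so the condition is sufficient as well as necessary.

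The second equality in \eqref{eq:lambdaucharopt1} I would obtain without recomputing $\mathcal{D}^{W^2}_tF$ from scratch: the condition just derived is identical to \eqref{eq:sec3222dualpar}, which is exactly the requirement that the $W^2$-block of the fictitious portfolio \eqref{eq:genoptrulesfict} — namely $x_{f,t}^{\mathrm{opt}}$ in \eqref{eq:genficweightactu} — be annihilated (the constant $Y_0^{-1}$ cancels between numerator and denominator, so $F$ may be replaced by $Z_T^{\widehat{\nu}}X_T^{\mathrm{opt}}$). Evaluating $\mathcal{D}^{W^2}_t\big(Z_T^{\widehat{\nu}}X_T^{\mathrm{opt}}\big)$ through the product and chain rules, as already carried out in Proposition \ref{prop1}, expresses the ratio in terms of the previsible RRA proxies $\mathcal{R}^1_t,\mathcal{R}^2_t$ and the hedging coefficients $\mathcal{G}'^1_{t,T},\mathcal{G}'^2_{t,T}$, giving the stated form. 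Hence minimising the dual and enforcing $x_{f,t}^{\mathrm{opt}}=0_m$ coincide.

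Finally, the multiplier equation \eqref{eq:sec33thm3lagrmult} follows from stationarity in the scalar argument: since $Y_TB_T^{-1}=Y_0Z_T^{\widehat{\nu}}$, differentiating $\mathcal{W}$ in $Y_0$ and using the envelope relation gives $\partial_{Y_0}\mathcal{W}=-\EE\big[I\big(Y_0Z_T^{\widehat{\nu}},\Pi_T\big)Z_T^{\widehat{\nu}}\big]+X_0=0$, i.e. $\widehat{\mathcal{H}}(\eta^{\mathrm{opt}})=X_0$ with $\eta^{\mathrm{opt}}=Y_0$. The portfolio rules \eqref{eq:genappthm6lastrules} and their decomposition into mean-variance, deflator and price-index hedges are read off from Proposition \ref{prop1}: restricting $\widehat{x}_t^{\mathrm{opt}}$ to the $W^1$-block (since $S_t$ loads only on $W^1_t$), setting $x_{f,t}=0_m$, inserting $\widehat{\lambda}_{2,t}^{\mathrm{opt}}$, and rewriting the stochastic integrals under $\QQ^{\widehat{\nu}}$ via $\mathrm{d}W_s=\mathrm{d}W^{\QQ^{\widehat{\nu}}}_s-\widehat{\lambda}_{f,s}\mathrm{d}s$ produces $\mathcal{G}^1_{x,t,T},\mathcal{G}^2_{x,t,T}$ and the claimed expressions. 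The main obstacle is the rigorous justification of the variational step: interchanging differentiation with the expectation, controlling the remainder $\Oh(\|\phi_t\|_{\DD^{1,2}([0,T])^m})$ so that $\Delta\mathcal{W}=0$ is genuinely the optimality condition, and verifying the $\DD^{1,2}$-regularity required for the Malliavin integration by parts. This is secured by the standing assumptions ($\widehat{\beta}_{2,t},r_t,\widehat{\phi}_t\in\DD^{1,2}$ and $I(F,G)\in\DD^{1,2}$ for $F,G\in\DD^{1,2}$) and by the convexity that upgrades the stationary point to the global minimiser.
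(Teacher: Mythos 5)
Your proposal is correct and follows essentially the same route as the paper's proof: a first-variation/Fréchet-derivative condition on the dual objective, the envelope relation $V'_X=-I$, the Skorokhod (Hermitian adjoint) duality plus previsible projection to obtain $\widehat{\beta}_{2,t}^{\mathrm{opt}}=\EE\big[X_T^{\mathrm{opt}}Z_T^{\widehat{\nu}}\cond\eF_t\big]^{-1}\EE\big[\mathcal{D}^{W^2}_tX_T^{\mathrm{opt}}Z_T^{\widehat{\nu}}\cond\eF_t\big]$, the product/chain-rule expansion from Proposition \ref{prop1} for the second equality, stationarity in $Y_0$ for the multiplier, and insertion into Proposition \ref{prop1} with strong duality for the portfolio rules. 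The only cosmetic difference is that you argue via density of the perturbation class where the paper invokes the Riesz--Fréchet representation theorem, which is the same step in different clothing.
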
 
\begin{proof}
Reconsider the objective function in \eqref{eq:genthm5dual}, delimited by $\mathcal{W}:\DD^{1,2}\left(\left[0,T\right]\right)^m\times\RR_+\rightarrow\RR$. Notice that $\DD^{1,2}\left(\left[0,T\right]\right)^m\subset L^2\left(\Omega\times\left[0,T\right]\right)^m$. Let $\psi_t\in \DD^{1,2}\left(\left[0,T\right]\right)^m$, then:
\begin{equation}
D_{\left\lbrace\widehat{\beta}_{2,t}\right\rbrace}\mathcal{W}\left\lbrace\psi_t\right\rbrace=\EE\left[V'_X\left(\frac{Y_T}{B_T},\Pi_T\right)\frac{Y_T}{B_T}\left\lbrace-\int_0^T\widehat{\beta}_{2,t}^{\top}\psi_t\mathrm{d}t+\int_0^T\psi_t^{\top}\mathrm{d}W^2_t\right\rbrace\right]=0.
\end{equation}

The foregoing specifies the Fr\'{e}chet derivative and ought to equal $0$. We simplify
\begin{equation}\label{eq:genappcfocagaindwnn}
\begin{aligned}
D_{\left\lbrace\widehat{\beta}_{2,t}\right\rbrace}\mathcal{W}\left\lbrace\psi_t\right\rbrace&=-\EE\Bigg[\int_0^TX_T^{\mathrm{opt}}Z_T^{\widehat{\nu}}\left(\delta\left(\psi_t\right)-\beta_{2,t}^{\top}\psi_t\mathrm{d}t\right)\Bigg]\\&=-\int_0^T\EE\bigg[\psi_t^{\top}\EE\left[\mathcal{D}^{W^2}_tX_T^{\mathrm{opt}}Z_T^{\widehat{\nu}}-\widehat{\beta}_{2,t}X_T^{\mathrm{opt}}Z_T^{\widehat{\nu}}\cond\eF_t\right]\bigg]\mathrm{d}t=0,
\end{aligned}
\end{equation}
where we use $X_T^{\mathrm{opt}}=I\left(B_T^{-1}Y_T,\Pi_T\right)$, $Z_T^{\widehat{\nu}}=B_T^{-1}Y_T$, and the Hermitian adjoint result. As a consequence, $\big\langle\EE\big[\mathcal{D}^{W^2}_tI\big(\frac{Y_T}{B_T},\Pi_T\big)\frac{Y_T}{B_T}-\widehat{\beta}_{2,t}I\big(\frac{Y_T}{B_T},\Pi_T\big)\frac{Y_T}{B_T} \ \big| \ \eF_t\big],\psi_t\big\rangle_{L^2\left(\Omega\times\left[0,T\right]\right)}=0$. By virtue of the Riesz-Fr\'{e}chet representation theorem, given $\mathcal{R}_{x,T}:=I'_X(\eta Z_T^{\widehat{\nu}},\Pi_T)\eta Z_T^{\widehat{\nu}}$, we derive 
\begin{equation}
\begin{aligned}
\widehat{\beta}_{2,t}^{\mathrm{opt}}&=\EE\left[X_T^{\mathrm{opt}}Z_T^{\widehat{\nu}}\cond\eF_t\right]^{-1}\EE\left[\mathcal{D}^{W^2}_tX_T^{\mathrm{opt}}Z_T^{\widehat{\nu}}\cond\eF_t\right]=\EE\left[X_T^{\mathrm{opt}}Z_T^{\widehat{\nu}}\cond\eF_t\right]^{-1}\\&\times\EE\left[\left(\mathcal{R}_{x,T}+X_T^{\mathrm{opt}}\right){\mathcal{D}^{W^2}_tZ_T^{\widehat{\nu}}}+ I'_Y\left(\eta Z_T^{\widehat{\nu}},\Pi_T\right)\widehat{M_T}{\mathcal{D}^{W^2}_t\log\left(\Pi_T\right)}\cond\eF_t\right]
\end{aligned}
\end{equation}
leading to $\widehat{\beta}_{2,t}^{\mathrm{opt}}=-\left(1-\mathcal{R}^2_t\left(X_T^{\mathrm{opt}},\Pi_T\right)\right)\mathcal{G}'^{2}_{t,T}+\mathcal{R}^2_t\left(X_T^{\mathrm{opt}},\Pi_T\right){\mathcal{R}^1_t\left(X_T^{\mathrm{opt}},\Pi_T\right)}^{-1}\mathcal{G}'^{1}_{t,T}$. 

The multiplier $Y_0^{\mathrm{opt}}\in\RR_+$ induces $X_T^{\mathrm{opt}}=I\left(\widehat{\mathcal{H}}^{-1}\big(X_0\right)Y_0^{-1}{Y_T}^{\mathrm{opt}}{B_T}^{-1},\Pi_T\big)$ from
\begin{equation}\label{eq:lagrmultfrechindal}
\begin{aligned}
-D_{Y_0}\mathcal{W}\kappa&=\EE\left[\left(I\left(Y_0Z_T^{\widehat{\nu}},\Pi_T\right)Z_T^{\widehat{\nu}}-X_0\right)\kappa\right]=\big\langle X_T^{\mathrm{opt}}Z_T^{\widehat{\nu}}-X_0,\kappa\big\rangle_{L^2\left(\Omega\right)}=0,
\end{aligned}
\end{equation}
for all $\kappa\in\RR_+$. Then, $\widehat{\beta}_{2,t}^{\mathrm{opt}}$ ensures that $\{\widehat{x}_t^{\mathrm{opt}}\}_{t\in\left[0,T\right]}\in\widehat{\mathcal{A}}_{X_0}'$ such that $X_T^{\mathrm{opt}}$ is both feasible and optimal in the constrained market $\mathcal{M}$, attributable to strong duality \eqref{eq:sec33prop2strongdual}.\footnote{Principally, Proposition \ref{prop3} develops from  inserting $\{\widehat{\beta}_t^{\mathrm{opt}}\}_{t\in\left[0,T\right]}$ into Proposition \ref{prop1}. }  
\end{proof}
The above theorem extricates two ingredients in connection with optimal solutions to the constrained utility maximisation problem in $\mathcal{M}$: (i) the choice for $\widehat{\nu}_{2,t}$ so that the allocation to $D_t$ abrogates in Proposition \ref{prop1}, and (ii) the optimality criteria imposed on the primal controls $\widehat{x}_t$ and $X_T$ that correspond to this expression for $\widehat{\nu}_{2,t}^{\mathrm{opt}}$. As for the former item, we note that the optimal shadow prices $\widehat{\nu}_{2,t}^{\mathrm{opt}}$ result from accurately minimising the dual. Moreover, its equivalence with \eqref{eq:sec3222dualpar} is unambiguous. As to the last element, we observe in consideration of \eqref{eq:sec33portfolprob} that the static problem in $\mathcal{M}$ defers to\footnote{Employing the solution techniques from Theorem \ref{thm1} and Proposition \ref{prop1} yields Proposition \ref{prop3}.}
\begin{equation}
\sup_{X_T\in L^2\left(\Omega\right)}\EE\left[U\left(X_T,\Pi_T\right)\right] \quad \mathrm{s.t.} \quad \EE\left[X_TZ_T^{\widehat{\nu}}\right]\leq X_0, \quad \widehat{\nu}_{2,t}=\widehat{\nu}_{2,t}^{\mathrm{opt}}.
\end{equation}

The preceding three propositions place a strong emphasis on the linking nature of the dual between the constrained primal problem in $\mathcal{M}$ and the unconstrained fictitious specification in $\mathcal{M}_{\widehat{\nu}}$ for the case of non-traded risk.\footnote{For restrictions that differ from $K=\RR^d\times\left\lbrace0\right\rbrace^m$, consider for instance \citet{cuoco1997optimal} or \citet{tepla2000optimal}.} In particular, for each choice of $\widehat{\nu}_t\in\mathcal{H}_{\widehat{\mathcal{A}}}$, we acquire by means of the dual an artificial market that assigns an upper bound to the optimal value function. Adequately minimising the dual, i.e. choosing the least advantageous $\mathcal{M}_{\widehat{\nu}}$ or the smallest upper bound, then recovers a shadow price that induces optimality in the constrained $\mathcal{M}$. It is, however, in general difficult to analytically obtain this price, cf. \eqref{eq:sec3222dualpar}. Therefore, we subsequently outline an approximate method.

\section{Approximate Method}\label{sec4}
This section develops our dual control framework for approximating trading strategies. We cultivate the technique on the grounds of the economic setup in section \ref{sec2}. To surmount the typical absence of closed-form expressions in Theorem \ref{thm2}, we confine the space of feasible $\widehat{\beta}_t$ to a closed and convex set.\footnote{Rather than restricting $\widehat{\nu}_t$ to a coequal set, we choose to apply the principle to $\widehat{\beta}_t$ for mathematical elegance. Nonetheless, the foundation of the method does not alter for impediments with respect to $\widehat{\nu}_t$.} The procedure then resides in the information that this theorem provides, such that an analytical approximation to the truly optimal investment decisions may result. Duality principles spawn lower and upper bounds on the optimal value function, allowing us to gauge the method's accuracy. We append the blueprint of the approximate method with an example linked to section \ref{sec3.3}.

\subsection{Projection of Feasible Strategies}\label{sec4.1}
We commence with the description of the method by underscoring that its substructure consists in a twofold procedure. That is, we first cage the set of dual controls $\widehat{\beta}_t$ to a closed and convex set, $\mathcal{P}\subseteq\DD^{1,2}\left(\left[0,T\right]\right)^n$, which makes up the cornerstone that invigorates the technique at the outset, cf. the analysis around \eqref{eq:sec3222dualpar}. Second, we prescribe a projection operator commensurate with a case-dependent metric that casts the infeasible dual-optimal portfolio rules into the pre-specified feasible region $\widehat{\mathcal{A}}_{X_0}'$ that concurs with the limited space of controls $\mathcal{P}$. The inequality that lies at the root of this method is
\begin{equation}\label{eq:sec41ineq1}
\sup_{\left\lbrace x_t, c_t\right\rbrace_{t\in\left[0,T\right]}\in\widehat{\mathcal{A}}^{\mathcal{P}}_{X_0}}J^L\left(X_0,\left\lbrace x_t,c_t\right\rbrace_{t\in\left[0,T\right]}\right)\leq \inf_{\widehat{\beta}_t\in\mathcal{H}_{\mathcal{A}}^P\cap \mathcal{P},Y_0\in\RR_+}J^U\left(X_0,Y_0,\big\{ \widehat{\beta}_t\big\}_{t\in\left[0,T\right]}\right),
\end{equation} 
where $\widehat{\mathcal{A}}^{\mathcal{P}}_{X_0}$ comprehends all admissible strategies that agree with $\widehat{\beta}_{t}\in\mathcal{H}^P_{\widehat{\nu}}\cap\mathcal{P}$ for $\widehat{\beta_{t}}\in\DD^{1,2}\left(\left[0,T\right]\right)^n$. Moreover, $J^L$ and $J^U$ denote respectively the primal \eqref{eq:gendynprob} and dual \eqref{eq:sec5thm7generaldual} objectives. Theorem \ref{thm2} affirms that \eqref{eq:sec41ineq1} binds if and only if $\mathcal{P}=\DD^{1,2}\left(\left[0,T\right]\right)^n$ or $\widehat{\beta}_t^{\mathrm{opt}}\in\mathcal{P}$, which suggests that the anomaly from the optimal value function, say $J^{\mathrm{opt}}$, as regards $J^L$ and $J^U$ alternates with the efficacy of the approximation to $\widehat{\beta}_t$.

Convex duality in combination with the contraction of shadow prices to $\mathcal{P}$ breaks down in general, as reported by Theorem \ref{thm2}, when it comes to assuring that $\left\lbrace\widehat{x}_t,c_t\right\rbrace_{t\in\left[0,T\right]}\in\widehat{\mathcal{A}}_{X_0}^{\mathcal{P}}$. To overstep the consequential analytical impossibility of solving the LHS of \eqref{eq:sec41ineq1}, we approximate terminal wealth $X_T^{\mathrm{opt}}$ and the congruous consumption streams $c_t^{\mathrm{opt}}$ by
\begin{equation}\label{eq:sec41approx}
X_T^{*}=X_0+\int_0^T\left(R_{f,t}X_t^{*}-\widehat{c}_t^{\mathcal{P},\mathrm{opt}}\right)\mathrm{d}t+\int_0^T\mathrm{proj}_{\widehat{\mathcal{A}}'^{\mathcal{P}}_{X_0}}\widehat{x}_t^{\mathrm{opt},\top}\widehat{\sigma}_tX_t^{*}\left(\mathrm{d}{W}_t+\widehat{\lambda}_t\mathrm{d}t\right),
\end{equation}
where we let $\mathrm{proj}_{\widehat{\mathcal{A}}'^{\mathcal{P}}_{X_0}}:L^2\left(\Omega\times\left[0,T\right]\right)^n\rightarrow \widehat{\mathcal{A}}'^{\mathcal{P}}_{X_0}$ be the projection kernel tallying with an operator that maps any $n$-dimensional state-wise, square-integrable investment plan to the region of admissible trading strategies $\left\lbrace x_t\right\rbrace_{t\in\left[0,T\right]}$ in the constrained economy $\mathcal{M}$, i.e. $\widehat{\mathcal{A}}'^{\mathcal{P}}_{X_0}$. Furthermore, we set $\widehat{c}_t^{\mathcal{P},\mathrm{opt}}=\big(\widehat{c}_t^{\mathrm{opt}}/X_t^{\mathrm{opt}}\big)|_{\widehat{\beta}_t\in\mathcal{H}_{\widehat{\mathcal{A}}}^P\cap\mathcal{P}}X_t^{*}$, wherein $\widehat{c}_t^{\mathrm{opt}}$, $\widehat{x}_t^{\mathrm{opt}}$ and $X_t^{\mathrm{opt}}$ are the optimal controls inherent in the dual of Theorem \ref{thm2}. We also note that approximate terminal wealth $X_T^{*}$ depends on its past values, which are procurable by means of $X_T^{*}|_{T=t}$.

Note that the optimality conditions in the unconstrained $\mathcal{M}_{\widehat{\nu}}$, i.e. those in Theorem \ref{thm2} for undetermined $\widehat{\beta}_t$, are analytically available. The truly optimal $X_T^{\mathrm{opt}}$ and $c_t^{\mathrm{opt}}$ for constrained $\left\lbrace\widehat{x}_t^{\mathrm{opt}}\right\rbrace_{t\in\left[0,T\right]}$ are equal to these closed-form formulae, given a $\big\{\widehat{\beta}_t\big\}_{t\in\left[0,T\right]}$ such that $\left\lbrace\widehat{x}_t,c_t\right\rbrace_{t\in\left[0,T\right]}\in\widehat{\mathcal{A}}_{X_0}$ holds.\footnote{Theorem \ref{thm2} spells out ${x}_t^{\mathrm{opt}}$ in $\mathcal{M}$ under the premise that $\widehat{\beta}_t^{\mathrm{opt}}$ essentially optimises the dual. In this analysis, we utilise these equalities for ${x}_t^{\mathrm{opt}}$ given undefined $\widehat{\beta}_t$, resembling $\widehat{x}_t^{\mathrm{opt}}$ in $\mathcal{M}_{\widehat{\nu}}$.} These initial portfolio rules are thus affected in terms of their admissibility in $\mathcal{M}$ by sole modifications of $\widehat{\beta}_t$. The elemental idea then is that under the reservation of $\DD^{1,2}\left(\left[0,T\right]\right)^n$ to $\mathcal{P}$, these decisions are practically equal to the ones in Theorem \ref{thm2} for unspecified $\widehat{\beta}_{t}\in\mathcal{H}_{\widehat{\mathcal{A}}}^P\cap\mathcal{P}$ and $Y_0\in\RR_+$. We next endeavour to compensate for remaining inaccuracies by `pruning' $\widehat{x}_t^{\mathrm{opt}}|_{\widehat{\beta}_t\in\mathcal{H}_{\widehat{\mathcal{A}}}^P\cap\mathcal{P}}$ towards an identity that meets the admissibility criteria with the help of the projection operator.\footnote{We exclusively mention $\widehat{x}_t^{\mathrm{opt}}|_{\widehat{\beta}_t\in\mathcal{H}_{\widehat{\mathcal{A}}}\cap\mathcal{P}}$ here, because $\widehat{c_t}^{\mathcal{P},\mathrm{opt}}$ does not interfere with $\widehat{x}_t$'s admissibility.}

The preceding method administers an admissible and consequently budget-feasible pair $\big\{\mathrm{proj}_{\widehat{\mathcal{A}}'^{\mathcal{P}}_{X_0}}\widehat{x}_t^{\mathrm{opt}},\widehat{c}_t^{\mathcal{P},\mathrm{opt}}\big\}_{t\in\left[0,T\right]}$ for any $\widehat{\beta}_t\in\mathcal{H}_{\widehat{\mathcal{A}}}^P\cap\mathcal{P}$. So as to determine $\widehat{\beta}_t$, consider 
\begin{equation}\label{eq:sec41dualgap}
\mathrm{D}_{\mathcal{P}}\left(\widehat{\theta}_t^{L,*},\widehat{\theta}_t^{U,*}\right):=\inf_{\widehat{\theta}_t^{U,*}\in\mathcal{H}_{\widehat{\mathcal{A}}}^P\cap\mathcal{P}\times\RR}J^U\left(X_0,\big\{ \widehat{\theta}_t^{U,*}\big\}\right)-\sup_{\widehat{\theta}_t^{L,*}\in\mathcal{H}_{\widehat{\mathcal{A}}}^P\cap\mathcal{P}\times\RR}\widehat{J^L}\left(X_0,\big\{ \widehat{\theta}_t^{L,*}\big\}\right),
\end{equation}
which delimits the smallest duality gap affiliated with the circumscription to $\mathcal{P}$  under $X_T^{*}$ in \eqref{eq:sec41approx}, where we separate the unspecified approximate controls on the primal side $\widehat{\theta}^{L,*}_t=\big(\widehat{\beta}_t^{L,*},Y_0^{L,*}\big)$ from those on the dual side $\widehat{\theta}^{U,*}_t=\big(\widehat{\beta}_t^{L,*},Y_0^{L,*}\big)$. We also let $\widehat{J^L}$ be the primal value function that ensues after insertion of $\big\{\mathrm{proj}_{\widehat{\mathcal{A}}'^{\mathcal{P}}_{X_0}}\widehat{x}_t^{\mathrm{opt}},\widehat{c}_t^{\mathcal{P},\mathrm{opt}}\big\}_{t\in\left[0,T\right]}$. This quantification of the smallest duality gap conditional on the outlined approximation underpins the notion that the approximate rules embed the undefined Lagrange multipliers, besides the shadow prices. After maximization of $\widehat{J^{L}}$, the approximations to the portfolio rules and consumption streams by $\mathrm{proj}_{\widehat{\mathcal{A}}'^{\mathcal{P}}_{X_0}}\widehat{x}_t^{\mathrm{opt}}$ and $\widehat{c}_t^{\mathcal{P},\mathrm{opt}}$, respectively, are in keeping with the most-right term in \eqref{eq:sec41dualgap} entirely identified via the successive $\widehat{\theta}^{L,*}_t$. 

Normally, we acquire $\widehat{\theta}^{U,*}_t$ analytically. Hence, as contrasted with the approach in \eqref{eq:sec41approx},
\begin{equation}\label{eq:sec53feasibprojreg}
X_T^{*}=X_0+\int_0^T\left(R_{f,t}X_t^{*}-\widehat{c}_{f,t}^{\mathcal{P},\mathrm{opt}}\right)\mathrm{d}t+\int_0^T\widehat{\mathrm{proj}}_{\widehat{\mathcal{A}}'^{\mathcal{P}}_{X_0}}\widehat{x}_t^{\mathrm{opt},\top}\widehat{\sigma}_tX_t^{*}\left(\mathrm{d}{W}_t+\widehat{\lambda}_t\mathrm{d}t\right),
\end{equation}
in which $\widehat{\mathrm{proj}}_{\widehat{\mathcal{A}}'^{\mathcal{P}}_{X_0}}\widehat{x}_t^{\mathrm{opt},\top}=\mathrm{proj}_{\widehat{\mathcal{A}}'^{\mathcal{P}}_{X_0}}\widehat{x}_t^{\mathrm{opt},\top}\big|_{\widehat{\theta}_t^{L,*}=\widehat{\theta}_t^{U,*}}$ and $\widehat{c}_{f,t}^{\mathcal{P},\mathrm{opt}}=\widehat{c}_{t}^{\mathcal{P},\mathrm{opt}}\big|_{\widehat{\theta}^{L,*}_t=\widehat{\theta}^{U,*}_t}$, formulates an alternative approximation to $X_T^{\mathrm{opt}}$ and $c_t^{\mathrm{opt}}$ that is fully analytical. Here, depending on the quality of the approximation to $\widehat{\beta}^{\mathrm{opt}}_t$ interlaced with the repression to $\mathcal{P}$, the approximation to the optimal rules authorises us in a meaningful manner to detour all possible numerical effort involved in optimising the primal objective $\widehat{J^L}$. We base the suitability as concerns the immediate injection of the analytically obtainable dual controls $\widehat{\theta}_t^{U,*}$ for $\widehat{\theta}_t^{L,*}$ into the approximate rules implicit in \eqref{eq:sec41approx} on the fact that $\widehat{\theta}^{L,*}_t=\widehat{\theta}^{U,*}_t$ in general holds only if $\mathrm{D}_P\big(\widehat{\theta}^{L,*}_t,\widehat{\theta}^{U,*}_t\big)=0$. This equality concretely demonstrates that practically identical approximate primal and dual parameters accompany trifling optimality gaps.

The exceptional instances in which the duality gap depletes to zero are when the restricted set $\mathcal{P}$ contains the truly optimal shadow prices $\widehat{\beta}_{t}$, or when we let $\mathcal{P}$ coincide with the original set $\DD^{1,2}\left(\left[0,T\right]\right)^n$. In all other cases, $\mathrm{D}_{\mathcal{P}}\big(\widehat{\theta}_t^{L,*},\widehat{\theta}_t^{U,*}\big)>0$ is true. Hence, to make the size of the duality gap tangible for both types of approximations in \eqref{eq:sec41approx} and \eqref{eq:sec53feasibprojreg}, we compute the so-called compensating variation. To that end, we examine
\begin{equation}
\widehat{J^{L,\mathrm{opt}}}\left(X_0+\mathcal{CV},\big\{ \widehat{\theta}_t^{L,*}\big\}\right)=\inf_{\widehat{\theta}_t^{U,*}\in\mathcal{H}_{\widehat{\mathcal{A}}}\times\RR}J^U\left(X_0,\big\{ \widehat{\theta}_t^{U,*}\big\}\right)+\mathrm{D}_{\mathcal{P}}\left(\widehat{\theta}_t^{L,*},\widehat{\theta}_t^{U,*}\right),
\end{equation}
wherein we let $\widehat{J^{L,\mathrm{opt}}}$ be the primal objective in \eqref{eq:sec41dualgap} resulting from either of the two approximate modes, corresponding to the initial endowment $X_0\in\RR_+$. Here, we denote the compensating variation by $\mathcal{CV}\in\RR_+$. The previous quantity defines the amount of capital that one must add to $X_0$ in order to overpass $\mathrm{D}_{\mathcal{P}}\big(\widehat{\theta}_t^{L,*},\widehat{\theta}_t^{U,*}\big)$. Economically, the magnitude of $\mathcal{CV}$ translates the utilitarian loss incurred due to the approximation into a monetary loss. The annual equivalent, $\mathcal{CV}^{\frac{1}{T}}$, abides by the interpretation of a management fee that protects the agent against shifts in the undiversifiability of $\left\lbrace W_t\right\rbrace_{t\in\left[0,T\right]}$.\footnote{We refer the reader to \citet{de2008utilitarianism} and \citet{de2009standardized} for details.}

\subsection{Stepwise Approximating Routine}\label{sec4.2}
We proceed by describing the approximate method for $\mathcal{P}\subseteq\DD^{1,2}\left(\left[0,T\right]\right)^n$ in terms of a stepwise routine involving Monte Carlo. Although the previous approximations to ${x}_t^{\mathrm{opt}}$ accommodated in Theorem \ref{thm2} are fully analytical, the approximate objective $\widehat{J^{L,\mathrm{opt}}}$ typically does not induce a closed-form formula. Considering further probable practical purposes, we amplify the routine. We first discuss the necessary notation and interconnected features. Let us rewrite the equality for approximate wealth, \eqref{eq:sec41approx} or \eqref{eq:sec53feasibprojreg}, as follows
\begin{equation}\label{eq:approxwealthprocsec42new}
\log\left(X_T^{*}\right)=\log\big(X_0\big)+\int_0^T\left(R_{f,t}-c_t^{*}-\frac{1}{2}\bar{x}_t^{\top,*}\bar{x}_t^{*}\right)\mathrm{d}t+\int_0^T\bar{x}_t^{\top,*}\left(\mathrm{d}W_{t}+\widehat{\lambda}_t\mathrm{d}t\right),
\end{equation}
where $\bar{x}_t^{\top,*}=x_t^{\top,*}\widehat{\sigma}_t^{-1}$ and $c^{*}_t=\widehat{c}_t^{\mathcal{P},\mathrm{opt}}{X_t^{*}}^{-1}$, in which $x_t^{*}$ is the approximation to $x_t^{\mathrm{opt}}$. Hence, we ensure that $X_t^*>0$ for all $0\leq t\leq T$, and eliminate the dependency on $X_T^*\big|_{T=t}$, which would otherwise affect the simulations. Note that $X_t^*$ differs from $\EE\big[\int_t^T\ddot{c}_{t,u}^{\mathrm{opt}}\mathrm{d}u+X_T^{\mathrm{opt}}{Y_{t,T}^{\mathrm{opt}}}{B_{t,T}}^{-1} \ \big| \ \eF_t\big]$ in Theorem \ref{thm2}, and that $\widehat{c}_t^{\mathcal{P},\mathrm{opt}}$ restores \textit{post factum} from $c_t^{*}X_t^{*}$. Bearing these facts in mind, we continue with the description of the method. 
\\\\
\textbf{Step 1. Initialisation of method.} We initialise an $N\in\NN$, denoting the number of paths for $\left\lbrace W_t\right\rbrace_{t\in\left[0,T\right]}$. In this way, we discretise $\Omega$ into $\omega_i\in\Omega$ for $i=1,\hdots,N$. Similarly, we fix an $M\in\NN$ representing $W_t$'s number of time increments on $\left[0,T\right]$. That is, we partition $\left[0,T\right]$ into $M$ equidistant intervals as $0=t_0<t_1<\hdots<t_M=T$ such that $\left|t_i-t_{i-1}\right|=T/M$ for $i\geq 0$. Conclusively, one may employ finite-difference methods to simulate the state variables according to the sample space induced by $N\times M$.
\\\\
\textbf{Step 2. Wealth dynamics.} Afterwards, we unscramble $X_T^{*}$ subsuming the approximate $x_t^{*}$ and $c_t^{*}$. From there, we simulate the finite-difference analogue to the completely analytical $\log\left(X_T^{*}\right)$ in \eqref{eq:approxwealthprocsec42new}. The resulting process is equipped with unspecified $\widehat{\theta}^{L,*}_t$. We endow state-dependent processes with $\omega_j$-notation. Approximate $X_T^*$ then agrees with
\begin{equation}\label{eq:approxnewmechanismlambda}
\begin{aligned}
\log\left(X_T^*\left(\omega_j\right)\right)&=\log\left(X_0\right)+\sum_{i=1}^M\bigg[{R}_{f,t_{i-1}}\left(\omega_j\right)+\bar{x}_{t_{i-1}}^{\top,*}\left(\omega_j\right)\lambda_{1,t_{i-1}}-c_t^{*}\left(\omega_j\right)\\&-\frac{1}{2}\bar{x}_{t_{i-1}}^{\top,*}\left(\omega_j\right)\bar{x}^*_{t_{i-1}}\left(\omega_j\right)\bigg]\Delta t_i+\sum_{i=1}^M\bar{x}_{t_{i-1}}^{\top,*}\left(\omega_j\right)\left(W^1_{t_{i}}\left(\omega_j\right)-W^1_{t_{i-1}}\left(\omega_j\right)\right)
\end{aligned}
\end{equation}
for $j=1,\hdots,N$, where $\Delta t_i=t_i-t_{i-1}$, and $\bar{x}^*_{t_i}$ complies with the approximating portfolio weights implicit in either ${\mathrm{proj}}_{\widehat{\mathcal{A}}'^{\mathcal{P}}_{X_0}}\widehat{x}_t^{\mathrm{opt}}$ or $\widehat{\mathrm{proj}}_{\widehat{\mathcal{A}}'^{\mathcal{P}}_{X_0}}\widehat{x}_t^{\mathrm{opt}}$. We simulate the log of $X_T^*$ so as to safeguard $X_T^*>0$, cf. \eqref{eq:approxwealthprocsec42new}. The former rules' discrete duplicates $\bar{x}^*_{t_i}$ depend on the unspecified $\widehat{\theta}^{L,*}_t$, and follow from equation \eqref{eq:thm2optrulportfol}, as non-linear transformations of the state variables. More, $X_{t_n}^{*}$ results from taking the sums in \eqref{eq:approxnewmechanismlambda} up to $n=1,\hdots,M$.
\\\\
\textbf{Step 3. Shadow price and multiplier.} We then continue with optimally determining the approximate shadow prices $\widehat{\beta}^{L,*}_{t}$ and the corresponding Lagrange multiplier $Y_0^{L,*}$, i.e. $\widehat{\theta}_{t}^{L,*}\in\mathcal{H}_{\widehat{\mathcal{A}}}^P\cap \mathcal{P}\times\RR$. In particular, we employ standard optimisation software to maximise the lower bound on $J^{\mathrm{opt}}$ engendered by the following primal value function 
\begin{equation}\label{eq:mcobjective1}
\widehat{J^{L}}=\frac{1}{N}\sum_{j=1}^N\sum_{i=1}^Mu\left(c_{t_i}^{*}\left(\omega_j\right)X_{t_i}^{*}\left(\omega_j\right),\Pi_{t_i}\left(\omega_j\right)\right)\Delta t_i+\frac{1}{N}\sum_{j=1}^NU\left(X_T^*\left(\omega_j\right),\Pi_T\left(\omega_j\right)\right),
\end{equation}
over $\big(\widehat{\beta}^{L,*}_{t},Y_0^{L,*}\big)\in\mathcal{H}_{\mathcal{A}}^P\cap \mathcal{P}\times\RR$. The unavailability of closed-form formulae in addition to analytically insuperable FOCs stimulates the usage of such numerical algorithms. Running times depend on the routine, and are as a result uninformative. Clearly, we assume at this point that we are in possession of $X_{t_n}^{*}$ for each $n=1,\hdots,M$. Simulating this $X_{t_n}^{*}$ is considered part of step 2. Notice that step 3 applies solely to ${\mathrm{proj}}_{\widehat{\mathcal{A}}'^{\mathcal{P}}_{X_0}}\widehat{x}_t^{\mathrm{opt}}$. Hence:\\
\begin{itemize}
\item For ${\mathrm{proj}}_{\widehat{\mathcal{A}}'^{\mathcal{P}}_{X_0}}\widehat{x}_t^{\mathrm{opt}}$, apply numerical maximisation of \eqref{eq:mcobjective1} initialised at $\widehat{\theta}^{L,*}_t=\widehat{\theta}^{U,*}_t$. \\
\item In case of $\widehat{\mathrm{proj}}_{\widehat{\mathcal{A}}'^{\mathcal{P}}_{X_0}}\widehat{x}_t^{\mathrm{opt}}$, we set $\widehat{\theta}^{L,*}_t=\widehat{\theta}^{U,*}_t$ \textit{a priori}, such that one may skip step 3.\\
\end{itemize} 
The first branch plainly acknowledges the non-unique nature of the controls $\widehat{\theta}_t^{L,*}$ contained within ${\mathrm{proj}}_{\widehat{\mathcal{A}}'^{\mathcal{P}}_{X_0}}\widehat{x}_t^{\mathrm{opt}}$, and therefore coincides with pursuing step 3 as outlined. Further, it adds an element that aims to precipitate any running time by initialising the method under scrutiny at the presumably analytically available dual controls $\widehat{\theta}_{t}^{U,*}$. The second item applies to $\widehat{\mathrm{proj}}_{\widehat{\mathcal{A}}'^{\mathcal{P}}_{X_0}}\widehat{x}_t^{\mathrm{opt}}$, for which all numerical computation is entirely eliminated, on account of its complete analytical characterisation via pre-fixed $\widehat{\theta}^{L,*}_t=\widehat{\theta}^{U,*}_t$.\footnote{The residual numerical effort involved with this approximate rule is due to steps 1 and 4. Step 4's running time is redundant. That of step 1 differs e.g. with dimensionality, see \citet{detemple2006asymptotic}.} Let us note that, in this step, we couple $c^{\mathcal{P},\mathrm{opt}}_t$ to ${\mathrm{proj}}_{\widehat{\mathcal{A}}'^{\mathcal{P}}_{X_0}}\widehat{x}_t^{\mathrm{opt}}$ and $c^{\mathcal{P},\mathrm{opt}}_{f,t}$ to $\widehat{\mathrm{proj}}_{\widehat{\mathcal{A}}'^{\mathcal{P}}_{X_0}}\widehat{x}_t^{\mathrm{opt}}$.\footnote{In the description, we use $c^{\mathcal{P},\mathrm{opt}}_{t}$. Replacing this approximation by $c^{\mathcal{P},\mathrm{opt}}_{f,t}$ only has an effect on step 3.}
\\\\
\textbf{Step 4. Performance evaluation.} The final step consists of evaluating the performance of the mechanism. Define $\widehat{\mathrm{D}_{\mathcal{P}}}\big(\widehat{\theta}^{L,*}_t,\widehat{\theta}^{U,*}_t\big):=\widehat{J^{U}}\left(X_0\right)-\widehat{J^{L,\mathrm{opt}}}\big(X_0\big)$, where $\widehat{J^{U}}$ and $\widehat{J^{L,\mathrm{opt}}}$ are respectively the approximate analytical dual and approximate estimated primal value functions. To recover $\mathcal{CV}$, we apply numerically facile root-finding modes to
\begin{equation}\label{eq:djdasddsd}
\widehat{J^{L,\mathrm{opt}}}\left(X_0+\mathcal{CV}\right)=J^{\mathrm{opt}}+\widehat{\mathrm{D}_{\mathcal{P}}}\left(\widehat{\theta}^{L,*}_t,\widehat{\theta}^{U,*}_t\right)=\widehat{J^{U}}\left(X_0\right),
\end{equation}
wherein we search over all $\mathcal{CV}\in\RR_+$ such that this equality holds. Observe that $\mathcal{CV}\in\RR_+$ is unique, because it linearly enacts on $X_T^{*}$ and $c_{t_n}^{*}$, which are both encompassed within the strictly increasing utility functions $u$ and $U$. The LHS of this equation also comprises the optimal $\widehat{\beta}^{L,*}_{t}$ and $Y_0^{L,*}$ from step 3 of the routine. This last step is obviously superfluous in light of the technique; it only serves to measure precision of the approximation.
\\\\
We construe the stepwise approximating routine under the presumption that the dual side of the problem renders analytical expressions for the dual controls $\widehat{\theta}^{U,*}$ and the upper value function $\widehat{J^{U}}$. However, these expressions are absent in Theorem \ref{thm2}, since the dual requires a full characterisation of $K\subseteq\RR^{d+m}$ before it deciphers FOCs for the approximate dual controls $\widehat{\theta}^{U,*}_t$. An application analogous to step 3 on the dual side discharges this postulate and engenders estimates to $\widehat{\theta}^{U,*}$ and $\widehat{J^{U}}$, which we may employ in steps 3/4. 

\subsection{Approximation in Incomplete Markets}\label{sec4.3}
This section finalises the analysis of the approximating procedure by casting the subject into the context of the economic setup and problem specification of section \ref{sec3.3}. Recall that the set of restrictions in that environment lives by $K=\RR^d\times\left\lbrace 0\right\rbrace^m$ such that $\widetilde{\mathcal{A}}_{X_0}^P=\mathcal{H}^P_{\widehat{\mathcal{A}}}=\left\lbrace -\lambda_{1,t}\right\rbrace \times\DD^{1,2}\left(\left[0,T\right]\right)^{m}$ define the spaces in reference to $\widehat{\beta}_t$. In the following analysis, we let $\mathcal{P}=\DD^{1,2}\left(\left[0,T\right]\right)^d\times\RR^m$ with the intention of portraying the mathematical comfort that the approximate technology could entail. This $\mathcal{P}$ embodies $\DD^{1,2}\left(\left[0,T\right]\right)^d$ to ensure $\beta_{1,t}=-\lambda_{1,t}$. Correspondingly, the approximate dual as per Theorem \ref{thm2} reads
\begin{equation}\label{eq:sec43approxdual}
\inf_{\widehat{\beta}_{2,t}^{U,*}\in\RR^m, Y_0^{U,*}\in\RR_+}\EE\left[V\left(Y_TB_T^{-1},\Pi_T\right)\right]+Y_0X_0,
\end{equation}
where we take advantage of $\mathcal{H}^P_{\widehat{\mathcal{A}}}\cap\DD^{1,2}\left(\left[0,T\right]\right)^d\times\RR^m=\left\lbrace -\lambda_{1,t}\right\rbrace^d\times\RR^m$. The confinement marked by $\mathcal{P}$ hence effectively ensures that the dual controls $\widehat{\beta}_{2,t}^{U,*}$ are only allowed to attain values on the real line $\RR^m$. By this regulation, we aspire to approximate $\widehat{\beta}_{2,t}^{\mathrm{opt}}=-\widehat{\lambda}_{2,t}^{\mathrm{opt}}$ in \eqref{eq:lambdaucharopt1} of Proposition \ref{prop3} by a constant in an attempt to avoid the computational strain enmeshed with solving its backward-forward equation, in a simple way.\footnote{For unconstrained $\mathcal{P}$, i.e. $\mathcal{P}=\DD^{1,2}\left(\left[0,T\right]\right)^d$, we find according to Proposition \ref{prop2} that $\beta_{1,t}=-\lambda_{1,t}$, which is completely identified. Ergo, there is no reason to confine $\beta_{1,t}$'s parameter space akin to $\widehat{\beta}_{2,t}$'s.}  

On the dual side, we are able to amass an analytic for optimal $\widehat{\beta}_{2,t}^{U,*}$. Specifically,
\begin{equation}
\begin{aligned}
D_{\left\lbrace\widehat{\beta}_{2,t}^{U,*}\right\rbrace}\widehat{\mathcal{W}}\left\lbrace\psi_t\right\rbrace=-\int_0^T\EE\left[\psi_t^{\top}\left\lbrace\EE\left[\mathcal{D}^{W^2}_tX_T^{\mathrm{opt}}Z_T^{\widehat{\nu}}\right]-\widehat{\beta}_{2,t}^{U,*}\EE\left[X_T^{\mathrm{opt}}Z_T^{\widehat{\nu}}\right]\right\rbrace\right]\mathrm{d}t=0,
\end{aligned}
\end{equation}
describes the FOC to the approximate dual \eqref{eq:sec43approxdual} in the $\widehat{\beta}_{2,t}^{U,*}$-direction, where we let $\widehat{\mathcal{W}}$ be the objective function in \eqref{eq:sec43approxdual}. As a result, the optimal approximate dual control must adhere to $\widehat{\beta}_{2,t}^{U,*}=\EE\big[X_T^{\mathrm{opt}}Z_T^{\widehat{\nu}}\big]^{-1}\EE\big[\mathcal{D}^{W^2}_tX_T^{\mathrm{opt}}Z_T^{\widehat{\nu}}\big]$, which is equal to the true FOC in Proposition \ref{prop3} for $\widehat{\beta}_{2,t}^{\mathrm{opt}}$ at its initialisation point in time $t=0$. In a like fashion, the FOC for the corresponding approximate dual multiplier $Y_0^{U,*}$ rises from the equality $\big\langle I\big(Y_0^{U,*}B_T^{-1}Y_T,\Pi_T\big){Y_0^{U,*}}^{-1}B_T^{-1}Y_T-X_0,\widehat{\kappa}\big\rangle_{L^2\left(\Omega\right)}=0$  for all $\widehat{\kappa}\in\RR_+$. In summary,
\begin{equation}\label{eq:sec43syseq}
\begin{aligned}
\EE\left[I\left(Y_0^{U,*}Z_T^{\widehat{\nu}},\Pi_T\right)Z_T^{\widehat{\nu}}\right]=X_0,\quad\mathrm{and}\quad
\widehat{\beta}_{2,t}^{U,*}=\left(1-\mathcal{R}^2_0\right)\mathcal{G}'^{2}_{0,T}-\mathcal{R}^2_t{\mathcal{R}^1_0}^{-1}\mathcal{G}'^{1}_{0,T}
\end{aligned}
\end{equation}
urges an analytical system of two non-linear equations that we are able to solve for the approximate dual shadow price $\widehat{\beta}_{2,t}^{U,*}$ and reciprocal Lagrange multiplier $Y_0^{U,*}$. In tandem with both preceding dual controls included in $\widehat{\theta}^{U,*}_t$, the approximate dual objective function $\widehat{J^{U}}\left(X_0\right)$ finds its definition in closed-form. The dual side of the method is on this account characterised by closed-form expressions for the controls and value function. 

Let us now turn to the primal side. For that purpose, consider $\widehat{x}_t^{\mathrm{opt}}$ in Proposition \ref{prop1} along with the artificial wealth equation in \eqref{eq:sec33fictbcwealtheq}. We transform this wealth process into an admissible analogue by suppressing the dependency on $x_{f,t}$, which then leads to 
\begin{equation}\label{eq:sec43approxtermwealth}
X_T^{*}=X_0+\int_0^T\left(R_{f,t}+x_t^{\mathrm{opt},*,\top}\sigma_t^{S_1}\lambda_{1,t}\right)X_t^{*}\mathrm{d}t+\int_0^Tx_t^{\mathrm{opt},*,\top}\sigma_t^{S_1}X_t^{*}\mathrm{d}W^1_t.
\end{equation}
Here, $x_t^{\mathrm{opt},*}$ contains the first $d$ analytical entries of $\widehat{x}_t^{\mathrm{opt}}$ in \eqref{eq:genoptrulesfict} for undefined $\widehat{\beta}_{t}$. This enforcement of the trading constraint guarantees feasibility of $X_t^{*}$, $0\leq t\leq T$. However, we cannot analytically determine the $\widehat{\beta}_{t}$ that optimises expected utility for these wealth dynamics. For tractability with regard to numerically determining this $\widehat{\beta}_{t}$, we confine the set to of admissible $\widehat{\beta}_{t}$ to all $\mathcal{H}_{\widehat{\mathcal{A}}}^P\cap\mathcal{P}$, for $\mathcal{P}$ as used in \eqref{eq:sec43approxdual}, such that $\beta_{1,t}=-\lambda_{1,t}$ and $\widehat{\beta}_{2,t}\in\DD^{1,2}\left(\left[0,T\right]\right)^m$. This approach coheres with the description in section \ref{sec4.1} under
\begin{equation}\label{eq:sec43projectkern}
\mathrm{proj}_{\widehat{\mathcal{A}}'^{\mathcal{P}}_{X_0}}x=\left[x_d,0_m\right]^{\top},\ \ \forall\ \  x=\left[x_d,x_m\right]\in  L^2\left(\Omega\times\left[0,T\right]\right)^{d+m},
\end{equation}
which dictates that $x_{f,t}=0_m$. Considering that the actual optimal portfolio weights $x_t^{\mathrm{opt}}$ in the baseline constrained environment $\mathcal{M}$, as given in Proposition \ref{prop3}, consist of $\widehat{x}_t^{\mathrm{opt}}$ including $\widehat{\beta}_t^{\mathrm{opt}}$ whereon $x_{f,t}=0_m$ is imposed, this projection pertaining to $\mathcal{P}$ is plausible.\footnote{The projection is case-dependent; e.g. in respect of borrowing and short-sale constraints, cf. \citet{tepla2000optimal}, i.e. $K=\left\lbrace x\in\RR^n_+\mid x^{\top}1_n\leq 1 \right\rbrace$, a reasonable projection would be ${\mathrm{proj}}_{\widehat{\mathcal{A}}'^{\mathcal{P}}_{X_0}}x=\max\left\lbrace 0_n,\|\sqrt{x}\|_{\RR^n}^{-2}x\right\rbrace$.} At this stage, the procedure is identical for ${\mathrm{proj}}_{\widehat{\mathcal{A}}'^{\mathcal{P}}_{X_0}}\widehat{x}_t^{\mathrm{opt}}$ and $\widehat{\mathrm{proj}}_{\widehat{\mathcal{A}}'^{\mathcal{P}}_{X_0}}\widehat{x}_t^{\mathrm{opt}}$.

The approximation to $X_T^{\mathrm{opt}}$ in Proposition \ref{prop3} by $X_T^{*}$ in \eqref{eq:sec43approxtermwealth} is budget-feasible and admissible in the sense of $\widehat{\mathcal{A}}_{X_0}$, since $x_t^{\mathrm{opt}}$ in the foregoing proposition is admissible for $\widehat{\beta}_{2,t}^{\mathrm{opt}}\in\DD^{1,2}\left(\left[0,T\right]\right)^m$ in \eqref{eq:lambdaucharopt1} implying that a reduction of  $\DD^{1,2}\left(\left[0,T\right]\right)^m$ to $\RR^m$ \textit{re} $\widehat{\beta}_{2,t}$'s space of controls does not hamper ${x}_{t}^{\mathrm{opt,*}}={\mathrm{proj}}_{\widehat{\mathcal{A}}'^{\mathcal{P}}_{X_0}}\widehat{x}_t^{\mathrm{opt}}\in\widehat{\mathcal{A}}_{X_0}$. These rules sprout from
\begin{equation}
\begin{aligned}
\mathcal{G}^{2}_{x,t,T}&=\mathcal{R}^{2,*}_t\EE\left[-\widehat{\mathcal{R}^{2,*}}_{t,T}\left(-\int_t^T\left[\mathcal{D}^{W^1}_t\left(R_{f,s}\mathrm{d}s+\widehat{\lambda}_{f,s}\mathrm{d}W_s^{\QQ^{\widehat{\nu}}}\right)\right]\right)\cond\eF_t\right]\\\mathcal{G}^{1}_{x,t,T}&=\mathcal{R}^{1,*}_t\EE\left[-\widehat{\mathcal{R}^{1,*}}_{t,T}\left(\int_t^T\left(\mathcal{D}^{W^1}_t\pi_s\mathrm{d}s+\mathcal{D}^{W^1}_t\widehat{\xi}_s\left(\mathrm{d}W_s-\widehat{\xi}_s\mathrm{d}s\right)\right)+\xi_t^{\Pi_1}\right)\cond\eF_t\right]
\end{aligned}
\end{equation}
inserted into ${x}_t^{\mathrm{opt},*}={{\sigma}_t^{S_1}}^{\top^{-1}}\big({\mathcal{R}^{1,*}_t}^{-1}\mathcal{G}^1_{x,t,T}+\big(1-{\mathcal{R}^{2,*}_t}^{-1}\big)\mathcal{G}^{2}_{x,t,T}+{\mathcal{R}^{2,*}_t}^{-1}{\lambda}_{1,t}\big)$, cf. Proposition \ref{prop3}. Independent of the selected approximation, either ${\mathrm{proj}}_{\widehat{\mathcal{A}}'^{\mathcal{P}}_{X_0}}\widehat{x}_t^{\mathrm{opt}}$ or $\widehat{\mathrm{proj}}_{\widehat{\mathcal{A}}'^{\mathcal{P}}_{X_0}}\widehat{x}_t^{\mathrm{opt}}$, closed-form expressions specify the approximate portfolio decisions ${x}_t^{\mathrm{opt},*}$, for any $\widehat{\theta}^{L,*}_t\in\mathcal{H}^P_{\widehat{\mathcal{A}}}\cap\mathcal{P}$. Related to the former projection, we determine $\widehat{\beta}_{2,t}^{L,*}$ and $Y_0^{L,*}$ by numerical maximisation of $\EE\left[U\left(X_T^{*},\Pi_T\right)\right]$ in the spirit of section \ref{sec4.2} in order to wholly identify $x_t^{\mathrm{opt},*}$. Proceeding from $\widehat{\mathrm{proj}}_{\widehat{\mathcal{A}}'^{\mathcal{P}}_{X_0}}\widehat{x}_t^{\mathrm{opt}}$, we take $\widehat{\theta}^{L,*}_t$ from the system motivated by \eqref{eq:sec43syseq} and insert these into ${x}_t^{\mathrm{opt},*}$ for full identification. In both cases, $\widehat{J^{L,\mathrm{opt}}}$ follows through \eqref{eq:mcobjective1}.

In ${x}_t^{\mathrm{opt},*}$, we recognise in the shape of \eqref{eq:sec31decomport} three portfolio demands submitting to 
\begin{equation}
\begin{aligned}
{x}^{Z,\mathrm{opt},*}_t&={{\sigma}_t^{S_1}}^{\top^{-1}}\frac{1}{1-\mathcal{R}_t^{2,*}}\mathcal{G}^{2}_{x,t,T},\quad\mathrm{and}\quad
{x}^{\Pi,\mathrm{opt},*}_t&={{\sigma}_t^{S_1}}^{\top^{-1}}\frac{1}{\mathcal{R}^{1,*}_t}\mathcal{G}^{2}_{x,t,T}
\end{aligned}
\end{equation}
consistent with the decomposition $x_t^{\mathrm{opt},*}=x_t^{m,\mathrm{opt},*}+x_t^{Z,\mathrm{opt}}+x_t^{\Pi,\mathrm{opt},*}$, in which the mean-variance efficient portfolio rule conforms to $x_t^{m,\mathrm{opt}}={\mathcal{R}_t^{2,*}}^{-1}{{\sigma_t^{S_1}}^{\top}}^{-1}\lambda_{1,t}$. These hedging demands are up to the RRA coefficients that incorporate deterministic constants for the shadow prices of non-traded risk equivalent to those presented in Proposition \ref{prop1}: we empower the $\mathcal{R}$ terms with an asterisk to refine the difference from the true expressions. Let us also observe that $\mathcal{D}^{W^1}_t\widehat{\lambda}_{f,s}=\big[\mathcal{D}^{W^1}_t\lambda_{1,s},0_{d\times m}\big]$, which implies the deterministic character and the total independence from $W^1_t$ of the approximate shadow prices.

\section{Numerical Illustration}\label{sec5}
We complete the examination of the approximate method with a numerical analysis that aspires to appraise the accuracy of the technique in an explicit economic framework that involves a definite problem setup. With this in view together with the antecedent scrutiny of the incomplete markets setting, we discuss results in the environment of \citet{brennan2002dynamic}.\footnote{This financial market model appears as a special case of the constrained $\mathcal{M}$ in section \ref{sec3.3}.} At the heart of this illustration lies our dual CRRA utility function, which constitutes a novel addition to the existing body of state-dependent preference conditions. This qualification replaces the CRRA function in the economic outline at hand.  

\begin{figure}[!t]
\begin{center}
\includegraphics[scale=1.13]{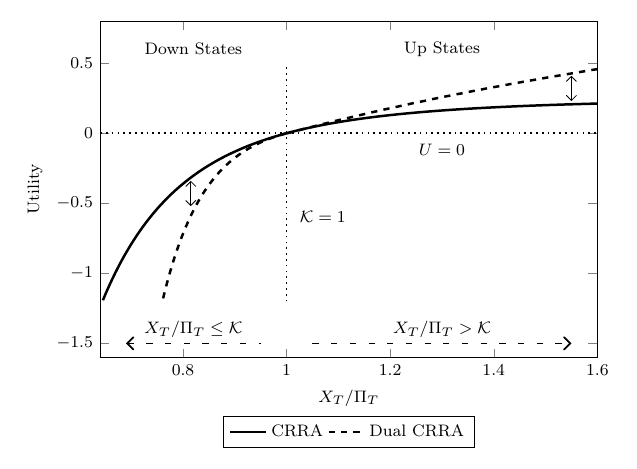}
\end{center}
\caption{\label{f1}\textbf{Isoelastic and dual CRRA utility.} The graph depicts the utility development for an isoelastic agent and a dual CRRA agent. The matching risk profiles adhere to $\gamma=5$ and $\left(\gamma_d,\gamma_u\right)=\left(10,2\right)$ for the respective CRRA and dual CRRA investors. The reference level for the dual CRRA individual reads $\mathcal{K}=1$ (vertically dotted line). At $\mathcal{K}=1$, this individual draws zero utility, $U=0$ (horizontally dotted line). Down states of the world tally with all $\omega\in\Omega$ wherein $X_T/\Pi_T\leq \mathcal{K}$. Conversely, all $\omega\in\Omega$ in which $X_T/\Pi_T> \mathcal{K}$ coincide with the up states, as separated by the vertically dotted line $\mathcal{K}=1$.}
\end{figure}

\subsection{Dual CRRA Utility Function}\label{sec5.1}
We suppose that the agent from section \ref{sec3.3} has in mind a benchmark, $\mathcal{K}\in\RR_+$, regarding his or her horizon purchasing power, $X_T/\Pi_T$. For states in which $X_T/\Pi_T\leq \mathcal{K}$, the agent draws proportionately less utility than in reversed states. Therefore, we attach $\mathcal{K}$ to $U$, on account of which the \textit{dual CRRA} function incorporates two isoelastic qualifications
\begin{equation}\label{eq:dualCRRA}
U_{\mathcal{K}}\left(X_T,\Pi_T\right)=\frac{\left(\mathcal{K}^{-1}X_T/\Pi_T\right)^{1-\gamma_d}-1}{1-\gamma_d}\Ind_{\left\lbrace\frac{X_T}{\Pi_T}\leq \mathcal{K}\right\rbrace}+\frac{\left(\mathcal{K}^{-1}X_T/\Pi_T\right)^{1-\gamma_u}-1}{1-\gamma_u}\Ind_{\left\lbrace\frac{X_T}{\Pi_T}> \mathcal{K}\right\rbrace}
\end{equation}
for coefficients of risk aversion $\gamma_d,\gamma_u\in\RR_+\setminus\left[0,1\right]$ to forgo infinite utility, cf. \cite{kramkov1999asymptotic}. In order to convene \eqref{eq:dualCRRA} with the portrayed situation, we dictate that $\gamma_d\geq\gamma_u$. Hereby, we argue that down states correspond to an enhanced apprehension regarding the attainment with reference to the exogenous $\mathcal{K}$.\footnote{This is not a necessary requirement, any $\gamma_d,\gamma_u\in\RR_+\setminus\left[0,1\right]$ ensure proper specification of \eqref{eq:dualCRRA}.} The investor under study retains total flexibility to choose $\mathcal{K}$ \textit{ante initium}, before $t=0$, resulting in $\mathcal{K}$ being fundamentally exogenous over the full course of $\left[0,T\right]$. Without loss of generality, we normalise $\mathcal{K}=1$, such that $U_{\mathcal{K}}\mid_{\mathcal{K}=1}:=U$ coalesces with the desires of an individual who wishes to maintain a constant degree of purchasing power.\footnote{The subsequent breakdown bears upon a sole terminal wealth setup; we restrict this analysis to $U$.}

Technically, we observe that $U\left(X_T,\Pi_T\right)$ is continuous at $X_T=\Pi_T$ and state-dependent by virtue of the accommodated separation of instances. Furthermore, we find that 
\begin{equation}\label{eq:dualCRRA2}
U'_X=\begin{cases}
X_T^{-\gamma_d}\Pi_T^{-1+\gamma_d},\ \mathrm{if}\quad \frac{X_T}{\Pi_T}\leq 1\\
X_T^{-\gamma_u}\Pi_T^{-1+\gamma_u},\ \mathrm{if}\quad \frac{X_T}{\Pi_T}> 1
\end{cases};\ I =\begin{cases}
X_T^{-\frac{1}{\gamma_d}}\Pi_T^{1-\frac{1}{\gamma_d}},\ \mathrm{if}\quad \frac{X_T}{\Pi_T}\geq 1\\
X_T^{-\frac{1}{\gamma_u}}\Pi_T^{1-\frac{1}{\gamma_u}},\ \mathrm{if}\quad \frac{X_T}{\Pi_T}< 1
\end{cases}
\end{equation}
define respectively marginal utility in the first argument and its inverse, both of which are continuous at $X_T=\Pi_T$ as well. The second derivative of $U$ in the first argument, $U''_{XX}$, follows straightforwardly. Note that $U$ satisfies once continuous differentiability in the $X_T$-direction. Further, the inverse of marginal utility is piecewise continuously differentiable with a single breakpoint at $X_T=\Pi_T$. The mathematical description of the dual CRRA stipulation hence aligns with section \ref{sec2.3}'s. Finally, the instalment of two CRRA utilities in \eqref{eq:dualCRRA} occasions a qualification that displays non-constant RRA:\footnote{Optimal portfolios ought to conform to a structure that concurs per instance with the CRRA demands. }
\begin{equation}\label{eq:nonCRRA}
1-I'_Y\Pi_T=-\left(X_T\frac{U''_{XX}}{U'_X}\right)^{-1}=\frac{1}{\gamma_d}\Ind_{\left\lbrace \frac{X_T}{\Pi_T}\leq 1\right\rbrace} + \frac{1}{\gamma_u}\Ind_{\left\lbrace \frac{X_T}{\Pi_T}>1\right\rbrace}.
\end{equation}
 
In Figure \ref{f1}, we exemplify the evolution of dual CRRA preferences, characterised by $\gamma_d=10$ and $\gamma_u=2$, jointly with those for a CRRA agent, rising from $\gamma_d=\gamma_u=5$ in \eqref{eq:dualCRRA}. The discrepancy between the trajectories displays itself in a salient tone for the dual CRRA individual by a comparatively intensified utility development in `up states' ($X_T/\Pi_T>\mathcal{K}$) and a relatively abated progress in `down states' ($X_T\Pi_T\leq\mathcal{K}$) of the function. Thus, this utility-maximising investor is indispensably concerned about achieving $\mathcal{K}$ in down states, whereas the same individual substantially loosens his or her anxiety in up states. 

\begin{corollary}\label{col1}
Consider $\sup_{\left\lbrace{x}_t\right\rbrace_{t\in\left[0,T\right]}\in\widehat{\mathcal{A}}_{X_0}'}\EE\left[U\left(X_T,\Pi_T\right)\right]$ subject to the dynamic constraint implicit in \eqref{eq:sec33fictbcwealtheq} for an investor with dual CRRA preferences \eqref{eq:dualCRRA}. Then, 
\begin{equation}\label{eq:sec33thm33ladgrmdsal}
\begin{aligned}
X_T^{\mathrm{opt}}&=\left(\eta Z_T^{\widehat{\nu}}\Pi_T\right)^{-\frac{1}{\gamma_d}}\Pi_T\Ind_{\left\lbrace \eta Z_T^{\widehat{\nu}}\Pi_T\geq 1\right\rbrace}+\left(\eta Z_T^{\widehat{\nu}}\Pi_T\right)^{-\frac{1}{\gamma_u}}\Pi_T\Ind_{\left\lbrace \eta Z_T^{\widehat{\nu}}\Pi_T< 1\right\rbrace}\\
\mathcal{H}\left(\eta\right)&=\eta^{-\frac{1}{\gamma_d}}\EE\left[\widehat{M_T}^{1-\frac{1}{\gamma_d}}\right]\XX_{\gamma_d}^{\lambda_2}\left(\eta\widehat{M_T}\geq {1}\right)+\eta^{-\frac{1}{\gamma_u}}\EE\left[\widehat{M_T}^{1-\frac{1}{\gamma_u}}\right]\XX_{\gamma_u}^{\lambda_2}\left(\eta\widehat{M_T}< {1}\right),
\end{aligned}
\end{equation}
in which $\XX_{\gamma_i}^{\lambda_2}\sim\PP$ are the measures generated by $\XX_{\gamma_i}^{\lambda_2}=\EE\left[\widehat{M_T}^{1-{1}/{\gamma_i}}\right]^{-1}\widehat{M_T}^{1-{1}/{\gamma_i}}$ for $i\in\mathcal{S}:=\left\lbrace d,u\right\rbrace$. The applicable, previsible RRA transformation then must follow 
\begin{equation} 
\begin{gathered}\label{eq:sec3.3thm3.38ed}
\frac{1}{\mathcal{R}^2_t}=\frac{1}{\gamma_d}\XX^{\lambda_2}_{\gamma_d}\left(Y_T^{\mathrm{opt}}\geq B_T\cond\eF_t\right)\frac{X_t^{\gamma_d}}{X_t^{\mathrm{opt}}}+\frac{1}{\gamma_u}\XX^{\lambda_2}_{\gamma_u}\left(Y_T^{\mathrm{opt}}< B_T\cond\eF_t\right)\frac{X_t^{\gamma_u}}{X_t^{\mathrm{opt}}}, \  \mathrm{where}\\ X_t^{\gamma_i}=\eta^{-1/\gamma_i}\EE\left[\widehat{M_T}^{1-1/\gamma_i}\cond\eF_t\right], \ X_t^{\mathrm{opt}}=\EE\left[Z_{t,T}^{\widehat{\nu}}X_T^{\mathrm{opt}}\cond\eF_t\right], \ \mathrm{and} \ Z_{t,T}^{\widehat{\nu}}={Z_t^{\widehat{\nu}}}^{-1}Z_T^{\widehat{\nu}},
\end{gathered}
\end{equation}
such that ${\mathcal{R}_t^1}^{-1}={\mathcal{R}_t^2}^{-1}-1$ defines the proxy for transformed RRA, for all $t\in\left[0,T\right]$.\footnote{In accordance with Theorem \ref{thm2}, we use $Y_0^{-1}B_t^{-1}Y_t=Z_t^{\widehat{\nu}}$. Additionally, $Y_t^{\mathrm{opt}}$ embeds $\{\widehat{\lambda}_{2,s}^{\mathrm{opt}}\}_{s\in\left[0,t\right]}$.}
\end{corollary}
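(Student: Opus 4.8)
The plan is to specialize the general terminal-wealth solution of Proposition \ref{prop1} to the dual CRRA function \eqref{eq:dualCRRA}, exploiting the fact that on each of the two indicator regimes the problem collapses to an ordinary isoelastic computation. First I would substitute the explicit inverse marginal utility $I$ from \eqref{eq:dualCRRA2} into the optimal-wealth identity $X_T^{\mathrm{opt}}=I(\eta Z_T^{\widehat{\nu}},\Pi_T)$. Using the algebraic rewriting $(\eta Z_T^{\widehat{\nu}})^{-1/\gamma_i}\Pi_T^{1-1/\gamma_i}=(\eta Z_T^{\widehat{\nu}}\Pi_T)^{-1/\gamma_i}\Pi_T$ and observing that, since $-1/\gamma_i<0$, the down-state $\{X_T/\Pi_T\leq 1\}$ is precisely $\{\eta Z_T^{\widehat{\nu}}\Pi_T\geq 1\}$, I recover the first line of \eqref{eq:sec33thm33ladgrmdsal} immediately.

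Next I would evaluate the budget identity $\mathcal{H}(\eta)=\EE[X_T^{\mathrm{opt}}Z_T^{\widehat{\nu}}]=X_0$. Writing $\widehat{M_T}=Z_T^{\widehat{\nu}}\Pi_T$ so that $\Pi_T Z_T^{\widehat{\nu}}=\widehat{M_T}$, the product $X_T^{\mathrm{opt}}Z_T^{\widehat{\nu}}$ reduces on the regime governed by $\gamma_i$ to $\eta^{-1/\gamma_i}\widehat{M_T}^{1-1/\gamma_i}$. I would then split the expectation along the two indicator sets and absorb the factor $\widehat{M_T}^{1-1/\gamma_i}/\EE[\widehat{M_T}^{1-1/\gamma_i}]$ into the equivalent measure $\XX_{\gamma_i}^{\lambda_2}$, so that each truncated expectation becomes $\EE[\widehat{M_T}^{1-1/\gamma_i}]$ times a $\XX_{\gamma_i}^{\lambda_2}$-probability of the corresponding event; this is exactly the second line of \eqref{eq:sec33thm33ladgrmdsal}. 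Finiteness of $\EE[\widehat{M_T}^{1-1/\gamma_i}]$, which legitimises the change of measure, follows from the reasonable asymptotic elasticity and Inada conditions imposed in Section \ref{sec2.3}.

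For the risk-aversion proxy I would insert the dual CRRA identity $-U'_X/U''_{XX}=X_T^{\mathrm{opt}}\,(\gamma_d^{-1}\Ind_{\{X_T/\Pi_T\leq 1\}}+\gamma_u^{-1}\Ind_{\{X_T/\Pi_T>1\}})$, read off \eqref{eq:nonCRRA}, into the previsible projection defining ${\mathcal{R}^2_t}^{-1}$ in Proposition \ref{prop1}. On each regime the integrand is again $\eta^{-1/\gamma_i}\widehat{M_T}^{1-1/\gamma_i}$, and a conditional version of the same Bayes change of measure converts the truncated conditional expectations into the probabilities $\XX_{\gamma_i}^{\lambda_2}(\,\cdot\mid\eF_t)$ weighted by the sub-wealth blocks $X_t^{\gamma_i}=\eta^{-1/\gamma_i}\EE[\widehat{M_T}^{1-1/\gamma_i}\mid\eF_t]$, yielding \eqref{eq:sec3.3thm3.38ed}. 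To close with ${\mathcal{R}_t^1}^{-1}={\mathcal{R}_t^2}^{-1}-1$, I would repeat the argument on the ${\mathcal{R}^1_t}^{-1}$ projection, where $I'_Y\widehat{M_T}=(1-1/\gamma_i)\eta^{-1/\gamma_i}\widehat{M_T}^{1-1/\gamma_i}$ on each regime, and then use the tower identity $\EE[X_T^{\mathrm{opt}}Z_T^{\widehat{\nu}}\Ind_{\mathrm{down}}\mid\eF_t]+\EE[X_T^{\mathrm{opt}}Z_T^{\widehat{\nu}}\Ind_{\mathrm{up}}\mid\eF_t]=X_t^{\mathrm{opt}}$, i.e. the two sub-wealth blocks sum to $X_t^{\mathrm{opt}}$, which collapses the cross terms into the constant $-1$.

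I expect the main obstacle to be the bookkeeping of the threshold events and the conditional change of measure rather than any deep estimate. One must identify the wealth-based regime $\{X_T/\Pi_T\leq 1\}$ with its pricing-kernel forms under the section's normalisation $Z_t^{\widehat{\nu}}=Y_0^{-1}Y_tB_t^{-1}$, and verify that the conditional Bayes rule may be applied termwise across the kink of $U$ at $X_T=\Pi_T$; this last point is licensed by the piecewise $\mathcal{C}^1$ hypothesis $I\in\mathcal{PC}$ of Section \ref{sec2.3}, under which $I(\,\cdot\,)$ and $I'_Y$ stay Malliavin-differentiable away from the single breakpoint. Once the summation identity $X_t^{\gamma_d}\XX_{\gamma_d}^{\lambda_2}(\mathrm{down}\mid\eF_t)+X_t^{\gamma_u}\XX_{\gamma_u}^{\lambda_2}(\mathrm{up}\mid\eF_t)=X_t^{\mathrm{opt}}$ is in hand, the relation between the two RRA proxies is purely algebraic.
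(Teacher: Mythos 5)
Your proposal is correct and takes exactly the route the paper intends: the paper's own proof is the one-line deferral ``the statements rise effortlessly from Proposition \ref{prop3},'' and your computation --- substituting the dual CRRA inverse marginal utility into $X_T^{\mathrm{opt}}=I\big(\eta Z_T^{\widehat{\nu}},\Pi_T\big)$, splitting along the two regimes, applying the conditional Bayes change of measure to produce the $\XX_{\gamma_i}^{\lambda_2}$-probabilities, and closing via the identity that the sub-wealth blocks sum to the market value of wealth so that ${\mathcal{R}_t^1}^{-1}={\mathcal{R}_t^2}^{-1}-1$ --- is precisely the specialization the paper leaves implicit. Your handling of the regime identification $\{X_T/\Pi_T\leq 1\}=\{\eta Z_T^{\widehat{\nu}}\Pi_T\geq 1\}$ and of the kink at $X_T=\Pi_T$ is also consistent with the paper's conventions (modulo the paper's own loose bookkeeping of $Y_0$ and $Z_t^{\widehat{\nu}}$ normalization factors in \eqref{eq:sec3.3thm3.38ed}).
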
 
\begin{proof}
The statements and associated proof rise effortlessly from Proposition \ref{prop3}.
\end{proof}

Corollary \ref{col1} carries the expressions that establish the optimality conditions in line with Proposition \ref{prop3} for a dual CRRA agent \eqref{eq:dualCRRA}. Note that closed-form expressions for the optimal portfolio $x_t^{\mathrm{opt}}$ and the market fair value of wealth $X_t^{\mathrm{opt}}$ are readily obtainable, as a result. The previous corollary chiefly discloses the analytical burden that may arise with regard to a recuperation of the optimal shadow prices of non-traded risk. Let us actualise this mathematical distress by inspecting the next expression 
\begin{equation}\label{eq:sec33shadowprice2}
\widehat{\lambda}_{2,t}^{\mathrm{opt}}\left(\mathcal{G}'^{2}_{t,T}-\mathcal{G}'^{1}_{t,T}\right)^{-1}=1-X_t^{\mathrm{opt}}\bigg(\sum_{i\in\mathcal{S}}\frac{1}{\gamma_i}\XX^{\lambda_2}_{\gamma_i}\left(\mathcal{A}_i\cond\eF_t\right)X_t^{\gamma_i}\bigg)^{-1},
\end{equation}
which typifies the identity from which we can obtain the shadow prices, where we use $\mathcal{R}^2_t\left(\mathcal{R}^1_t\right)^{-1}=1-\mathcal{R}_t^2$, for $\mathcal{A}_d=\{ Y_T^{\mathrm{opt}}\geq B_T\}$, $\mathcal{A}_u=\{ Y_T^{\mathrm{opt}}< B_T\}$. The probability weights each depend on $M_T$ and non-linearly shelter $\{\widehat{\lambda}_{2,t}^{\mathrm{opt}}\}_{t\in\left[0,T\right]}$. This path dependency epitomises that a closed-form expression for $\widehat{\lambda}_{2,t}^{\mathrm{opt}}$ is not attainable. In addition, $\widehat{\lambda}_{2,t}^{\mathrm{opt}}$ is restricted to an interval that diverges with the variation between $\gamma_d$ and $\gamma_u$.\footnote{The formula in \eqref{eq:sec33shadowprice2} demonstrates that $\widehat{\lambda}_{2,t}^{\mathrm{opt}}\in[(1-\gamma_d),(1-\gamma_u)](\mathcal{G}'^{1}_{t,T}+\mathcal{G}'^{2}_{t,T})$ roughly holds for all $\omega\in\Omega$ and $t\in\left[0,T\right]$. Evidently, $\gamma_d\rightarrow\gamma_u$ could assist in reducing the size of this approximate interval.} Consequently, fixing $\mathcal{P}=\RR^m$ may provide a level-headed approximation to $\widehat{\lambda}_{2,t}^{\mathrm{opt}}$ for tenable $\gamma_d$, $\gamma_u$.  

Letting $\gamma:=\gamma_d=\gamma_u$ recovers the standard isoelastic framework and provides 
\begin{equation}
X_T^{\mathrm{opt}}=X_0\EE\left[\widehat{M_T}^{1-{1}/{\gamma}}\right]\widehat{M_T}^{-{1}/{\gamma}}\Pi_T,\ \ \mathrm{and} \ \ \widehat{\lambda}_{2,t}^{\mathrm{opt}}=\left(1-\gamma\right)\left(\mathcal{G}'^{2}_{t,T}-\mathcal{G}'^{1}_{t,T}\right),
\end{equation}
from which we infer that the shadow prices still mark off troublesome wholly forward-backward equations, impelled by the fictitious hedging coefficients $\mathcal{G}'^{2}_{t,T}$ and $\mathcal{G}'^{1}_{t,T}$. As a result, the latter equality does not allow us to withdraw $\widehat{\lambda}_{2,t}^{\mathrm{opt}}$ in closed-form. Under the supplementary premises that $\pi_t$ and $r_t$ are $\eF_t^{W_1}$-measurable, and that $\xi_t^{\Pi_2}$ defines a constant, as in for example \cite{brennan2002dynamic}, the optimal shadow prices would characterise constants as well: $\widehat{\lambda}_{2,t}^{\mathrm{opt}}=\left(1-\gamma\right)\xi^{\Pi_2}_t$. By consequence, the optimal decisions $x_t^{\mathrm{opt}}$ explicitly implant the two-fund separation principle, and break down into
\begin{equation}
x_t^{\mathrm{opt}}=\frac{1}{\gamma}{\sigma_t^{S_1}}^{\top^{-1}}\lambda_{1,t}+\left(1-\frac{1}{\gamma}\right){\sigma_t^{S_1}}^{\top^{-1}}\left(\mathcal{G}_{x,t,T}^2-\mathcal{G}_{x,t,T}^1\right),
\end{equation}
where $\mathcal{G}_{x,t,T}^j$, $j=1,2$ are completely spelled out thanks to the analytical existence of $\widehat{\lambda}^{\mathrm{opt}}_{2,t}$. This mathematical convenience provides a proper point of departure for the numerical verification of the approximation technology underpinning $\mathcal{P}\subseteq\DD^{1,2}\left(\left[0,T\right]\right)^m$.

\begin{figure}[!t]
\begin{center}
\includegraphics[scale=1.13]{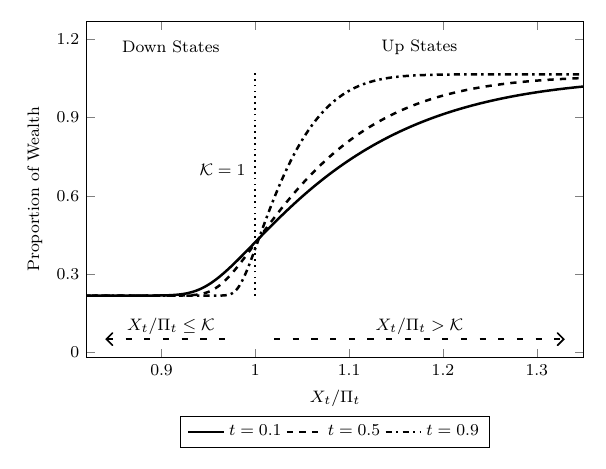}
\end{center}
\caption{\label{f2}\textbf{Optimal allocation to stock.} The figure depicts the optimal portfolio demand in relation to real wealth. To that end, the plot relies on a reduced economic setting for a dual CRRA agent characterised by $\left(\gamma_d,\gamma_u\right)=\left(10,2\right)$ and a reference level $\mathcal{K}=1$ (vertically dotted line). This economy contains a one-dimensional stock, for $\lambda_{1,t}=0.343$ and $\sigma_t^{S_1}=0.158$, wherein $r_t=\pi_t=0$ and $\Pi_t=1$, so that $X_t/\Pi_t$ reads $X_t$. Further, we set $T=1$. Corollary \ref{col1}'s statements suffice to derive the optimality conditions. Here, ${\gamma_d}^{-1}{\lambda_{1,t}}/{\sigma_t^{S_1}}$ and ${\gamma_u}^{-1}{\lambda_{1,t}}/{\sigma_t}^{S_1}$ define the portfolio decisions' floor and cap, respectively. }
\end{figure}

Let us return to the general dual CRRA case in which $\gamma_d,\gamma_u\in\RR_+\setminus\left[0,1\right]$. Revisiting Corollary \ref{col1} for such $\gamma_d,\gamma_u$, we detect an omnipresent weighting texture that characterises the optimal controls, $\widehat{x}_t^{\mathrm{opt}}$ and $X_t^{\mathrm{opt}}$, $0\leq t\leq T$. To underscore this structure and related idiosyncrasy, we \textit{pro tem} restrict the analysis to the mean-variance hedge demand:
\begin{equation}\label{eq:optruledualcrra}
x_t^{m,\mathrm{opt}}=\frac{1}{\gamma_d}{{{\sigma_t^{S_1}}^{\top}}^{-1}\lambda_{1,t}}\XX^{\lambda_2}_{\gamma_d}\left(\mathcal{A}_d\cond\eF_t\right)\frac{X_t^{\gamma_d}}{X_t^{\mathrm{opt}}}+\frac{1}{\gamma_u}{{{\sigma_t^{S_1}}^{\top}}^{-1}\lambda_{1,t}}\XX^{\lambda_2}_{\gamma_u}\left(\mathcal{A}_u\cond\eF_t\right)\frac{X_t^{\gamma_u}}{X_t^{\mathrm{opt}}}.
\end{equation}
These decisions corner wealth-dependent dynamics in consequence of the reciprocal inclusion of $X_t^{\mathrm{opt}}$, which could evade if and only if $\gamma_d=\gamma_u$. The weighting and interpolating anatomy crops up within the projected RRA as a linear combination of the individually detached CRRA demands $\frac{1}{\gamma_d}{{\sigma_t^{S_1}}^{\top}}^{-1}\lambda_{1,t}$ and $\frac{1}{\gamma_u}{{\sigma_t^{S_1}}^{\top}}^{-1}\lambda_{1,t}$ along with the previsible conditional probabilities.\footnote{Absolute continuity infers that $\frac{1}{\gamma_d}({{\sigma_t^{S_1}})^{\top}}^{-1}\lambda_{1,t}$, $\frac{1}{\gamma_u}({{\sigma_t^{S_1}})^{\top}}^{-1}\lambda_{1,t}$ emerge from $\XX^{\lambda_2}_{\gamma_d}\rightarrow 1$, $\XX^{\lambda_2}_{\gamma_u}\rightarrow 1$.} Expressly, for down states of the world, $\mathcal{A}_d$, the tangency rule in \eqref{eq:optruledualcrra} assigns more weight to $\frac{1}{\gamma_d}{{\sigma_t^{S_1}}^{\top}}^{-1}\lambda_{1,t}$ so as to prevent the investor from further dissatisfaction by executing a prudent investment strategy. In up states, $\mathcal{A}_u$, the converse is true, and $x_t$ spells out a less cautious strategy, due to achieved acquisition of $\mathcal{K}=1$. 

Per illustration of the most typical features, Figure \ref{f2} displays $x_{t}^{\mathrm{opt}}$ in an economy where the exclusive risky asset abides by one-dimensional $S_t$ that embeds a constant appreciation rate and volatility coefficient. The plot applies to an agent, personified by $\gamma_d=10$ and $\gamma_u=2$. We provide the allocation in relation to $X_t/\Pi_t$, which is deducible from 
\begin{equation}\label{eq:markvaluesec3}
X_t^{\mathrm{opt}}Z_t^{\widehat{\nu}}=\sum_{i\in\mathcal{S}}\eta^{-1/\gamma_i}\EE\left[\widehat{M_T}^{1-1/\gamma_i}\cond\eF_t\right]\XX^{\lambda_2}_{\gamma}\left(\mathcal{A}_i\cond\eF_t\right)
\end{equation}
All curves in this plot exhibit an interpolating design between the floor and cap around $\mathcal{K}=1$, resembling shifts from prudent to less cautious investment behaviour in line with Figure \ref{f1}.\footnote{The CRRA exposures $\frac{1}{\gamma_d}({{\sigma_t^{S_1}})^{\top}}^{-1}\lambda_{1,t}$, $\frac{1}{\gamma_u}({{\sigma_t^{S_1}})^{\top}}^{-1}\lambda_{1,t}$ do not always provoke the floors and caps of $x_t^{\mathrm{opt}}$ in the economy of Figure \ref{f2}. In the limit, however, the portfolio converges to these extrema.} Following the payoff of a European option, the cutoff points approach $\mathcal{K}=1$ for $t\rightarrow T$, as the investor's concerns and harmonious rapid anticipation increase towards the planning horizon. The decisions indeed show trend-chasing behaviour, cf. \citet{basak2002comparative}. Akin, though `fuzzier', shapes apply to the more general multi-dimensional instances.

\subsection{Brennan and Xia Environment}\label{sec5.2}
In order to facilitate the numerical illustration, which corroborates precision of the approximate technique and relies on \citet{brennan2002dynamic}'s economic environment, we introduce their model setup and pose congruent approximate optimality conditions in the spirit of section \ref{sec4}. To align notation, we preserve the probabilistic schema in section \ref{sec2.1}, wherein we replace $W_t=\left[W^1_t,W^2_t\right]^{\top}$ by the $\RR^4$-valued $\widehat{z}_t=\left[z_t,z_{u,t}\right]^{\top}$. Further, introduce
\begin{equation}
\begin{aligned}
\mathrm{d}r_t&=\kappa\left(\bar{r}-r_t\right)\mathrm{d}t+\sigma_r\mathrm{d}z_{r,t}, \ r_0\in\RR_+ \\ \mathrm{d}\pi_t&=\alpha\left(\bar{\pi}-\pi_t\right)\mathrm{d}t+\sigma_{\pi}\mathrm{d}z_{\pi,t}, \ \pi_0\in\RR_+
\end{aligned}
\end{equation}
as two single-factor Vasicek processes for the instantaneous real interest rate and the expected rate of inflation, respectively. We assume that that the assimilated parameters live by the typical definitions. These processes unequivocally meet the regularity conditions imposed on $r_t,\pi_t$ in the generic financial model construction of section \ref{sec2.1}.

Specifically, we identify the $\RR^3$-valued standard Brownian motion as $z_t=\left[z_{s,t},z_{r,t},z_{\pi,t}\right]^{\top}$. Hence, $z_{u,t}$ represents an undiversifiable source of risk. Correspondingly, we introduce
\begin{equation}\label{eq:sec52pricandspd}
\begin{aligned}
\mathrm{d}\Pi_t&=\Pi_t\left(\pi_t\mathrm{d}t+\xi^{\top}z_t+\xi_u\mathrm{d}z_{u,t}\right), \ \Pi_0=1 \\ \mathrm{d}M_t&=M_t\left(-r_t\mathrm{d}t+\phi^{\top}\mathrm{d}z_t+\phi_{u}\mathrm{d}z_{u,t}\right), \ M_0=1,
\end{aligned}
\end{equation}
as the simplified price index and real SPD, where $\xi=\left[\xi_s,\xi_r,\xi_{\pi}\right]^{\top}\in\RR^3$ and $\phi=\left[\phi_{s},\phi_r,\phi_{\pi}\right]^{\top}\in\RR^3$ characterise the constant factor loadings on $z_t$. In agreement with these dynamics, we let $\mathrm{d}Z_t^{\widehat{\lambda}_u}=Z_t^{\widehat{\lambda}_u}[-R_{f,t}\mathrm{d}t-{\lambda}^{\top}\mathrm{d}z_t-\widehat{\lambda}_{u,t}\mathrm{d}z_{u,t}]$, $Z_0^{\widehat{\lambda}_u}=1$, where $Z_t^{\widehat{\lambda}_u}$ represents the artificial nominal SPD, for an endogenous $\widehat{\lambda}_{u,t}\in\DD^{1,2}\left(\left[0,T\right]\right)$. Here, $z_t$ is independent from $z_{u,t}$ and internally dependent, implied by the correlation matrix $\rho\in\left[-1,1\right]^{3\times 3}$, whose rows equate to $\left(1,\rho_{s,r},\rho_{s\pi}\right)$,$\left(\rho_{sr},1,\rho_{s\pi}\right)$, $\left(\rho_{s\pi},\rho_{s,r},1\right)$.

The actual nominal pricing kernel surfaces from $Z_t=M_t/\Pi_t$ and in part urges the nominal rate $R_{f,t}=r_t+\pi_t-\xi^{\top}\lambda-\xi_u\lambda_{u}$. The asset menu\footnote{The nominal bonds crop up as a consequence of the construction: $P_{i,t}=\EE[Z_{t}^{-1}Z_{T_i}|\eF_t]$, $T_i\geq t$. Observe that we also include $\mathrm{d}B_t=R_{f,t}B_t\mathrm{d}t, \ B_0=1$ in addition to $S_t$ and $P_{i,t}$ as part of the asset mix.} then consists of
\begin{equation}
\begin{aligned}
 \mathrm{d}S_t&=S_t\left[\left(R_{f,t}+\lambda_s\sigma_s\right)\mathrm{d}t+\sigma_s\mathrm{d}z_{s,t}\right],\ S_0=1 \\ \mathrm{d}P_{i,t}&=P_{i,t}\big[\big(R_{f,t}-\mathcal{\upsilon}_{t,T_i}^{\top}\widehat{\sigma}^P{\lambda}^P\big)\mathrm{d}t-\upsilon_{t,T_i}^{\top}\widehat{\sigma}^P\mathrm{d}\widehat{z}_t^P\big], \ P_i=1,
\end{aligned}
\end{equation}
which respectively specify a stock, and two nominal bonds that differ on account of two times to maturity $T_i\in\RR_+$, $i=1,2$. In these stochastic differential equations, we define $\mathrm{vec}(\widehat{\sigma}^P)=[\sigma_r,0,0,\sigma_{\pi}]$, ${\lambda}^P=[\lambda_r,\lambda_{\pi}]^{\top}$, and $\upsilon_{t,T_i}=[\frac{1}{\sigma_r}(1-e^{-\kappa(T-t)}),\frac{1}{\sigma_{\pi}}(1-e^{-\alpha(T-t)})]^{\top}$ for notational appliance. Moreover, the market construction mandates that the constant prices of financial risk obey $\lambda=[\lambda_s,\lambda_r,\lambda_{\pi}]^{\top}=\rho\left(\xi-\phi\right)$ and $\lambda_{u}=\xi_u-\phi_{u}$.

At last, we introduce the following two dynamic processes given that $\Lambda_t=\Sigma_t\lambda$:
\begin{equation}
\begin{aligned}
\mathrm{d}X_t&=X_t\left[\left(R_{f,t}+x_t^{\top}\Lambda_t\right)\mathrm{d}t+x_t^{\top}\Sigma_t\mathrm{d}z_t\right], \ X_0\in\RR_+,\\
\mathrm{d}P_{3,t}&=P_{3,t}\big[\big(R_{f,t}+\sigma_{P_3,t}\widehat{\lambda}_{u,t}\big)\mathrm{d}t+\sigma_{P_3,t}\mathrm{d}z_{u,t}\big],\ P_{3,0}=1.
\end{aligned}
\end{equation}
The first process comprises the agent's wealth dynamics, where we define $x_t$ as the $\RR^3$-valued $\eF_t$-measurable vector containing the proportions of $X_t$ that the agent allocates to the risky instruments. Here, we assume that the portfolio weights satisfy $x_t^{\top}\Sigma_t\Sigma_t^{\top}x_t\in L^1\left(\left[0,T\right]\right)$, where $\Sigma_t\in\RR^{3\times 3}$ accommodates the securities' loadings on $z_t$. Admissibility of $x_t$ holds, if additionally $X_t\geq 0$ for all $t\in\left[0,T\right]$. The second process comprises the fictitious asset, to which $x_{u,t}$ of $X_t$ is allocated. Admissibility of $x_{u,t}$ follows evenly from that of $x_t$.

\subsubsection{Approximate Dual Optimality}
First, we analyse the approximate dual side of the terminal wealth problem in the previous economy. For that purpose, consider Proposition \ref{prop1} and curtail the space of feasible dual controls $\widehat{\lambda}_{u,t}$ to $\mathcal{P}=\RR\subset \DD^{1,2}\left(\left[0,T\right]\right)$. We intuitively verify this repression of the space in the sequel. Tantamount to the results in section \ref{sec4.3}, this operation makes sure that the dual problem generates analytical solutions, which gives rise to Corollary \ref{col2}.  

\begin{corollary}\label{col2} Consider $\inf_{\widehat{\lambda}_{u,t}\in\RR,\eta\in\RR_+}\EE[V(\eta Z_T^{\widehat{\lambda}_u},\Pi_T )]+\eta X_0 $ for $\widehat{\lambda}_{u,t}\in \mathcal{P}$. Then\footnote{In what follows, we denote by $\mathcal{N}\left(\cdot\right)$ the CDF of a univariate standard normal random variable.},
\begin{equation}\label{eq:prop11}
\begin{gathered}
\sum_{i\in\mathcal{S}}X_0^{\gamma_i}\mathcal{N}\left(d_{0,T}^{M_i}\right)=X_0,\ \ \mathrm{\&}\ \
\widehat{\lambda}_{u,t}^{U,*}=\left(1-X_0\left(\sum_{i\in\mathcal{S}}\frac{1}{\gamma_i}\mathcal{N}\left(d_{0,T}^{M_i}\right)X^{\gamma_i}_0\right)^{-1}\right)\xi_u,
\end{gathered}
\end{equation}
for $\widehat{\phi}=\big[\phi,-\widehat{\lambda}_{u}^{U,*}+\xi_u\big]^{\top}$ such that $\widehat{\lambda}_{u,t}=:\widehat{\lambda}_{u,t}^{U,*}$. The value function hence agrees to
\begin{equation}\label{eq:upperbound}
\begin{aligned}
\widehat{J^{U,\mathrm{opt}}}\left(X_0\right)&=\frac{\eta X_0^{\gamma_d}}{1-\gamma_d}\mathcal{N}\left(d_{0,T}^{M_d}\right)+\frac{\eta X_0^{\gamma_u}}{1-\gamma_u}\mathcal{N}\left(-d_{0,T}^{M_u}\right)+H\left(\gamma_d,\gamma_u\right),
\end{aligned}
\end{equation} 
in which $H\left(\gamma_d,\gamma_u\right)=\frac{\mathcal{N}\left(d_{1}\right)}{1-\gamma_d}+\frac{\mathcal{N}\left(-d_{1}\right)}{1-\gamma_u}$, for $d_1={\sigma_T^M}^{-1}\left(\log\left(\eta^{U,*}\right)+\mu_T^M\right)$, and 
\begin{equation}
\begin{aligned}
 \mu^{M_i}_{0,T}&=\left(\bar{r}-r_0\right)\frac{1}{\sigma_r}B_{0,T}-\left(\bar{r}-\xi_u\left(\lambda_{u}-\widehat{\lambda}_{u}^{U,*}\right) +\frac{1}{2}\widehat{\phi}^{\top}\widehat{\rho}\widehat{\phi} \right)T+\Big(1-\frac{1}{\gamma_i}\Big){\sigma^{M}_T}^{2}\\ {\sigma^{M}_T}^{2}&=\widehat{\phi}^{\top}\widehat{\rho}\widehat{\phi} T-\frac{\sigma_r^2}{2\kappa^2}\left(2\left(\frac{1}{\sigma_r}{B_{0,T}}-T\right)+\sigma_r^{-2}\kappa B_{0,T}^2 \right)-\frac{2\sigma_r}{\kappa}e_2^{\top}\rho\phi\left(T-\frac{1}{\sigma_r}B_{0,T}\right)
\end{aligned}
\end{equation}
for $i\in\mathcal{S}$ and $B_{t,T}:=\frac{\sigma_r}{\kappa}\left(1-e^{-\kappa\left(T-t\right)}\right)$, where $\widehat{\rho}$ denotes the correlation matrix of $\widehat{z}_t$. Then, $d_{0,T}^{M_d}={\sigma_T^M}^{-1}\big(\log\left(\eta^{U,*}\right)+\mu_{0,T}^{M_d}\big)$ and  $d_{0,T}^{M_u}=-{\sigma_T^M}^{-1}\left(\log\left(\eta^{U,*}\right)+\mu_{0,T}^{M_u}\right)$. Furthermore,
\begin{equation}\label{eq:prop12}
X_0^{\gamma_i}=\eta^{U,*}\exp\left(\left(1-{1}/{\gamma_i}\right)\mu_T^M+2^{-1}{\sigma_T^M}^2\left(1-{1}/{\gamma_i}\right)^2\right), \ i\in\mathcal{S}.
\end{equation}
\end{corollary}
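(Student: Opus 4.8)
The plan is to specialize Corollary \ref{col1} together with the approximate-dual system \eqref{eq:sec43syseq} to the Gaussian environment at hand, whereby every expectation collapses to a one-dimensional Gaussian integral. The decisive simplification is the restriction $\mathcal{P}=\RR$: it forces $\widehat{\lambda}_{u,t}$ to be a deterministic constant, so that, together with the constant market prices of risk $\lambda$ and the constant volatility loadings, the only stochasticity remaining in the fictitious real pricing kernel $\widehat{M_T}$ stems from the two Vasicek factors. I would begin by recalling from Corollary \ref{col1} that optimal wealth splits on the event $\{\eta\widehat{M_T}\gtrless 1\}$ with local curvatures $\gamma_d,\gamma_u$, and that the multiplier and shadow-price conditions are $\mathcal{H}(\eta)=X_0$ together with the FOC for $\widehat{\lambda}_{u,t}$ inherited from \eqref{eq:lambdaucharopt1}.

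Next I would pin down the law of $\log\widehat{M_T}$. Using $R_{f,s}-\pi_s=r_s-\xi^{\top}\lambda-\xi_u\lambda_u$ and the stochastic-exponential representation of $\widehat{M_T}$ with loading vector $\widehat{\phi}=[\phi,-\widehat{\lambda}_u^{U,*}+\xi_u]^{\top}$, one writes $\log\widehat{M_T}=-\int_0^T r_s\,\mathrm{d}s+(\text{const})\,T+(\text{Gaussian martingale})$. The Vasicek integral $\int_0^T r_s\,\mathrm{d}s$ is Gaussian with mean $\bar{r}T+(r_0-\bar{r})\tfrac{1}{\sigma_r}B_{0,T}$, where $\tfrac{1}{\sigma_r}B_{0,T}=\tfrac{1-e^{-\kappa T}}{\kappa}$; this reproduces the $(\bar{r}-r_0)\tfrac{1}{\sigma_r}B_{0,T}-\bar{r}T$ piece of $\mu^{M_i}_{0,T}$. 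Its variance, the martingale variance $\widehat{\phi}^{\top}\widehat{\rho}\widehat{\phi}\,T$, and the covariance between them assemble into ${\sigma^M_T}^2$, with the $\tfrac{\sigma_r^2}{2\kappa^2}$ self-variance terms and the $\tfrac{2\sigma_r}{\kappa}e_2^{\top}\rho\phi$ cross term. This variance/covariance bookkeeping for the integrated short rate — tracking each $B_{0,T}$ contribution and the cross term with $\rho\phi$ — is the only genuinely laborious step; everything downstream is routine log-normal integration.

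With $\log\widehat{M_T}$ identified as Gaussian, I would evaluate the remaining integrals by completing the square. Each regime contribution $\eta^{-1/\gamma_i}\EE[\widehat{M_T}^{1-1/\gamma_i}\Ind_{\{\eta\widehat{M_T}\gtrless 1\}}]$ factors as $\EE[\widehat{M_T}^{1-1/\gamma_i}]$ times a tilted probability: under the measure $\XX_{\gamma_i}^{\lambda_2}$ of Corollary \ref{col1} the Cameron--Martin shift moves the mean of $\log\widehat{M_T}$ by $(1-1/\gamma_i){\sigma^M_T}^2$, which is exactly the extra term carried in $\mu^{M_i}_{0,T}$, and the indicator event becomes $\mathcal{N}(d^{M_i}_{0,T})$. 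This yields \eqref{eq:prop12} for $X_0^{\gamma_i}$ via the log-normal moment-generating function and, upon summing the two regimes, the budget equation $\sum_{i\in\mathcal{S}}X_0^{\gamma_i}\mathcal{N}(d^{M_i}_{0,T})=X_0$. The shadow-price FOC specializes cleanly because the sole loading of $\widehat{M_T}$ on $z_{u,t}$ is the constant $\xi_u$ (the $\mathcal{D}^{W^2}$-derivative terms being deterministic), collapsing \eqref{eq:lambdaucharopt1} to the stated linear relation $\widehat{\lambda}_{u,t}^{U,*}=(1-X_0(\sum_{i\in\mathcal{S}}\tfrac{1}{\gamma_i}\mathcal{N}(d^{M_i}_{0,T})X_0^{\gamma_i})^{-1})\xi_u$.

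Finally I would assemble the value function by inserting $X_T^{\mathrm{opt}}$ into the conjugate $V(\eta\widehat{M_T},\Pi_T)$, splitting on the two regimes, and evaluating each piece with the same Gaussian machinery. The $(\,\cdot\,)^{1-\gamma_i}$ pieces reproduce the $\tfrac{\eta X_0^{\gamma_i}}{1-\gamma_i}\mathcal{N}(\cdot)$ terms, while the additive $-1/(1-\gamma_i)$ offsets built into the dual CRRA specification \eqref{eq:dualCRRA} integrate against the regime probabilities to the constant $H(\gamma_d,\gamma_u)=\tfrac{\mathcal{N}(d_1)}{1-\gamma_d}+\tfrac{\mathcal{N}(-d_1)}{1-\gamma_u}$, delivering \eqref{eq:upperbound}. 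All claims then follow from Proposition \ref{prop3} with these explicit Gaussian evaluations substituted in, the main obstacle throughout being the short-rate variance computation rather than any conceptual difficulty.
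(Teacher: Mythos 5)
Your proposal is correct and follows essentially the same route as the paper: the paper's proof is the single remark that $\mu_T^M=\mu_{0,T}^{M_i}-\left(1-1/\gamma_i\right){\sigma_T^M}^2$ together with an appeal to Corollary \ref{col1}, and that identity is precisely the Cameron--Martin tilt under $\XX_{\gamma_i}^{\lambda_2}$ that you make explicit, with the same Gaussian law for $\log\widehat{M_T}$, the same collapse of the FOC to a multiple of $\xi_u$ (since $\mathcal{G}'^{1}_{t,T}=-\xi_u$ and the $\mathcal{D}^{z_u}$-terms in $\mathcal{G}'^{2}_{0,T}$ vanish for constant shadow prices), and the same regime-split log-normal integration for the bounds. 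Your write-up simply supplies the bookkeeping the paper leaves implicit.
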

\begin{proof}
Note that $\mu_T^M=\mu_{0,T}^{M_i}-(1-1/\gamma_i){\sigma_T^M}^2$: Corollary \ref{col1} suffices to derive the results. 
\end{proof}

Analogous to the approximate results in section \ref{sec4.3}, Corollary \ref{col2} effectively discloses the results in Corollary \ref{col1} for $t=0$ applied to the financial market under study. Hence, closed-form approximate solutions emerge via the system of equations in \eqref{eq:prop11}. To certify the call for an approximation to $\widehat{\lambda}_{u,t}^{\mathrm{opt}}$ by $\widehat{\lambda}_{u,t}^{U,*}$ in \eqref{eq:prop11}, let us inspect the former's semi-analytic.\footnote{Let us remark that for example $X_T^{\mathrm{opt}}$ as well as $\eta^{\mathrm{opt}}$ are in this instance latent in Corollary \ref{col1}.} Accordingly, let us note that $\mathcal{G}'^{1}_{t,T}=-\xi_u$ and observe that $\mathcal{G}'^{2}_{t,T}$ represents a previsible process, which pursuant to Corollary \ref{col1} quarters $\{\widehat{\lambda}^{\mathrm{opt}}_{u,s}\}_{s\in\left[t,T\right]}$ as
\begin{equation}
\mathcal{G}'^{2}_{t,T}=\mathcal{R}^2_t\EE\left[\frac{X_T^{\mathrm{opt}}Z_T^{\widehat{\lambda}_u}}{\EE\big[X_T^{\mathrm{opt}}Z_T^{\widehat{\lambda}_u}\ \big| \ \eF_t\big]}\mathcal{R}_{x,T}^{-1}\left(\int_t^T\mathcal{D}^{z_u}_t\widehat{\phi}_{u,s}^{\mathrm{opt}}\mathrm{d}z_{u,s}-\int_t^T\mathcal{D}^{z_u}_t\widehat{\phi}_{u,s}^{\mathrm{opt}}\mathrm{d}s\right)\cond\eF_t\right].
\end{equation}
Let us recall that the parameter $\widehat{\phi}_{u,t}^{\mathrm{opt}}=\xi_u-\widehat{\lambda}_{u,t}^{\mathrm{opt}}$ lodges the shadow price of non-traded risk, in tandem with the fact that both $X_T^{\mathrm{opt}}Z_T^{\widehat{\lambda}_u}$ and $\mathcal{R}_{x,T}^{-1}$ analogously encircle integral representations enveloping $\widehat{\lambda}_{u,t}^{\mathrm{opt}}$, for any $\gamma_d\neq\gamma_u$. As a consequence, the possibly non-linear dependency of $\mathcal{G}'^{2}_{t,T}$ on the entire paths of the shadow price is indisputable. In virtue of the results in Corollary \ref{col1}, the analytical identity from which we may restore the truly optimal shadow price $\widehat{\lambda}_{u,t}^{\mathrm{opt}}$ immediately follows, similarly from \eqref{eq:sec33shadowprice2}, and ought to obey 
\begin{equation}\label{eq:sec52dualcontr}
\widehat{\lambda}_{u,t}^{\mathrm{opt}}\left(\mathcal{G}'^{2}_{t,T}+\xi_u\right)^{-1}=1-X_t\left(\sum_{i\in\mathcal{S}}\frac{1}{\gamma_i}\XX^{\lambda_2}_{\gamma_i}\left(\mathcal{A}_i\cond\eF_t\right)X_t^{\gamma_i}\right)^{-1},
\end{equation}
which we cannot solve for all $\gamma_d,\gamma_u\in\RR_+\setminus\left[0,1\right]$ such that $\gamma_d\neq\gamma_u$, due to the aforementioned path-dependency. In other words, $\widehat{\lambda}_{u,t}^{\mathrm{opt}}$ is solely acquirable for CRRA preferences, $\gamma_d=\gamma_u$. In that case, $\mathcal{G}'^{2}_{t,T}$ disappears alongside the path-dependent probability weights, culminating in $\widehat{\lambda}_{u,t}^{\mathrm{opt}}=\left(1-\gamma\right)\xi_u$; the approximation is exact under $\mathcal{P}=\RR$ for CRRA investors. For any $\gamma_d,\gamma_u\in\RR_+\setminus\left[0,1\right]$, we find that roughly $\widehat{\lambda}_{2,t}^{\mathrm{opt}}\in\left[\left(1-\gamma_d\right),\left(1-\gamma_u\right)\right]\left(\mathcal{G}'^{2}_{t,T}+\xi_u\right)$. On those grounds, letting $\mathcal{P}=\RR$ is sensible in consideration of approximation purposes. 
\subsubsection{Approximate Primal Optimality}
The suppression of $\DD^{1,2}\left(\left[0,T\right]\right)$ to $\mathcal{P}=\RR$ on the dual side sets out totally analytical expressions, amongst which an upper bound on the value function $\widehat{J^{U}}\left(X_0\right)$. In spite of this mathematical facility, it fails to procreate feasible portfolio decisions. In accordance with the technique in section \ref{sec4.3}, we use the optimality criterion implicit in Corollary \ref{col1} for the artificial portfolio rules $\widehat{x}_t^{\mathrm{opt}}$ and project it onto the feasible region of $x_t$'s:
\begin{equation}
X_T^{*}=X_0+\int_0^T\left(R_{f,t}+x_t^{\mathrm{opt},\top}\big|_{\widehat{\lambda}_{u,t}\in\RR}\Lambda_t\right)X_t^{*}\mathrm{d}t+\int_0^Tx_t^{\mathrm{opt},\top}\big|_{\widehat{\lambda}_{u,t}\in\RR}\Sigma_tX_t^{*}\mathrm{d}z_t,
\end{equation}
results, in which we let $x_t^{\mathrm{opt}}|_{\widehat{\lambda}_{u,t}\in\RR}$ be the baseline-optimal portfolio rules $x_t^{\mathrm{opt}}$ that arise out of Collary \ref{col2}, with the inclusion of a deterministic shadow price $\widehat{\lambda}_{u,t}:=\widehat{\lambda}_{u,t}^{L,*}\in\RR$. To obtain the foregoing admissible approximation, we rely on a projection kernel similar to \eqref{eq:sec43projectkern}, which nullifies any remaining allocation to the fictitious security $P_{3,t}$. Corollary \ref{col3} contains the congruous approximate portfolio weights for an unspecified $\widehat{\lambda}_{u,t}^{L,*}\in\RR$.

\begin{corollary}\label{col3}
Consider $\widehat{x}_t^{\mathrm{opt}}$ implicit in Corollary \ref{col2} for $x_{u,t}=0$. Consequently,
\begin{equation}\label{eq:optwqealthprop2}
\begin{gathered}
\bar{x}_t^{*}=\left({-A_{t,T}+{\lambda}}\right)\sum_{i\in\mathcal{S}}\frac{1}{\gamma_i}\frac{X^{\gamma_i}_t}{{X_t}^{*}}\mathcal{N}\left(d_{t,T}^{M_i}\left(1-2\Ind_{\left\lbrace i=u\right\rbrace}\right)\right)+A_{t,T},
\end{gathered}
\end{equation} 
where $\bar{x}_t^{*}=\Sigma_t^{\top}x_t^{*}$ for $x_t^{*}:=x_t^{\mathrm{opt}}|_{\widehat{\lambda}_{u,t}\in\RR}$, and $A_{t,T}=-B_{t,T}e_2+\xi$, along with $X^{\gamma_i}_t=\EE\big[\eta^{-1/\gamma_i}\widehat{M}_T^{1-1/\gamma_i}\ \big| \ \eF_t\big]$. More, let $d_{t,T}^{M_i}=({\sigma_{t,T}^M})^{-1}\left({\log\left(\eta^{L,*} M_t\right)+\mu_{t,T}^{M_i}}\right)$, and 
\begin{equation}
\begin{aligned}
 \mu^{M_i}_{t,T}&=\left(\bar{r}-\xi_u\left(\lambda_{u}-\widehat{\lambda}_{u}^{L,*}\right)+\frac{1}{2}\widehat{\phi}^{\top}\widehat{\rho}\widehat{\phi}\right)\Delta_{T,t}+\left(\bar{r}-r_t\right)\frac{B_{t,T}}{\sigma_r} +\Big(1-\frac{1}{\gamma_i}\Big){\sigma^{M}_{t,T}}^{2}\\ {\sigma^{M}_{t,T}}^{2}&=\widehat{\phi}^{\top}\widehat{\rho}\widehat{\phi} (T-t)-\frac{\sigma_r^2}{\kappa^2}\left(\frac{B_{t,T}}{\sigma_r}-\Delta_{t,T}+\frac{\kappa  B_{t,T}^2 }{2\sigma_r^2}\right)-\frac{2\sigma_r}{\kappa}e_2^{\top}\rho\phi\left(\Delta_{t,T}-\frac{B_{t,T}}{\sigma_r}\right)
\end{aligned}
\end{equation}
for $i\in\mathcal{S}$ and $\Delta_{t,T}:=T-t$. Then, $X_T^{*}$ is budget-feasible for all $\eta^{L,*}\in\RR$. Further,
\begin{equation}
X_t^{*}={X_t^{\gamma_d} Z_t^{\widehat{\lambda}_u}}\mathcal{N}\left(d_{t,T}^{M_d}\right)+{X_t^{\gamma_u} Z_t^{\widehat{\lambda}_u}}\mathcal{N}\left(-d_{t,T}^{M_u}\right),
\end{equation}
which singularises the market value of $I\big(B_T^{-1}Y_T^{\mathrm{opt}},\Pi_T\big)$ evaluated at any time $t\in\left[0,T\right]$. Notice that the value function $\widehat{J^L}\left(X_0\right)$ for \eqref{eq:optwqealthprop2} is not available in closed-form.
\end{corollary}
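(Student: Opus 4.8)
The plan is to obtain every quantity in Corollary~\ref{col3} as the time-$t$ analogue of the $t=0$ expressions in Corollary~\ref{col2}, exploiting that under the Brennan--Xia specification the artificial real pricing kernel $\widehat{M}_T=Z_T^{\widehat{\lambda}_u}\Pi_T$ is, conditionally on $\eF_t$, log-normally distributed. First I would recall from Corollary~\ref{col1} that dual-CRRA optimal horizon wealth splits across the two regimes $\mathcal{A}_d=\{\eta\widehat{M}_T\geq 1\}$ and $\mathcal{A}_u=\{\eta\widehat{M}_T<1\}$, so that the market value $X_t^{\mathrm{opt}}=\EE[Z_{t,T}^{\widehat{\lambda}_u}X_T^{\mathrm{opt}}\mid\eF_t]$ reduces, in each regime, to a conditional expectation of $\eta^{-1/\gamma_i}\widehat{M}_T^{1-1/\gamma_i}$ against the indicator of $\mathcal{A}_i$. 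The Vasicek dynamics of $r_t$ and $\pi_t$, together with the constant loadings $\xi,\phi$, the constant prices of risk $\lambda$, and the constant approximate control $\widehat{\lambda}_u^{L,*}\in\RR$, make $\log\widehat{M}_T$ Gaussian given $\eF_t$; integrating the Ornstein--Uhlenbeck paths from $t$ to $T$ produces the mean $\mu_{t,T}^{M_i}$ and variance $(\sigma_{t,T}^M)^2$ stated in the corollary, the $B_{t,T}$ and $e_2^{\top}\rho\phi$ terms being exactly the contributions of $\int_t^T r_s\,\mathrm{d}s$ and its covariance with the shocks.

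Second, I would apply the Esscher change of measure that underlies Corollary~\ref{col1}: defining $\XX_{\gamma_i}^{\lambda_2}$ through $\mathrm{d}\XX_{\gamma_i}^{\lambda_2}/\mathrm{d}\PP\propto\widehat{M}_T^{1-1/\gamma_i}$ factorises $\EE[\widehat{M}_T^{1-1/\gamma_i}\Ind_{\mathcal{A}_i}\mid\eF_t]$ into $\EE[\widehat{M}_T^{1-1/\gamma_i}\mid\eF_t]$ times the tilted conditional probability of $\mathcal{A}_i$. Since the tilt merely shifts the Gaussian mean by $(1-1/\gamma_i)(\sigma_{t,T}^M)^2$, each probability collapses to a univariate normal CDF evaluated at $d_{t,T}^{M_i}$ (with the sign flip captured by $1-2\Ind_{\{i=u\}}$ for the $u$-regime), which yields the closed-form expression for $X_t^*$ and, after summation over $i\in\mathcal{S}$, the explicit Brennan--Xia form of \eqref{eq:markvaluesec3}. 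Budget-feasibility for every $\eta^{L,*}\in\RR$ is then immediate: $X_T^*$ is the terminal value of the self-financing wealth equation with $x_{u,t}=0$, so the static constraint $\EE[X_T^*Z_T^{\widehat{\lambda}_u}]=X_0$ holds by construction irrespective of the constant shadow price chosen.

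Third, for the portfolio weights I would feed the now-explicit value $X_T^{\mathrm{opt}}$ into the Clark--Ocone representation of Proposition~\ref{prop1}, i.e. compute the Malliavin derivative $\mathcal{D}_t^{z}\widehat{M}_T$. Because $\log\widehat{M}_T$ is an affine functional of the Gaussian drivers, its Malliavin derivative is the deterministic loading vector corrected by the mean-reversion sensitivity of $\int_t^T r_s\,\mathrm{d}s$, which is precisely $A_{t,T}=-B_{t,T}e_2+\xi$. Differentiating the regime-wise wealth value, deflating as in \eqref{eq:genoptrulesfict}, and projecting onto the traded block $x_{u,t}=0$ collects the myopic ($\lambda$-proportional) and intertemporal-hedging ($A_{t,T}$-proportional) pieces into the probability-weighted interpolation $\bar{x}_t^*=(-A_{t,T}+\lambda)\sum_{i\in\mathcal{S}}\gamma_i^{-1}(X_t^{\gamma_i}/X_t^*)\mathcal{N}(\cdots)+A_{t,T}$.

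The hard part will be the Gaussian bookkeeping in the first two steps: correctly integrating the two coupled Vasicek factors to obtain $\mu_{t,T}^{M_i}$ and $(\sigma_{t,T}^M)^2$, carrying the full correlation matrix $\widehat{\rho}$ through the quadratic variation, and tracking the Esscher mean shift so that the thresholds defining $\mathcal{A}_d,\mathcal{A}_u$ line up with the arguments $d_{t,T}^{M_i}$ under each tilted measure. Once these moments are pinned down, the portfolio formula and the value $X_t^*$ follow mechanically, and the final assertion---that $\widehat{J^L}(X_0)$ admits no closed form---holds because, after the projection $x_{u,t}=0$ with a merely constant $\widehat{\lambda}_u^{L,*}$, the approximate $X_T^*$ no longer coincides with the true optimiser, so $\EE[U(X_T^*,\Pi_T)]$ cannot be reduced beyond the Monte Carlo evaluation of Section~\ref{sec4.2}.
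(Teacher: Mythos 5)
Your proposal is correct and takes essentially the same route as the paper: the paper's own proof is a one-liner recording the conditional log-normal closed form for $X_t^{\gamma_i}$ and defers everything else to Corollaries \ref{col1}--\ref{col2} and Proposition \ref{prop1}, which encode precisely the time-$t$ conditioning, the tilted measures $\XX_{\gamma_i}^{\lambda_2}$ (your Esscher argument), the Clark--Ocone computation of the weights, and the projection $x_{u,t}=0$ that you spell out. You merely make explicit the Gaussian bookkeeping and the budget-feasibility (local-martingale) argument that the paper leaves implicit in its citation chain.
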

\begin{proof}
Here, $X_t^{\gamma_i}={\eta^{L,*}}^{-{1}/{\gamma_i}}\widehat{M}_t^{1-{1}/{\gamma_i}}\exp\big(\left(1-{1}/{\gamma_i}\right)\mu_{t,T}^{M_i}-\frac{1}{2}{\sigma_{t,T}^M}^2\left(1-{1}/{\gamma_i}\right)^2\big)$, $i\in\mathcal{S}$.
\end{proof}
The prior corollary particularises an analytically tractable approximation to the optimal portfolio rules concealed in an application of Corollary \ref{col1} to the economy at hand, without regard to the in closed-form non-existent $\eta^{L,*}\in\RR_+$ and $\widehat{\lambda}_{u,t}^{L,*}\in\RR$.\footnote{For financial intuition applicable to the previous two corollaries, see the analysis around Corollary \ref{col1}. Throughout, we set $\mathrm{D}_{\RR}\big(\widehat{\theta}^{L,*},\widehat{\theta}^{U,*}_t\big)=\widehat{J^{U,\mathrm{opt}}}-\widehat{J^{L,\mathrm{opt}}}$, for $\widehat{\theta}^{L,*}_t=\big(\widehat{\lambda}_{u,t}^{L,*},\eta^{L,*}\big)$ and  $\widehat{\theta}^{U,*}=\big(\widehat{\lambda}_{u,t}^{U,*},\eta^{U,*}\big)$.} To avoid numerical optimization approaches when identifying these controls, we may utilise the approximate optimal rules \eqref{eq:optwqealthprop2} in the sense of $\widehat{\mathrm{proj}}_{\widehat{\mathcal{A}}'^{\mathcal{P}}_{X_0}}$. Then, we insert the controls of Corollary \ref{col2} into the one above for full identification. Compared with ${\mathrm{proj}}_{\widehat{\mathcal{A}}'^{\mathcal{P}}_{X_0}}$, for which 
\begin{equation}
\widehat{J^{L,\mathrm{opt}}}\left(X_0\right)=\sup_{\widehat{\lambda}_{u,t}^{L,*}\in\RR,\eta^{L,*}\in\RR_+}\EE\left[U\left(X_T^{*},\Pi_T\right)\right]
\end{equation}
must provoke the optimal primal controls, we discern the potential numerical burden, cf. section \ref{sec4.2}. Subsequently, we compare the two approaches. In either case, we arrive at closed-form expressions for the optimal portfolio composition, which accompany lower and upper bounds, $\widehat{J^{L,\mathrm{opt}}}$ and $\widehat{J^{U,\mathrm{opt}}}$ on the optimal value function. The gap, $\mathrm{D}_{\RR}\big(\widehat{\theta}^{L,*}_t,\widehat{\theta}^{U,*}_t\big)$, ensues easily and the harmonious $\mathcal{CV}$ must be determined numerically, see section \ref{sec4.2}.

\subsection{Main Numerical Results}\label{sec5.3}
\begin{table}[!t]\centering
\begin{adjustbox}{max width=\textwidth}
\begin{threeparttable}
\def\sym#1{\ifmmode^{#1}\else\(^{#1}\)\fi}
\sisetup{table-space-text-post = \sym{***}}
\footnotesize
\begin{tabular}{l*{13}{S[table-align-text-post=false]}}

\toprule[2pt]
                &\multicolumn{1}{c}{Parameter}&\multicolumn{1}{c}{Value}&\multicolumn{1}{c}{}&\multicolumn{1}{c}{Parameter}&\multicolumn{1}{c}{Value}&\multicolumn{1}{c}{}&\multicolumn{1}{c}{Parameter}&\multicolumn{1}{c}{Value}&\multicolumn{1}{c}{}&\multicolumn{1}{c}{Parameter}&\multicolumn{1}{c}{Value}&\\
               
\midrule
 &  \multicolumn{2}{c}{$S_t$} & & \multicolumn{2}{c}{$Z_t$} & & \multicolumn{2}{c}{$r_t$} & & \multicolumn{2}{c}{$\pi_t$}\\
\cline{2-3}\cline{5-6}\cline{8-9}\cline{11-12}
\addlinespace
 & $\sigma_s$ & 0.158 &  &$\phi_S$  & -0.333 & & $\bar{r}$& 0.012 & & $\bar{\pi}$  & 0.054 \\
\addlinespace
  & $\lambda_s$ & 0.343 &  & $\phi_r$  & 0.170 & & $\kappa$ & 0.613 & & $\alpha$ &  0.027 \\
\addlinespace
  &  $\lambda_u$ & 0.027 &  & $\phi_{\pi}$  & 0.120 & & $\sigma_r$&  0.026 & & $\sigma_{\pi}$ &  0.014\\	 		
  \addlinespace
  & $\rho_{sr}$ & -0.129 &  &  $\phi_u$  & -0.014  & & $\lambda_r$&  -0.209 & & $\lambda_{\pi}$ & -0.105 \\	 		 
\addlinespace
&  & &  &   $\xi_u$  & 0.013  & & $\rho_{s\pi}$&  -0.024 & & $\rho_{r\pi}$ & -0.061 \\	 
\addlinespace
\midrule		 

\end{tabular}
\end{threeparttable}
\end{adjustbox}
\newline
\caption{\label{tab1}\textbf{Parameter input.} This table reports the benchmark parameter input that we employ in providing the numerical illustrations. The values in this table are identical to those that are documented in Table 1 of \citet{brennan2002dynamic}. We complement their baseline parameters with explicit values for $\lambda_u$ and $\phi_u$, that we set equal to the means of their counterparts $\lambda$ and $\phi$. Note that ${\xi}={0}_3$ holds. We keep the benchmark planning horizons fixed at $T=5$ and $T=10$, as unambiguously indicated. }
\end{table}

We progress by assessing the performance of the approximate method in the preceding economy for $\mathcal{P}=\RR$ and dual CRRA individuals on the basis of a couple of numerical examples. Towards this end, we put the benchmark parameter input as reported in \cite{brennan2002dynamic} to use. Tangibly, Table \ref{tab1} records these benchmark values. The results appearing in this subsection arise as a consequence of $N=10,000$ simulated paths for time increments $\Delta t_i=0.05$ as part of an Euler scheme, such that $M=T/0.05$. We consider two planning horizons, $T=5$ and $T=10$, because the inherently visible patterns therein carry over to extensions of these terminal dates. In addition to that, we study three dual CRRA investors whose risk profiles conform to $\left(\gamma_d,\gamma_u\right)\in\left\lbrace\left(5,5\right),\left(10,2\right),\left(15,3\right)\right\rbrace$ for a benchmark level and endowment equally identified with $\mathcal{K}=X_0=1$. The first element of the previous set mirrors the preferences of a standard CRRA agent, which we include for the sake of a validity check. That is to say, the approximation is exact for CRRA individuals, from where $\mathrm{D}_{\RR}\big(\widehat{\theta}^{L,*}_t,\widehat{\theta}^{U,*}_t\big)=0$ must hold, modulo numerical errors. For the remaining two investors, their risk profiles diverge in form of the coefficients of risk aversion, which leads to inexact approximations: $\mathrm{D}_{\RR}\big(\widehat{\theta}^{L,*}_t,\widehat{\theta}^{U,*}_t\big)>0$ ought to be true.\footnote{The width of the discrepancy between the uncoupled levels of risk aversion differs for these investors. We note that the approximation's precision practically decreases with this width according to \eqref{eq:sec52dualcontr}.} Last, in opposition to \citet{bick2013solving}, we do not simulate the upper bound $\widehat{J^{U}}\left(X_0\right)$ in order to preclude estimations biases concerning the compensating variations. Rather, we utilise the analytical bound to emphasise the method's accuracy.   

\begin{table}[!t]\centering
\begin{adjustbox}{max width=\textwidth}
\begin{threeparttable}
\def\sym#1{\ifmmode^{#1}\else\(^{#1}\)\fi}
\sisetup{table-space-text-post = \sym{***}}
\begin{tabular}{@{\extracolsep{15pt}}l*{7}{S[table-align-text-post=false]}}
\toprule[2.5pt]
                &\multicolumn{1}{c}{$\gamma_{1}$}&\multicolumn{1}{c}{$\gamma_{2}$}&\multicolumn{1}{c}{$\gamma_{3}$}&\multicolumn{1}{c}{$\gamma_{1}$}&\multicolumn{1}{c}{$\gamma_{2}$}&\multicolumn{1}{c}{$\gamma_{3}$}\\
\cline{2-4}\cline{5-7}
\addlinespace               
 & & {$T=5$}  &  &  & {$T=10$}  &  \\
\midrule
\addlinespace
\addlinespace
 LB & 0.135 & 0.234 & 0.178 & 0.193  & 0.416& 0.292 \\
   & {$\left(0.133,0.137\right)$} & {$\left(0.229,0.239\right)$} & {$\left(0.175,0.181\right)$}  &{$\left(0.192,0.195\right)$} & {$\left(0.410,0.422\right)$} & {$\left(0.289,0.295\right)$}\\
\addlinespace
\addlinespace
  UB & 0.136& 0.235 & 0.181 & 0.194  & 0.419& 0.295 \\
  \addlinespace
\addlinespace
 CV & 0.001 & 0.001 & 0.003 & 0.002  & 0.005 &0.004 \\
\addlinespace
\addlinespace
 AL & 1.664 & 2.314 & 4.308 & 1.885  & 4.600 & 4.219 \\	
\addlinespace
\cline{2-7}
\addlinespace
 $\bar{\theta}^{L}$ &{$\left(0.052,0.458\right)$} & {$\left(0.034,0.965\right)$} & {$\left(0.052,0.778\right)$}  &{$\left(0.052,0.225\right)$} & {$\left(0.031,0.745\right)$} & {$\left(0.048,0.495\right)$}\\
\addlinespace
\addlinespace
 $\bar{\theta}^{U}$  &{$\left(0.052,0.458\right)$} & {$\left(0.030,0.950\right)$} & {$\left(0.046,0.754\right)$}  &{$\left(0.052,0.225\right)$} & {$\left(0.026,0.720\right)$} & {$\left(0.041,0.478\right)$}\\
  \addlinespace
\midrule		 

\end{tabular}
\end{threeparttable}
\end{adjustbox}
\newline
\caption{\label{tab3}\textbf{Duality gaps, welfare losses and parameters.} The table reports the estimated lower bounds (rows marked LB) and analytically derived upper bounds (rows marked UB) on the optimal value function. In parentheses, the approximate 95\% confidence intervals are displayed. The rows marked CV and AL document the compensating variations and annual welfare losses, respectively. Additionally, the rows labelled $\bar{\theta}^{L}:=\big(-\widehat{\lambda}^{L,*}_{u,t},\eta^{L,*}\big)$ and $\bar{\theta}^{U}:=\big(-\widehat{\lambda}^{U,*}_{u,t},\eta^{U,*}\big)$ document, respectively, the primal and dual Lagrange multiplier and shadow price. The lower bounds materialise from inserting $\widehat{\theta}^{U,*}_t$ into the primal objective. The results rely on $N=10,000$ paths for $\Delta t_i=0.05$ time-steps, and $X_0=1$. The set $\left(\gamma_d,\gamma_u\right)\in\left\lbrace\left(5,5\right),\left(10,2\right),\left(15,3\right)\right\rbrace$ abbreviated by $\left\lbrace\gamma_1, \gamma_2, \gamma_3\right\rbrace$ comprises the dual CRRA risk profiles.}
\end{table}

Table \ref{tab3} resumes the lower and upper bounds on the optimal value function attributable to the approximating mechanism in Section \ref{sec4.2}, as well as the homologous compensating variations and annual welfare losses in basis points (bp) of the initial endowment; the `naive' and true procedures engender identical outcomes, due to which we also present the optimal parameters.\footnote{By the `naive' approach, we indicate the approximate method that involves the $\widehat{\mathrm{proj}}_{\widehat{\mathcal{A}}_{X_0}'^{\mathcal{P}}}$ projection.} Additionally, we accompany the estimated lower bounds with their agreeable 95\% confidence intervals. Preferably, these confidence bands occlude the upper bound, considering that $\widehat{J^{L}}\left(X_0\right)$ in \eqref{eq:mcobjective1} outlines a point estimate. Table \ref{tab3} in total gives a strong indication of our technique being near-optimal. Notably, the duality gaps vary from $0.001$ for the $T=5$ CRRA agent to $0.002$ in regard to the $T=10, \gamma_3$ dual CRRA agent. Apart from these negligible differences, the confidence intervals contain the upper bounds, alluding in unison to accuracy. Economically quantifying the degree of these gaps' widths, we observe that the breadth of the compensating variations and especially the annual welfare losses ratify this statement on the question of precision. The sizes of the annual welfare losses concretely range between $1.664$ bp for the $T=5$ CRRA individual to $4.600$ bp for the $T=10, \gamma_2$ dual CRRA agent.\footnote{From this perspective, these annual welfare losses construe the mere meaningful touchstones, because of their practical and financial pertinence, along with the dissimilarity in the planning horizons.} If we follow the interpretation of `AL' as a representative investor's annual management fee, the exiguity of these quantities is apparent. Furthermore, the insignificant optimality gaps for the CRRA investor verify the method's legitimacy. On a final note, the negligible dissimilarities between $\widehat{\theta}^{L,*}_{t}$ and $\widehat{\theta}^{U,*}_t$, jointly with identical performance, make the numerical effort redundant.

\begin{figure}[!t]
\begin{center}
\includegraphics[scale=1.13]{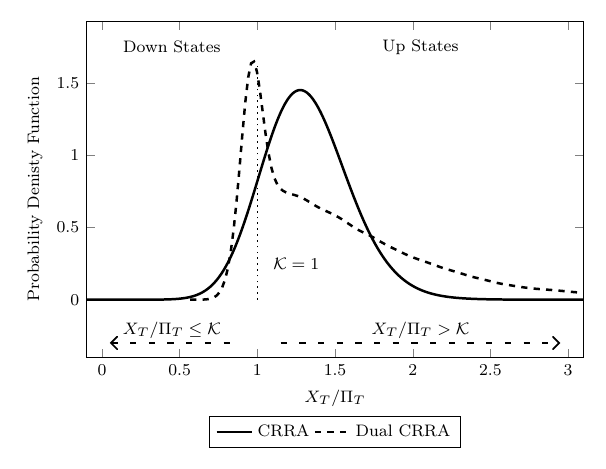}
\end{center}
\caption{\label{f3}\textbf{Probability density of horizon wealth.} The figure displays the probability density functions of the optimal and the `naively' approximated near-optimal terminal real wealth, respectively associated with an isoelastic agent and a dual CRRA agent. The coefficients of risk aversion adhere to $\gamma=5$ and $\left(\gamma_d,\gamma_u\right)=\left(10,2\right)$ for the respective CRRA and dual CRRA investors. The benchmark reads $\mathcal{K}=1$ (vertically dotted line). Both densities emanate from kernel density estimation with a bandwidth equal to 0.15 in application to the wealth equation \eqref{eq:mcobjective1}, based on $T=5$ for $N=10,000$ with $M=100$.}
\end{figure}

We note that the CVs and confidence intervals increase with $T$. However, the ALs remain unaffected, which suggests annual stability in the method's performance. Intuitively, the aberration of the approximate $x_t$ from the true $x_t^{\mathrm{opt}}$ is subjected to an extended stream of $z_{u,t}$ for higher $T$ and thus accumulates more outwardly for expanded $T$. Table \ref{tab3} furthermore reveals that besides the size of the dissimilitude in levels of risk aversion, the individual magnitudes of $\gamma_d$ and $\gamma_u$ affect the approximation's potency. Observe that the risk profile of $\gamma_3$ is quantitatively more widespread than that of $\gamma_2$, whereas the approximation to $x_t^{\mathrm{opt}}$ is moderately more accurate for $\gamma_3$. The overall coefficients of risk aversion comprised within the $\gamma_3$ profile outweighing those within the $\gamma_2$ profile exemplify the necessary nuance to the rule. In particular, Corollaries \ref{col2} and \ref{col3} show that the piecewise variety of the approximate rules abates with boosts in the risk profiles as a whole. Hence, $x_t^{\mathrm{opt}}$ for the $\gamma_3$ agent are less sensitive to economic shocks, making the constant approximation to $\widehat{\lambda}_{u,t}^{\mathrm{opt}}$ more eloquent. Last, in Figure \ref{f3}, we depict the densities for $X_T/\Pi_T$ of the CRRA and dual CRRA agents. The right-skewed nature of the latter shows that the dual CRRA function invokes a rough guarantee near the desired benchmark.\footnote{This is a logical consequence of the results in section \ref{sec5.1}, see \citet{kammaetalnetspar} for further details.} 

\section{Conclusion}\label{sec6}
This paper has proposed and assessed a computationally tractable method based on closed-form expressions allowing us to near-optimally approximate portfolio weights in imperfect environments that populate finite-horizon investors with non-trivial preferences. The procedure evades the unexceptional absence of analytical formulae in such circumstances by projecting the optimal closed-form portfolio composition in an artificial market onto the admissible region while confining the shadow prices to a suitable parametric family. As a consequence, the technique obtains explicit identities for the investments decisions that mimic the truly optimal rules up to the shadow prices. 

Optimality gaps unique to this method emerge as a result of convex duality and serve as concise evaluators of the technique's accuracy. We have accordingly tested the method in a model that accommodates unspanned inflation risk and occupies an agent with ratio dual CRRA preferences. Insignificant gaps and annual welfare losses varying between $1$ and $5$ basis points for different investors suggest precision of the method, cf. section \ref{sec4}. In spite of this simplified illustration, the general mechanism, cf. sections \ref{sec3} and \ref{sec4}, is applicable under investment constraints to a far-reaching class of market models that admittedly encompass non-Markovian dynamics and state-dependent preferences.

\begin{appendices}
\section*{Appendix}
\footnotesize
\section{Auxiliary Results}\label{sec:App:A}
Let us first introduce the notion of a Fr\'{e}chet derivative
\begin{defn}[Fr\'{e}chet Derivative]\label{frechet}
Let $V$ and $W$ be two Banach spaces, and consider a mapping
\begin{equation}
G:V\rightarrow W.
\end{equation}
In addition, introduce a set $U$ that complies with $V\subset U$. Then, $G$ is called Fr\'{e}chet differentiable at $X\in V$ if there exists an $A:U\rightarrow W$ that satisfies in addition to $\|Az\|_{W}\leq M\|z\|_{U}$ the following condition
\begin{equation}\label{eq:frechet}
\lim_{h\rightarrow 0}{\|h\|_{V}}^{-1}{\|G\left(X+h\right)-G\left(X\right)-Ah\|_{W}}=0,
\end{equation}
for some $M\in\RR_+$ and all $z\in U$. The Fr\'{e}chet derivative at hand reads $D_{X}F=A$.
\end{defn}
Second, let us turn to the Malliavin calculus. The next theorem embodies the Clark-Ocone formula
\begin{theorem}[Clark-Ocone]\label{clarkocone}
Suppose that $F:\mathcal{C}_0\left(\left[0,T\right]\right)\rightarrow\RR$ satisfies $F\in\DD^{1,2}$. Then,
\begin{equation}\label{eq:C0}
F=\EE\left[F\right]+\int_0^T\EE\left[\mathcal{D}_tF\cond\eF_t\right]\mathrm{d}W_t.
\end{equation}
\end{theorem}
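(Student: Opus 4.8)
The plan is to identify the integrand produced by the Martingale Representation Theorem with the previsible projection of the Malliavin derivative, exploiting the adjointness of $\mathcal{D}_t$ and the stochastic integral in the spirit of the Skorokhod duality used throughout the paper. Since $F\in\DD^{1,2}\subset L^2\left(\Omega\right)$ is $\eF_T$-measurable, the MRT furnishes a unique previsible $\RR^{d+m}$-valued process $\left\lbrace u_t\right\rbrace_{t\in\left[0,T\right]}$ with $\EE\big[\int_0^T\|u_t\|^2\mathrm{d}t\big]<\infty$ such that $F=\EE\left[F\right]+\int_0^Tu_t^{\top}\mathrm{d}W_t$. The whole task then reduces to showing $u_t=\EE\left[\mathcal{D}_tF\cond\eF_t\right]$ in the $\mathrm{d}t\otimes\PP\text{-a.e.}$ sense; substituting this back into the MRT representation delivers \eqref{eq:C0}.

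The central step is to test the candidate identity against an arbitrary previsible $\left\lbrace g_t\right\rbrace_{t\in\left[0,T\right]}\in L^2\left(\Omega\times\left[0,T\right]\right)^{d+m}$. On one hand, the Skorokhod duality relation (cf. \citet{nualart2006malliavin}), which for previsible integrands collapses to the ordinary It\^{o} integral, gives $\EE\big[F\int_0^Tg_t^{\top}\mathrm{d}W_t\big]=\EE\big[\int_0^T\mathcal{D}_tF\cdot g_t\,\mathrm{d}t\big]$. On the other hand, inserting the MRT representation of $F$ and invoking the It\^{o} isometry yields $\EE\big[F\int_0^Tg_t^{\top}\mathrm{d}W_t\big]=\EE\big[\int_0^Tu_t^{\top}g_t\,\mathrm{d}t\big]$. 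Equating the two and using previsibility of $g_t$ together with the tower property to move the conditioning onto $\mathcal{D}_tF$ produces $\EE\big[\int_0^T\big(\EE\left[\mathcal{D}_tF\cond\eF_t\right]-u_t\big)^{\top}g_t\,\mathrm{d}t\big]=0$. Because $\EE\left[\mathcal{D}_tF\cond\eF_t\right]-u_t$ is itself previsible and square-integrable, taking $g_t$ equal to this difference forces its $L^2\left(\Omega\times\left[0,T\right]\right)^{d+m}$-norm to vanish, whence $u_t=\EE\left[\mathcal{D}_tF\cond\eF_t\right]$ a.e., as required.

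The main obstacle is securing the duality identity $\EE\big[F\,\delta(g)\big]=\EE\big[\langle\mathcal{D}F,g\rangle\big]$ for a genuinely $\DD^{1,2}$ functional $F$ rather than a smooth cylindrical one. I would handle this by invoking closability of $\mathcal{D}$ and the characterisation of the divergence $\delta$ as its Hilbert-space adjoint: the duality is first a direct integration-by-parts computation on the dense class of smooth functionals, then extended to all $F\in\DD^{1,2}$ and all previsible $g$ in the domain of $\delta$ by an $L^2$-continuity argument, using that $\DD^{1,2}$ is by construction the completion of the smooth functionals under the graph norm $\|F\|_{1,2}^2=\|F\|_{L^2}^2+\|\mathcal{D}F\|_{L^2}^2$. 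A secondary, purely bookkeeping point is the multidimensionality of $W_t$: here $\mathcal{D}_tF$ is $\RR^{d+m}$-valued and all integrals against $\mathrm{d}W_t$ are read component-wise, which leaves the argument unchanged. As a sanity check one can verify \eqref{eq:C0} directly on the Dol\'{e}ans--Dade exponentials $\mathcal{E}\big(\int_0^{\cdot}h_s\mathrm{d}W_s\big)$ for deterministic $h$, where $\mathcal{D}_t\mathcal{E}=h_t\mathcal{E}$ and the It\^{o} representation of the exponential martingale make both sides coincide; density of these exponentials in $\DD^{1,2}$ supplies an alternative, approximation-based route to the same conclusion.
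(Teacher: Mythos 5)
Your proof is correct, and it is essentially the paper's own route in the only sense available: the paper does not prove Theorem \ref{clarkocone} at all, but records it as an auxiliary result and defers to Chapter 4.41 of \cite{RogerWilli:2000:DiffusionsMarkovprocesses} and Section 1.3.3 of \citet{nualart2006malliavin} --- and the argument you give (represent $F$ by the MRT as $F=\EE\left[F\right]+\int_0^Tu_t^{\top}\mathrm{d}W_t$, then identify $u_t=\EE\left[\mathcal{D}_tF\cond\eF_t\right]$ by testing against adapted square-integrable $g$ via the duality $\EE\left[F\delta\left(g\right)\right]=\EE\left[\left\langle\mathcal{D}F,g\right\rangle_{L^2\left(\left[0,T\right]\right)}\right]$, which is exactly the paper's Theorem \ref{skorokhod}) is the canonical proof in the cited references. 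Two points deserve to be made explicit rather than implicit. First, the duality step requires $g\in\mathrm{Dom}\left(\delta\right)$; what licenses both your testing step and, crucially, your final choice $g_t=\EE\left[\mathcal{D}_tF\cond\eF_t\right]-u_t$ is the standard fact that every adapted process in $L^2\left(\Omega\times\left[0,T\right]\right)^{d+m}$ lies in $\mathrm{Dom}\left(\delta\right)$ with $\delta\left(g\right)$ coinciding with the It\^{o} integral --- you assert the collapse for the test processes but should invoke it again when you substitute the difference itself. Second, to make sense of the integrand in \eqref{eq:C0} one needs a jointly measurable, previsible version of $t\mapsto\EE\left[\mathcal{D}_tF\cond\eF_t\right]$; existence of such a version is standard (conditional expectation is an $L^2$-contraction applied to $\mathcal{D}F\in L^2\left(\Omega\times\left[0,T\right]\right)^{d+m}$, followed by a predictable-projection argument) but is the one genuinely nontrivial measurability point in the proof and merits a sentence. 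Your handling of the extension of the duality from smooth cylindrical functionals to all of $\DD^{1,2}$ via closability of $\mathcal{D}$ and completion under the graph norm is correct, and your closing verification on the Dol\'{e}ans--Dade exponentials $\mathcal{E}\big(\int_0^{\cdot}h_s\mathrm{d}W_s\big)$ for deterministic $h$, combined with density of their linear span and $\DD^{1,2}$-to-$L^2$ continuity of both sides of \eqref{eq:C0}, indeed furnishes a self-contained alternative derivation of the same result.
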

Specifically regarding martingales, \eqref{eq:C0} furnishes an adequate tool to uniquely identify the integrand in their martingale representation. Ultimately, the next theorem introduces the Skorokhod operator.

\begin{theorem}[Skorokhod Operator]\label{skorokhod}
Consider some $F:\mathcal{C}_0\left(\left[0,T\right]\right)\rightarrow\RR$ such that $F\in\DD^{1,2}$. Then, 
\begin{equation}\label{eq:dualmall}
\EE\left[F\delta\left(h\right)\right]=\EE\left[F\int_0^Th_s\mathrm{d}W_s\right]=\EE\left[\int_0^T\mathcal{D}_sFh_s\mathrm{d}s\right]=\EE\left[\left\langle \mathcal{D}F,h\right\rangle_{L^2\left(\left[0,T\right]\right)}\right],
\end{equation}
for all $F\in\DD^{1,2}$ and $h\in\mathrm{Dom}\left(\delta\right)$, where the divergence operator, or Skorokhod integral lives by
\begin{equation}
\delta\left(Fh\right)=F\int_{0}^Th_s\mathrm{d}W_s-\int_0^T\mathcal{D}_sF,h_s\mathrm{d}s,
\end{equation}
in which its domain reads $\mathrm{Dom}\left(\delta\right):=\left\lbrace h\in L^2\left(\Omega;L^2\left(\left[0,T\right]\right)\right)\mid \ \left|\EE\left[\left\langle \mathcal{D}F,h\right\rangle_{L^2\left(\left[0,T\right]\right)}\right]\right|\leq c\|F\|_{L^2\left(\Omega\right)}\right\rbrace$.

\end{theorem}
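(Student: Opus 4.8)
The plan is to construct the divergence operator $\delta$ as the $L^2$-adjoint of the Malliavin derivative $\mathcal{D}$ and to read off \eqref{eq:dualmall} as the defining duality, after first establishing the integration-by-parts identity on a dense class of functionals. First I would fix the class $\mathcal{S}$ of smooth cylindrical functionals $F=f\big(W(h_1),\dots,W(h_n)\big)$, with $f\in C_p^\infty(\RR^n)$ (smooth with polynomially bounded derivatives) and $W(h_i):=\int_0^T h_i(s)\,\mathrm{d}W_s$ for $h_i\in L^2([0,T])$. For such $F$ the derivative is explicit, $\mathcal{D}_tF=\sum_{i=1}^n\partial_i f\big(W(h_1),\dots,W(h_n)\big)h_i(t)$, and $\mathcal{S}$ is dense in $\DD^{1,2}$ for the graph norm.

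Second, the cornerstone is the Gaussian integration-by-parts formula. Given a deterministic $h\in L^2([0,T])$ and $F\in\mathcal{S}$, I would complete $\{h_1,\dots,h_n,h\}$ to an orthonormal family so that $\big(W(h_1),\dots\big)$ is a standard Gaussian vector; the elementary identity $\EE[\partial_i\phi(X)]=\EE[\phi(X)X_i]$ for a standard Gaussian $X$ then gives $\EE\big[\langle\mathcal{D}F,h\rangle_{L^2([0,T])}\big]=\EE\big[F\,W(h)\big]$. This is exactly \eqref{eq:dualmall} for deterministic $h$, with $\delta(h)=W(h)=\int_0^T h_s\,\mathrm{d}W_s$.

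Third, I would use this identity to show $\mathcal{D}$ is closable from $L^2(\Omega)$ into $L^2(\Omega;L^2([0,T]))$, legitimising $\DD^{1,2}$ as the closure of $\mathcal{S}$, and then define $\delta$ on $\mathrm{Dom}(\delta)$ exactly as in the statement: those $h\in L^2(\Omega;L^2([0,T]))$ for which $F\mapsto\EE[\langle\mathcal{D}F,h\rangle_{L^2}]$ is $L^2(\Omega)$-continuous. The Riesz--Fr\'{e}chet theorem then yields a unique $\delta(h)\in L^2(\Omega)$ with $\EE[F\delta(h)]=\EE[\langle\mathcal{D}F,h\rangle_{L^2}]$ for all $F\in\DD^{1,2}$, which is the outer equality of \eqref{eq:dualmall}; extending from $\mathcal{S}$ to $\DD^{1,2}$ is a density-plus-continuity argument. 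To secure the middle member I would identify $\delta$ with the It\^{o} integral on adapted integrands: for a simple adapted $h=\sum_j F_j\Ind_{(t_j,t_{j+1}]}$ with $\eF_{t_j}$-measurable $F_j\in\mathcal{S}$, the product rule $\delta(F_j\Ind_{(t_j,t_{j+1}]})=F_j\big(W_{t_{j+1}}-W_{t_j}\big)-\int_{t_j}^{t_{j+1}}\mathcal{D}_sF_j\,\mathrm{d}s$ together with $\mathcal{D}_sF_j=0$ for $s>t_j$ collapses $\delta(h)$ to $\sum_j F_j\big(W_{t_{j+1}}-W_{t_j}\big)$, i.e. the It\^{o} sum; density of simple adapted processes then gives $\delta(h)=\int_0^T h_s\,\mathrm{d}W_s$ for all adapted $h$, and the Clark--Ocone representation of Theorem \ref{clarkocone} provides the consistency check against $F$.

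The main obstacle I anticipate lies in the limiting extension from $\mathcal{S}$ to $\DD^{1,2}$: it rests on the closability of $\mathcal{D}$, which in turn is proved from the very integration-by-parts identity of the second step, so the logical order must be arranged so that closability precedes the adjoint definition of $\delta$. One must also verify that both sides of \eqref{eq:dualmall} are continuous under graph-norm convergence $F_k\to F$ in $\DD^{1,2}$, which follows from Cauchy--Schwarz once $\mathcal{D}F_k\to\mathcal{D}F$ in $L^2(\Omega;L^2([0,T]))$. The secondary delicate point is confirming that adaptedness alone places $h$ in $\mathrm{Dom}(\delta)$ and that the anticipating correction term $\int_{t_j}^{t_{j+1}}\mathcal{D}_sF_j\,\mathrm{d}s$ indeed vanishes, so that no Skorokhod--It\^{o} discrepancy survives.
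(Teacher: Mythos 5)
Your proposal is correct, and it coincides with the proof the paper itself defers to: the appendix supplies no argument but cites Section 1.3 of Nualart (2006) (and Rogers--Williams), where $\delta$ is constructed exactly as you do --- Gaussian integration by parts on smooth cylindrical functionals, closability of $\mathcal{D}$, the adjoint/Riesz--Fr\'{e}chet definition of $\delta$ on $\mathrm{Dom}\left(\delta\right)$, and identification with the It\^{o} integral on adapted integrands via the product rule together with $\mathcal{D}_sF_j=0$ for $s>t_j$. Note only that the middle equality $\delta\left(h\right)=\int_0^Th_s\,\mathrm{d}W_s$ is valid just for adapted $h$, a restriction the paper's statement leaves implicit and which your argument handles correctly.
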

Observe that these results hold for functionals on the Wiener space as described in the main text. For proofs, consider Chapter 4.41 of \cite{RogerWilli:2000:DiffusionsMarkovprocesses} or section 1.3.3 of \citet{nualart2006malliavin}.
\end{appendices}
\bibliographystyle{apalike}
{\footnotesize
\bibliography{APbib}}
\end{document}